\keywords{graph transformation, moving, restriction, derivation, adhesive categories}
\def\figurecaption#1#2{\noindent\hangindent 40pt
                       \hbox to 36pt {\small\sl #1 \hfil}
                       \ignorespaces {\small #2}}
\long\def\@makecaption#1#2{
  \vskip 10pt 
  \settowidth{\@tempdima}{#2}
  \ifdim\@tempdima>0pt
       \setbox\@tempboxa\hbox{#1: #2}
     \else
       \setbox\@tempboxa\hbox{#1 #2}
   \fi
   \ifdim \wd\@tempboxa >\hsize               
       \begin{list}{#1:}{
       \settowidth{\labelwidth}{#1:}
       \setlength{\leftmargin}{\labelwidth}
       \addtolength{\leftmargin}{\labelsep}
        }\item #2 \end{list}\par   
     \else                                    
       \hbox to\hsize{\hfil\box\@tempboxa\hfil}  
   \fi}
\tikzstyle{every picture} = [every state/.style = {circle,draw,minimum size = 0.15cm, inner sep = 1pt}, every edge/.style = {draw,inner sep = 2pt},on grid,inner sep = 0.03cm, every loop/.style={}, node distance = 0.4cm]
\tikzset{
    wavy/.style = {->,very thin,double distance = 0,shorten <=.2pt,>=stealth',snake=snake, segment amplitude=1pt, segment length=8pt, line after snake=4pt}
}
\newcommand{\cat}[1]{\mathbf{#1}}
\newcommand{\dder}{\mathop{\Longrightarrow}\limits}
\newcommand{\N}{\mathbb N}
\newcommand{\iso}{\cong}
\newcommand{\id}{\mathit{id}}
\newcommand{\G}{\mathcal{G}}
\newcommand{\base}{\noindent Base:\xspace }
\newcommand{\step}{\noindent Step:\xspace }
\newcommand{\conflux}{\mathit{conflux}}
\newcommand{\interchange}{\mathit{interchange}}
\newcommand{\move}{\mathit{move}}
\newcommand{\evom}{\mathit{evom}}
\newcommand{\acc}{\mathit{acc}}
\newcommand{\ACC}{\mathit{ACC}}
\newcommand{\spine}{\mathit{spine}}
\newcommand{\head}{\mathit{head}}
\newcommand{\tail}{\mathit{tail}}
\newcommand{\colr}{\mathit{color}}
\newcommand{\dual}{\mathit{dual}}
\newcommand{\addloop}{\mathit{add\_loop}}
\newcommand{\addcolor}{\mathit{add\_color}}
\newcommand{\choosecolr}{\mathit{choose\_color}}
\newcommand{\doubleedge}{\mathit{double\_edge}}
\newcommand{\addedge}{\mathit{add\_edge}}
\newcommand{\removepair}{\mathit{remove\_pair}}
\theoremstyle{plain}\newtheorem{construction}[defi]{Construction} 
\begin{document}

\title[Moving a Derivation Along a Derivation Preserves the Spine]{Moving a Derivation Along a Derivation\texorpdfstring{\\}{} Preserves the Spine in Adhesive Categories}

\author[H.-J.~Kreowski]{Hans-J\"org Kreowski\lmcsorcid{0000-0003-0578-5882}}[a]
\author[A.~Lye]{Aaron Lye\lmcsorcid{0000-0003-2987-8661}}[b]
\author[A.~Windhorst]{Aljoscha Windhorst\lmcsorcid{009-0007-2082-9552}}[a,b]

\address{University of Bremen, Department of Computer Science, P.O.Box 33 04 40, 28334 Bremen, Germany}	
\email{$\{$kreo,windhorst$\}$@uni-bremen.de}  

\address{Institute for the Protection of Maritime Infrastructures, German Aerospace Center (DLR), Fischkai 1, 27572, Bremerhaven, Germany}	
\email{$\{$aaron.lye,aljoscha.windhorst$\}$@dlr.de}  





\begin{abstract}
  \noindent
  In this paper, we investigate the relationship between two elementary operations on
derivations in the framework of graph transformation based on adhesive categories:
moving a derivation along a derivation based on parallel and sequential independence on one hand and restriction of a derivation with respect to a monomorphism into the start object on the other hand.
Intuitively, a restriction clips off parts of the start object that are never matched by a rule application throughout the derivation on the other hand.
As main result, it is shown that moving a derivation preserves its spine being the minimal restriction.

\end{abstract}

\maketitle

\section{Introduction}
\label{sec:introduction}
A major part of the study of graph transformation concerns the fundamental properties of derivations.
In this paper, we contribute to this study by relating two operations on derivations both based on long-known concepts: \emph{moving} and \emph{restriction}.
Moreover, the investigation is conducted using adhesive categories.

The first operation, introduced in Section~\ref{sec:moving-derivations}, moves derivations along derivations.
We consider two variants: \emph{forward moving} and \emph{backward moving}.
It is well-known and often considered that two applications of rules to the same graph
or to an object in an adhesive category respectively
can be applied one after the other with the same result if they are parallel independent. 
As an operation, this strong confluence may be called \emph{conflux}. 
Analogously, two successive rule applications can be interchanged if they are sequentially independent, called \emph{interchange} as an operation. 
The situation is depicted in Figure~\ref{fig:conflux-interchange}.

\begin{figure}[h!]
\[
\begin{tikzcd}
G \arrow[r, Rightarrow, "\overline{r}" below] \arrow[d, Rightarrow, "r" right]
& \overline{H} \\
H
& 
\end{tikzcd}
~~~~~~
\begin{tikzpicture}[baseline=0pt]
\node (left) {};
\node (right) [right=30pt of left] {};
\path[draw,wavy] (left) -- node[below = .1]{\scriptsize $\conflux$} (right);
\end{tikzpicture}
~~~~~~
\begin{tikzcd}
G \arrow[r, Rightarrow, "\overline{r}" below] \arrow[d, Rightarrow, "r" right]
& \overline{H} \arrow[d, Rightarrow, "r" right] \\
H \arrow[r, Rightarrow, "\overline{r}" below]
& X
\end{tikzcd}
~~~~~~
\begin{tikzpicture}[baseline=0pt]
\node (left) {};
\node (right) [right=30pt of left] {};
\path[draw,wavy] (right) -- node[below = .1]{\scriptsize $\interchange$} (left);
\end{tikzpicture}
~~~~~~
\begin{tikzcd}
G \arrow[r, Rightarrow, "\overline{r}" below]
& \overline{H} \arrow[d, Rightarrow, "r" right] \\
& X
\end{tikzcd}
\]
\caption{\emph{conflux} and \emph{interchange}}
\label{fig:conflux-interchange}
\end{figure}
In the case of graphs, the confluence property of parallel independent rule applications was announced by Rosen in~\cite{Rosen:75} and proved by Ehrig and Rosen in~\cite{Ehrig-Rosen:76}; the interchangeability of sequentially independent rule applications was introduced and studied by Ehrig and Kreowski in~\cite{Ehrig-Kreowski:76} (see, e.g., Ehrig~\cite{Ehrig:79}, Corradini et al.~\cite{Corradini-Ehrig-Loewe.ea:97} and Ehrig et al.~\cite{Ehrig-Ehrig-Prange.ea:06} for comprehensive surveys).
Both results are also known as local Church-Rosser theorems.
They are considered in the framework of graph transformation based on adhesive categories in Ehrig et al.~\cite{Ehrig-Ehrig-Prange.ea:06,Ehrig-Ermel-Golas-Hermann:15}.

It is not difficult to show that the operations can be generalized to derivations. 
If $P$ and $\overline{P}$ are sets of rules such that each two applications of a $P$-rule and a $\overline{P}$-rule to the same object are parallel independent, then derivations $d=(G \dder^*_P H)$ and $\overline{d} = (G \dder^*_{\overline{P}} \overline{H})$ induce derivations $d' = (\overline{H} \dder^*_P X)$ and $\overline{d'} = (H \dder^*_{\overline{P}} X)$ for some object $X$ by iterating $\conflux$ as long as possible.
Analogously, two successive derivations $\overline{d} = (G \dder^*_{\overline{P}} \overline{H})$ and $d' = (\overline{H} \dder^*_P X)$ induce derivations $d = (G \dder^*_P H)$ and $\overline{d'} = (H \dder^*_{\overline{P}} X)$ for some object $H$ provided that each successive application of a $\overline{P}$-rule and a $P$-rule are sequentially independent. 
The situation is depicted in Figure~\ref{fig:moving}.
We interpret these two operations on derivations as moving $d$ forward along $\overline{d}$ (and symmetrically $\overline{d}$ along $d$) on one hand and as moving $d'$ backward along~$\overline{d}$ on the other hand.

\begin{figure}[h!]
\[
\begin{tikzcd}
G \arrow[r, Rightarrow, "*" above, "\overline{P}" below] \arrow[d, Rightarrow, "*" left, "P" right]
& \overline{H} \\
H
& 
\end{tikzcd}
~~~~~~
\begin{tikzpicture}[baseline=0pt]
\node (left) {};
\node (right) [right=30pt of left] {};
\path[draw,wavy] (left) -- node[below = .1]{\scriptsize forward} node[below = .35]{\scriptsize moving} (right);
\end{tikzpicture}
~~~~~~
\begin{tikzcd}
G \arrow[r, Rightarrow, "*" above, "\overline{P}" below] \arrow[d, Rightarrow, "*" left, "P" right]
& \overline{H} \arrow[d, Rightarrow, "*" left, "P" right] \\
H \arrow[r, Rightarrow, "*" above, "\overline{P}" below]
& X
\end{tikzcd}
~~~~~~
\begin{tikzpicture}[baseline=0pt]
\node (left) {};
\node (right) [right=30pt of left] {};
\path[draw,wavy] (right) -- node[below = .1]{\scriptsize backward} node[below = .35]{\scriptsize moving} (left);
\end{tikzpicture}
~~~~~~
\begin{tikzcd}
G \arrow[r, Rightarrow, "*" above, "\overline{P}" below]
& \overline{H} \arrow[d, Rightarrow, "*" left, "P" right] \\
& X
\end{tikzcd}
\]
\caption{Forward and backward moving}
\label{fig:moving}
\end{figure}

The second operation is introduced in Section~\ref{sec:accessed-part-and-spine}.
It restricts a derivation to a monomorphism of its start object provided that the \emph{accessed part} of the derivation
factors through the given monomorphism.
In the category of graphs, the accessed part consists of all vertices and edges of the start graph that are accessed by some of the matching morphisms of the left-hand sides of applied rules within the derivation. 
A \emph{restriction} clips off parts of the complement of the accessed part, i.e., an invariant part of all objects throughout the derivation. 
As the accessed part is a monomorphism factoring through itself, the restriction to it, which we call \emph{spine}, is minimal.
In case of graphs it consists only of vertices and edges essential for the derivation.
If $d'$ is a restriction of a derivation $d$, then $d$ can be considered as an extension of $d'$, and $d'$ is embedded into a larger context.
Extension and restriction are investigated in some detail by Ehrig~\cite{Ehrig:77}, Kreowski~\cite{Kreowski:78,Kreowski:79}, and Plump~\cite{Plump:99a}.

Finally, we state and prove the main result of this paper in Section~\ref{sec:moving-preserves-the-spine}: 
Forward moving and backward moving preserve the spine of the moved derivation. 
This investigation is motivated by our recently proposed graph-transformational approach for proving the correctness of reductions between NP-problems in~\cite{Kreowski-Kuske-Lye-Windhorst:22}. 
Such proofs require certain derivations over a set of rules $P$ to be constructed from certain derivations over a set of rules $P'$ where, in addition, the initial graphs of both are connected by a derivation over a set of rules $\overline{P}$.
We have provided a toolbox of operations on derivations that allow such constructions where one essential operation is moving.
Moreover, in some sample correctness proofs the phenomenon of the main result could be observed and used favorably.
Our running example stems from this context.

In \cite{Kreowski-Kuske-Lye-Windhorst:23}, we have shown that moving preserves the spine and provided all
the prerequisites for the category of directed edge-labeled graphs. To
define the accessed part of a derivation and the restriction of a
derivation and to prove the needed results for restrictions and the main
theorem, we have made heavy use of set-theoretical constructs like
subgraphs, intersections, unions and the constructions of pushout
complements and derived graphs by employing the difference of sets.
As in this paper, the whole consideration is set into the more general
framework of adhesive categories, it is necessary to replace the
set-theoretic features by categorical concepts. More explicitly,
subgraphs are replaced by monomorphisms, intersections by pullbacks of
monomorphisms and unions by pushout of pullbacks of monomorphisms. As
there is no categorical counterpart to the set difference, there is no
direct replacement. But certain pushout and pullback properties allow to
circumvent the problem.

Our main result for the category of graphs has been very helpful to prove the
correctness of various reductions between NP-problems for graphs. Many of
these problems like the problems of finding colorings, cliques, particular
paths and cycles, vertex covers, spanning trees, etc. make also sense for
hypergraphs and other graph-like structures. Hence, we hope that the
preservation of spines by moving based on adhesive categories will turn out to
be as useful in these cases as in the case of graphs.

Restrictions of derivations over adhesive categories are only mentioned as
inverse to extensions in Ehrig et al.~\cite{Ehrig-Ehrig-Prange.ea:06,Ehrig-Ermel-Golas-Hermann:15}. In Ehrig et al.~\cite{Ehrig-Ehrig-Prange.ea:06}, only the
comparatively simple single-step restriction is constructed explicitly. The
early investigations by Kreowski~\cite{Kreowski:78,Kreowski:79} and Plump~\cite{Plump:99a} provide constructions
of restrictions for graphs and hypergraphs respectively and some properties
concerning the iteration of restrictions and their relation to extensions. For
the category of graphs, a more detailed investigation of restrictions can be
found in our conference paper~\cite{Kreowski-Kuske-Lye-Windhorst:23}. In the latter four references, the
constructions are not fully categorical as specific properties of graphs and
hypergraphs are used.

\section{Preliminaries}
\label{sec:preliminaries}
In this section, the categorical prerequisites are provided.
For the well-known categorical notions including
initial objects, pullbacks, and pushouts 
cf., e.g.,
Adamek et al.~\cite{Adamek-Herrlich-Strecker:09} and
Ehrig et al.~\cite{Ehrig-Ehrig-Prange.ea:06,Ehrig-Ermel-Golas-Hermann:15}.
The concepts and properties
of adhesive categories as far as used in this paper are recalled
explicitly. Moreover, we recall the basics of the category of graphs as
a typical example of an adhesive category because our running examples
are set into this category.

\subsection{Graphs}
There are various categories that can be derived from $\cat{Sets}$ in such a way that they inherit the concepts from $\cat{Sets}$.
The category $\cat{Graphs}$ of (directed edge-labeled) \emph{graphs} over some alphabet $\Sigma$ is a well-known example.
Its objects are graphs, its morphisms are graph morphisms defined as follows:
Let $\Sigma$ be a set of labels with $* \in \Sigma$.
A (directed edge-labeled) \emph{graph} over $\Sigma$ is a system $G = (V,E,s,t,l)$ where $V$ is a finite  set of \emph{vertices}, $E$ is a finite set of \emph{edges}, $s,t\colon E\to V$ and $l\colon E\to\Sigma$ are mappings assigning a \emph{source}, a \emph{target} and a \emph{label} to every edge $e\in E$.
An edge $e$ with $s(e) = t(e)$ is  called a \emph{loop}.
An edge with label $*$ is called an \emph{unlabeled edge}.
In drawings, the label $*$ is omitted. 
\emph{Undirected edges} are pairs of edges between the same vertices in opposite directions.
The empty graph is denoted by $\emptyset$.
The class of all directed edge-labeled graphs is denoted by $\G_\Sigma$.
For graphs $G,H \in \G_{\Sigma}$, a \emph{graph morphism} $g\colon G\to H$ is a pair of mappings $g_V\colon V_G\to V_H$ and $g_E\colon E_G\to E_H$ that are structure-preserving, i.e.,
$g_V(s_G(e)) = s_H(g_E(e))$,
$g_V(t_G(e)) = t_H(g_E(e))$,
and $ l_G(e) = l_H(g_E(e))$ for all $e \in E_G$.
If the mappings $g_V$ and $g_E$ are bijective, then $G$ and $H$ are \emph{isomorphic}. 
If they are inclusions, then $G$ is called a \emph{subgraph} of $H,$ denoted by $G \subseteq H$.
For a graph morphism $g \colon G\to H$, the image of $G$ in $H$ is called a \emph{match} of $G$ in $H$, i.e., the match of $G$ with respect to the morphism $g$ is the subgraph $g(G) \subseteq H$.

Many graph-like structures behave like the category of graphs.
Capturing the required properties for rule-based transformation leads to the definition of adhesive categories.
We capture the needed categorical notions and facts in the following subsection.

\subsection{Categorical Notions and Facts}
\label{subsec:categorical-notions-and-facts}
We assume that the underlying category $\cat{C}$ is adhesive.
In addition, we require an epi-mono factorization and a strict initial object.
We are using various properties of such a category which are listed as facts.

\begin{asm}\hfill
\begin{enumerate}
\item
$\cat{C}$ is adhesive, i.e.,
\begin{enumerate}
\item
  $\cat{C}$ has all pullbacks,
\item
  $\cat{C}$ has pushouts along monomorphisms (meaning at least one of the two morphisms in the pushout span is a monomorphism), and
\item
pushouts along monomorphisms are Van Kampen squares, where 
a Van Kampen squares is a pushout of $A \xleftarrow[]{} C \xrightarrow[]{} B$ satisfying the following condition:
For any commutative cube for which the pushout square of $A \xleftarrow[]{} C \xrightarrow[]{} B$ forms the bottom face and the back faces are pullbacks, the front faces are pullbacks if and only if the top face is a pushout.
\[
\begin{tikzcd}[row sep=tiny,column sep=tiny]
& C'\arrow[dl]\arrow[rr]\arrow[dd] & & B'\arrow[dl]\arrow[dd] \\
A'\arrow[rr,crossing over]\arrow[dd] & & D' \\
& C\arrow[dl]\arrow[rr] &  & B\arrow[dl] \\
A\arrow[rr] & & D\arrow[from=uu,crossing over]
\end{tikzcd}
\]
  \end{enumerate}

\item
$\cat{C}$ has a strict initial object $\emptyset$, i.e., an initial object $\emptyset$ with the property that every morphism in $\cat{C}$ with codomain $\emptyset$ is an isomorphism.

\item
Every morphism in $\cat{C}$ has an epi-mono factorization, i.e., for every morphism $f$ there is a factorization $f = m \circ e$ where $e$ is an epimorphism and $m$ is a monomorphism.

\end{enumerate}
\end{asm}

Pushouts and pullbacks have useful properties.
\begin{fact}\hfill
\begin{enumerate}
\item
Monomorphisms are pullback stable, i.e.,
wrt the following diagram:
\[
\begin{tikzcd}
& A \arrow[r, "a"] \arrow[d, ""]
& B \arrow[d, ""]  \\
& C \arrow[r, "d"] 
& D 
\end{tikzcd}
\]
if the diagram is a pullback and $d$ is a monomorphism, then $a$ is a monomorphism.

\item
Isomorphisms are pullback and pushout stable.
\end{enumerate}
\end{fact}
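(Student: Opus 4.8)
The plan is to prove both parts by elementary diagram chases that use only the universal properties of pullbacks and pushouts; neither adhesivity nor the extra assumptions (strict initial object, epi-mono factorization) are needed. Throughout part~1 I would name the two unlabeled vertical morphisms of the square by $p \colon A \to C$ and $q \colon B \to D$, so that commutativity reads $q \circ a = d \circ p$ and the pullback property states that $A$ together with the projections $a$ and $p$ is universal among cones over $B \xrightarrow{q} D \xleftarrow{d} C$.

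For part~1, I would verify the monomorphism condition for $a$ directly. Given $x, y \colon Z \to A$ with $a \circ x = a \circ y$, postcomposing with $q$ and using commutativity yields $d \circ (p \circ x) = d \circ (p \circ y)$; since $d$ is a monomorphism this gives $p \circ x = p \circ y$. Hence $x$ and $y$ induce the very same cone $(a \circ x,\, p \circ x)$ over $q$ and $d$ (the cone condition $d \circ (p\circ x) = q \circ (a \circ x)$ is just commutativity), and the uniqueness clause of the pullback's universal property forces $x = y$. Therefore $a$ is a monomorphism.

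For the pullback case of part~2, I would assume $d$ is an isomorphism and construct a two-sided inverse of $a$. The pair $(\id_B,\, d^{-1} \circ q)$ from $B$ is a cone, since $d \circ (d^{-1} \circ q) = q = q \circ \id_B$, so by the universal property there is a unique $k \colon B \to A$ with $a \circ k = \id_B$ and $p \circ k = d^{-1} \circ q$. To obtain $k \circ a = \id_A$, I would observe that both $k \circ a$ and $\id_A$ are self-maps of $A$ realizing the projection cone $(a, p)$: indeed $a \circ (k \circ a) = a = a \circ \id_A$ and $p \circ (k \circ a) = d^{-1} \circ q \circ a = d^{-1} \circ d \circ p = p = p \circ \id_A$, so uniqueness yields $k \circ a = \id_A$ and $a$ is an isomorphism. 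The pushout case then follows by the formally dual argument: a pushout in $\cat{C}$ is a pullback in $\cat{C}^{\mathrm{op}}$, and isomorphism-stability is a self-dual property, so the identical chase runs with every arrow reversed.

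I do not expect a genuine obstacle, as both statements hold in an arbitrary category and the proofs are routine. The only points demanding care are checking that the chosen pairs of composites really do form a cone (respectively a co-cone), so that the universal property is applicable, and invoking its uniqueness part at precisely the right place to conclude that two morphisms coincide. In the isomorphism case the second identity $k \circ a = \id_A$ is the step most easily mishandled, since one must compare \emph{both} components of $k \circ a$ and $\id_A$ against the projections before appealing to uniqueness.
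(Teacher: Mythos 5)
Your proof is correct. Note that the paper itself offers no proof of this statement: it is listed among the background ``Facts'' in Section~2 (most of which carry citations to Lack--Soboci\'nski or Ehrig et al., though this particular one does not), so there is no in-paper argument to compare yours against. Your two diagram chases are the standard ones: part~1 correctly reduces $a\circ x = a\circ y$ to equality of both cone components via monicity of $d$ and then invokes the uniqueness clause of the pullback; part~2 correctly builds the inverse $k$ from the cone $(\id_B,\, d^{-1}\circ q)$ and verifies \emph{both} projection equations before using uniqueness to get $k \circ a = \id_A$, with the pushout half following by duality. As you observe, the argument works in an arbitrary category and uses none of the paper's standing assumptions (adhesivity, strict initial object, epi-mono factorization), which is exactly why the paper can safely state it as a fact rather than prove it.
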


Moreover, pushouts and pullbacks have nice composition and decomposition properties.

Consider a commuting diagram of the following shape:
\[
\begin{tikzcd}
A \arrow[r] \arrow[d]
& B \arrow[r] \arrow[d]
& C \arrow[d] \\
D \arrow[r]
& E \arrow[r]
& F
\end{tikzcd}
\]
\begin{fact}
\label{fact:composition-lemma-pb}
If the right square is a pullback, then the outer rectangle is a pullback if and only if the left square is a pullback (cf.~\cite[Fact 2.27]{Ehrig-Ehrig-Prange.ea:06}).
\end{fact}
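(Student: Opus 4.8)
The plan is to prove this pullback pasting lemma by a direct universal-property argument in both directions, keeping the pullback property of the right square as the standing hypothesis throughout. First I would fix names for the morphisms: write the top row as $A \xrightarrow{f} B \xrightarrow{g} C$, the bottom row as $D \xrightarrow{h} E \xrightarrow{k} F$, and the three vertical morphisms as $\alpha\colon A \to D$, $\beta\colon B \to E$, $\gamma\colon C \to F$, so that commutativity of the two squares reads $\beta f = h\alpha$ and $\gamma g = k\beta$, and the right square exhibits $B$ as the pullback of $C \xrightarrow{\gamma} F \xleftarrow{k} E$ with projections $g$ and $\beta$.

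For the direction where I assume the left square is a pullback and must show the outer rectangle is one, I would start from an arbitrary competing cone over $C \xrightarrow{\gamma} F \xleftarrow{kh} D$, i.e.\ morphisms $x\colon T \to C$ and $y\colon T \to D$ with $\gamma x = k h y$. The factorization is then obtained in two stages: since $\gamma x = k(hy)$, the right pullback yields a unique $z\colon T \to B$ with $g z = x$ and $\beta z = h y$; the equation $\beta z = h y$ says that $z$ and $y$ form a cone over $B \xrightarrow{\beta} E \xleftarrow{h} D$, so the left pullback yields a unique $t\colon T \to A$ with $f t = z$ and $\alpha t = y$. One then checks $(gf)t = x$ and $\alpha t = y$, and uniqueness propagates back through the two pullbacks, giving the required mediating morphism into $A$.

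The converse direction is the delicate one. Assuming the outer rectangle is a pullback, I take a cone $x\colon T \to B$, $y\colon T \to D$ over the left cospan with $\beta x = h y$, and feed $gx\colon T \to C$ together with $y$ into the outer pullback; the needed compatibility $\gamma(gx) = (kh)y$ follows by rewriting with the two square identities. This produces a unique $t\colon T \to A$ with $(gf)t = gx$ and $\alpha t = y$. The one remaining gap---and the main obstacle---is to upgrade $(gf)t = gx$ to the sharper equation $f t = x$. Here I would use that the projections $g$ and $\beta$ of the right pullback are jointly monic: comparing $g$-components gives $g(ft) = (gf)t = gx$, and comparing $\beta$-components gives $\beta(ft) = (\beta f)t = (h\alpha)t = h(\alpha t) = hy = \beta x$, whence $ft = x$. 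Finally, uniqueness of $t$ as a morphism into the left pullback follows from its uniqueness in the outer pullback, completing the argument.
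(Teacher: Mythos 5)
Your proof is correct and complete. Note that the paper itself offers no proof of this statement: it is recorded as a known fact with a citation to Fact~2.27 of Ehrig et al., so there is no internal argument to compare against. What you have written is the standard universal-property proof of the pullback pasting lemma, and you have handled the one genuinely delicate point correctly: in the converse direction, the outer pullback only gives you $(gf)t = gx$, and upgrading this to $ft = x$ requires the joint monicity of the projections $g$ and $\beta$ of the right pullback square, which you invoke and verify exactly as needed. The uniqueness arguments in both directions are also sound. In short, you have supplied a self-contained proof of a result the paper merely cites.
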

\begin{fact}
\label{fact:composition-lemma-po}
If the left square is a pushout, then the outer rectangle is a pushout if and only if the right square is a pushout (cf.~\cite[Fact 2.20]{Ehrig-Ehrig-Prange.ea:06}).
\end{fact}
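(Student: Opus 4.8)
The plan is to prove both implications of the biconditional directly from the universal property of pushouts, dualising the pullback argument behind Fact~\ref{fact:composition-lemma-pb}. First I would fix names for the arrows of the diagram: write the top row as $A \xrightarrow{f} B \xrightarrow{g} C$, the bottom row as $D \xrightarrow{f'} E \xrightarrow{g'} F$, and the three verticals as $\alpha \colon A \to D$, $\beta \colon B \to E$, $\gamma \colon C \to F$, so that the left square commutes via $f' \circ \alpha = \beta \circ f$ and the right square via $g' \circ \beta = \gamma \circ g$. Throughout, the left square is assumed to be a pushout, with coprojections $f'$ and $\beta$.

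For the composition direction (right square a pushout $\Rightarrow$ outer rectangle a pushout) I would take an arbitrary test cocone $x \colon C \to X$, $y \colon D \to X$ on the outer span with $x \circ g \circ f = y \circ \alpha$, and construct the mediating morphism in two stages. First, feed $x \circ g \colon B \to X$ and $y \colon D \to X$ into the left pushout — they are compatible since $(x\circ g)\circ f = y\circ\alpha$ — to obtain a unique $w \colon E \to X$ with $w\circ\beta = x\circ g$ and $w\circ f' = y$. Then feed $x$ and $w$ into the right pushout to obtain the desired $\mu \colon F \to X$ with $\mu\circ\gamma = x$ and $\mu\circ g' = w$, whence $\mu\circ(g'\circ f') = w\circ f' = y$. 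Uniqueness follows by chaining the two uniqueness clauses. This half is routine and uses both the left and the right square.

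The decomposition direction (outer rectangle a pushout $\Rightarrow$ right square a pushout) is where the real work lies, and here the left-square pushout is indispensable. Given a test cocone $x\colon C\to X$, $w\colon E\to X$ with $x\circ g = w\circ\beta$, I would apply the outer pushout to the pair $x$ and $w\circ f'$ — compatibility reduces to $\beta\circ f = f'\circ\alpha$, the commutativity of the left square — to get a candidate $\mu\colon F\to X$ with $\mu\circ\gamma = x$. I expect the main obstacle to be exactly here: the outer pushout only delivers $\mu\circ g'\circ f' = w\circ f'$, whereas I need the stronger $\mu\circ g' = w$, and since $f'$ need not be epic this equation does not follow by cancellation.

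I plan to resolve this precisely where the hypothesis bites: to compare $\mu\circ g'$ and $w$ as morphisms out of $E$, I would use that $f'$ and $\beta$ are jointly epic, being the coprojections of the left pushout. Precomposition with $f'$ gives $\mu\circ g'\circ f' = w\circ f'$ (already established), while precomposition with $\beta$ gives $\mu\circ g'\circ\beta = \mu\circ\gamma\circ g = x\circ g = w\circ\beta$, using the right-square commutativity $g'\circ\beta = \gamma\circ g$ together with the cocone condition $x\circ g = w\circ\beta$. Joint epimorphism then forces $\mu\circ g' = w$, and uniqueness of $\mu$ is inherited from the outer pushout (any competitor satisfies the same two outer-cocone equations). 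Apart from this joint-epi step, the whole argument is routine diagram chasing through the two universal properties.
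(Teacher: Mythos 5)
Your proposal is correct and complete: both directions are the standard universal-property chase, and you correctly identify the one non-routine point in the decomposition direction (that $f'$ need not be epic) and resolve it with the joint epicness of the pushout coprojections $f'$ and $\beta$, which is exactly where the hypothesis on the left square is needed. The paper itself gives no proof of this fact, citing \cite[Fact 2.20]{Ehrig-Ehrig-Prange.ea:06} instead; your argument is the standard proof underlying that cited result, so there is nothing to reconcile.
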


Adhesive categories have further nice properties:
\begin{fact}
Monomorphisms are stable under pushouts \cite[Lemma 12]{Lack-Sobocinski:04}.
\end{fact}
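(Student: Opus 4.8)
The plan is to derive the statement directly from the Van Kampen property, which is the one genuinely non-formal tool in the assumptions. Write the given pushout as a span $A \xleftarrow{q} C \xrightarrow{p} B$ with cocone $A \xrightarrow{t} D \xleftarrow{s} B$, and assume $q$ is the monomorphism; the claim is then that the opposite arrow $s$ is a monomorphism as well. First I would replace the goal ``$s$ is mono'' by the equivalent assertion that the square whose span is $B \xleftarrow{\id} B \xrightarrow{\id} B$ and whose cocone is $B \xrightarrow{s} D \xleftarrow{s} B$ is a pullback, i.e.\ that the kernel pair of $s$ is the diagonal. This is the standard characterization of monomorphisms via kernel pairs, and it converts the claim into a statement asserting that a certain square is a pullback, which is exactly the kind of conclusion the Van Kampen condition delivers.

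The core of the argument is to exhibit a commutative cube having the given pushout as its bottom face and this diagonal square as one of its front faces. I would build it by placing $C, A, B, D$ on the bottom (oriented as in the cube of the adhesive axiom) and choosing the top face together with the vertical edges as follows: take $C' = C$, $A' = C$, $B' = B$, $D' = B$, with vertical edges $\id_C \colon C' \to C$, $q \colon A' \to A$, $\id_B \colon B' \to B$ and $s \colon D' \to D$, and top edges $p \colon C' \to B'$, $\id_C \colon C' \to A'$, $p \colon A' \to D'$ and $\id_B \colon B' \to D'$. With these choices the top face is the pushout of $C \xleftarrow{\id} C \xrightarrow{p} B$, hence a pushout; the back-right face has both vertical edges equal to identities and is therefore a pullback; and---this is the decisive point---the back-left face is precisely the kernel-pair square of $q$, which is a pullback exactly because $q$ is a monomorphism.

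With the cube assembled, I would invoke the Van Kampen property of the bottom pushout, which is legitimate since its span contains the monomorphism $q$: as the two back faces are pullbacks and the top face is a pushout, the front faces must be pullbacks. The front-right face is by construction the square with identities on $B$ and $s$ into $D$, so its being a pullback is exactly the statement that the kernel pair of $s$ is the diagonal, i.e.\ that $s$ is a monomorphism, completing the argument. I expect the main obstacle to be the bookkeeping of the cube rather than any deep step: one must choose the eight objects and the twelve arrows so that every face commutes, verify that the two back faces are genuine pullbacks, and check that the front-right face reduces to the desired diagonal square. The single conceptual insight that makes it work is recognizing that the monomorphism hypothesis on $q$ is used \emph{only} to make the back-left kernel-pair face a pullback; Van Kampen then transports this triviality of kernel pairs from the span leg $q$ across to the opposite arrow $s$.
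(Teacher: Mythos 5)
Your proof is correct, and it is essentially the proof of the result the paper relies on: the paper states this fact without proof, citing Lemma~12 of Lack and Soboci\'nski, and that lemma is proved there by exactly your construction --- the degenerate cube with $C'=A'=C$ and $B'=D'=B$, whose back faces are the kernel-pair square of the monomorphism $q$ and a trivial identity square, so that the Van Kampen condition forces the front face over $s$ to be a pullback, i.e.\ forces the kernel pair of $s$ to be the diagonal. So the approach matches the cited source; there is nothing to correct.
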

\begin{fact}
\label{fact:po-along-monos-are-pb}
Pushouts along monomorphisms are pullbacks \cite[Lemma 13]{Lack-Sobocinski:04}.
\end{fact}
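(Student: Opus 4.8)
The plan is to obtain the statement as a direct consequence of the Van Kampen property from our standing assumptions, by exhibiting one carefully chosen commutative cube. Write the given pushout with span $A \xleftarrow{m} C \xrightarrow{f} B$ and cospan $A \xrightarrow{g} D \xleftarrow{n} B$, where $m$ is the monomorphism. I would place this square as the bottom face of a cube of the shape displayed in our adhesivity assumption and design the top face together with the vertical maps so that the two back faces are pullbacks and the top face is a pushout; the Van Kampen condition then forces the front faces to be pullbacks, and I would arrange the data so that one front face is literally the original square.

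Concretely, take the top objects to be $C' = C$, $A' = C$, $B' = B$, $D' = B$, with top face
\[
\begin{tikzcd}
C \arrow[r, "f"] \arrow[d, "\id"'] & B \arrow[d, "\id"] \\
C \arrow[r, "f"'] & B
\end{tikzcd}
\]
and vertical maps $\id_C\colon C' \to C$, $m\colon A' \to A$, $\id_B\colon B' \to B$, and $n\colon D' \to D$. The cube commutes, the commutativity of the two front faces following from that of the bottom square. The top face is a pushout since one of its legs is an identity. The right back face is an identity square and hence trivially a pullback, while the left back face is
\[
\begin{tikzcd}
C \arrow[r, "\id"] \arrow[d, "\id"'] & C \arrow[d, "m"] \\
C \arrow[r, "m"'] & A
\end{tikzcd}
\]
which is a pullback precisely because $m$ is a monomorphism, i.e.\ has a trivial (diagonal) kernel pair.

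With both back faces pullbacks, the bottom a pushout, and the top a pushout, the Van Kampen property yields that the two front faces are pullbacks. The front face on $A'$, $D'$, $A$, $D$ is exactly the square $C \xrightarrow{f} B$, $C \xrightarrow{m} A$, $A \xrightarrow{g} D$, $B \xrightarrow{n} D$, i.e.\ the original pushout, which is therefore a pullback, as claimed. I expect the only real obstacle to be the design of the cube itself: the top layer must be chosen so that monicity of $m$ turns the left back face into a pullback while the selected front face reproduces the given square. Once the cube is fixed the argument is routine, relying only on the trivial pullbacks above and, implicitly, on the pullback-stability of monomorphisms. If $f$ rather than $m$ is the monomorphism, the symmetric cube does the job.
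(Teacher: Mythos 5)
Your proof is correct, and the cube you construct (top face the pushout of $C \xleftarrow{\id} C \xrightarrow{f} B$, verticals $\id_C$, $m$, $\id_B$, $n$, with the mono's kernel-pair square as the left back face) is exactly the argument behind Lemma~13 of Lack--Sobocinski, which is the source the paper cites for this fact rather than proving it itself. So your derivation matches the paper's intended justification; the only inessential slip is the closing remark that pullback-stability of monomorphisms is used implicitly, which it is not.
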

\begin{fact}
Pushout complements of monomorphisms (if they exist) are unique up to isomorphism \cite[Lemma 15]{Lack-Sobocinski:04}.
\end{fact}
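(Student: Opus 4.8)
The plan is to present the two pushout complements as two pushout squares sharing the same monomorphism $i\colon A\to B$ and the same morphism $m\colon B\to E$ into the ambient object, with complements $(C_1,a_1,c_1)$ and $(C_2,a_2,c_2)$ fitting into pushouts whose top-right legs are $i$ and $m$ and whose left-bottom legs are $a_j\colon A\to C_j$ and $c_j\colon C_j\to E$. First I would record the immediate consequences of the stated facts: since $i$ is a monomorphism and monomorphisms are stable under pushout, both $c_1$ and $c_2$ are monomorphisms; and since pushouts along monomorphisms are pullbacks (Fact~\ref{fact:po-along-monos-are-pb}), each square is simultaneously a pullback, so $A\cong B\times_E C_j$ for $j=1,2$. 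I would then form the pullback $P=C_1\times_E C_2$ of $c_1$ and $c_2$; because monomorphisms are pullback stable the projections $p_1\colon P\to C_1$ and $p_2\colon P\to C_2$ are again monomorphisms, and the legs $a_1,a_2$ out of $A$ induce a unique mediating morphism $A\to P$.

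The heart of the argument is to show that $p_2$ is an isomorphism, the case of $p_1$ being symmetric. For this I would exhibit a commutative cube whose bottom face is the first pushout square and which arises by pulling that square back along $c_2\colon C_2\to E$. Concretely, the top face has corners $A\times_E C_2$, $B\times_E C_2$, $P=C_1\times_E C_2$ and $C_2$, the vertical edges are the pullback projections down to $A$, $B$, $C_1$ and $E$, and two of the side faces are the defining pullbacks of $B\times_E C_2$ and of $P$. Using the pasting property of pullbacks (Fact~\ref{fact:composition-lemma-pb}) I would verify that the remaining side faces are pullbacks as well. Since the bottom is a pushout along the monomorphism $i$ and all four side faces are pullbacks, the Van Kampen condition forces the top face to be a pushout.

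It then remains to identify this top pushout. The corner $B\times_E C_2$ is isomorphic to $A$ precisely because the second pushout square is a pullback, and the corner $A\times_E C_2$, formed over $E$ via the composite $m\circ i=c_2\circ a_2$, is isomorphic to $A$ because that composite factors through the monomorphism $c_2$; under these two identifications the top edge $A\times_E C_2\to B\times_E C_2$ becomes an isomorphism. A pushout square with one pair of opposite edges an isomorphism has its other pair an isomorphism too, since isomorphisms are pushout stable, whence $p_2\colon P\to C_2$ is an isomorphism. By symmetry $p_1\colon P\to C_1$ is an isomorphism, so that $C_1\cong P\cong C_2$, and by construction the resulting isomorphism $C_1\to C_2$ commutes with the $a_j$ and the $c_j$, which is the asserted uniqueness up to isomorphism.

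I expect the main obstacle to be the bookkeeping of the cube: arranging the pulled-back diagram so that its bottom is genuinely the pushout-along-a-monomorphism demanded by the Van Kampen condition, and checking via pullback pasting that every side face is a pullback. Once the cube is in place, the two corner identifications together with pushout stability of isomorphisms dispatch the conclusion quickly.
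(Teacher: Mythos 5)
Your proof is correct, and since the paper itself offers no proof of this Fact (it is quoted verbatim from Lack--Soboci\'nski, Lemma~15), the right comparison is with that cited proof -- which your argument essentially reconstructs: form the pullback $P = C_1 \times_E C_2$ of the two complement embeddings, build the cube obtained by pulling the first pushout square back along $c_2$, invoke the Van Kampen condition to make the top face a pushout, identify its two upper corners with $A$ (using that pushouts along monos are pullbacks, and that $c_2$ is mono), and conclude that $p_2$ is an isomorphism as the pushout of an isomorphism. The only nitpick is a phrasing slip near the end: what you use is not that ``one pair of opposite edges'' is an isomorphism, but that one \emph{span leg} of the top pushout (the edge $A\times_E C_2 \to B\times_E C_2$) is an isomorphism, whence its opposite cocone leg $p_2\colon P \to C_2$ is one too.
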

\begin{fact}
\label{fact:po-pb-decomposition}
Consider the commuting diagram 
\[
\begin{tikzcd}
A \arrow[r, "f"] \arrow[d, "m_1"]
& B \arrow[r, "m_2"] \arrow[d, "m_3"]
& C \arrow[d, "m_4"] \\
D \arrow[r, "g"]
& E \arrow[r, "m_5"]
& F
\end{tikzcd}
\]
where $m_1,\ldots,m_5$ are monomorphisms.
If the outer square is a pushout and the right square is a pullback, then the left square is a pushout. Consequently, all squares are pullback and pushout \cite[Lemma 16]{Lack-Sobocinski:04}.
\end{fact}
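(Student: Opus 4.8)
The plan is to realise the left square as the top face of a Van Kampen cube sitting over the given outer pushout, and to read off that it is a pushout from the Van Kampen condition. This is legitimate because the outer square is a pushout along the monomorphism $m_1$, and pushouts along monomorphisms are Van Kampen squares by the adhesivity assumption; this is the only point at which adhesivity proper enters.

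Concretely, I would take the bottom face of the cube to be the outer pushout, that is the span $C \xleftarrow{m_2 f} A \xrightarrow{m_1} D$ with pushout $F$, and the top face to be the left square, namely the span $B \xleftarrow{f} A \xrightarrow{m_1} D$ together with the cocone $B \xrightarrow{m_3} E \xleftarrow{g} D$ whose commutativity $m_3 f = g m_1$ is exactly the commutativity of the left square. The four vertical edges are chosen to be $\id_A$, $m_2\colon B \to C$, $\id_D$ and $m_5\colon E \to F$. With these choices every face commutes: the two faces meeting the apex edge $\id_A$ collapse to $m_2 f = m_2 f$ and $m_1 = m_1$, and the two faces meeting the edge $m_5$ collapse to the right-square identity $m_4 m_2 = m_5 m_3$ and to $m_5 g = m_5 g$.

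It then remains to check that all four side faces are pullbacks. The two back faces are the ones meeting the apex edge $\id_A$: one has both verticals equal to an identity and is a pullback trivially, and the other, with legs $f$, $\id_A$, $m_2$ and $m_2 f$, is a pullback because $m_2$ is a monomorphism — from $m_2 q = m_2 f p$ one cancels $m_2$ to get $q = f p$, which produces the unique mediating morphism. The two front faces are treated symmetrically: the face with legs $g$, $\id_D$, $m_5$ and $m_5 g$ is a pullback by the same cancellation, now using that $m_5$ is a monomorphism, while the remaining front face is literally the right square and is a pullback by hypothesis.

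Since the bottom is a Van Kampen square and both back faces are pullbacks, the Van Kampen condition says that the top face is a pushout precisely because both front faces are pullbacks; hence the left square is a pushout, as desired. The final clause follows formally: being a pushout along the monomorphism $m_1$, the left square is also a pullback by Fact~\ref{fact:po-along-monos-are-pb}; feeding the left-square pushout and the outer pushout into the pushout composition property (Fact~\ref{fact:composition-lemma-po}) makes the right square a pushout, which is along the monomorphism $m_2$ and hence a pullback too, and the outer square is a pushout and a pullback for the same reason. The one genuinely inventive step is the choice of cube: the trick is that post-composing $f$ and $g$ with the monomorphisms $m_2$ and $m_5$ turns the two auxiliary side faces into ``graph'' pullbacks, so that the only hypothesised pullback among the side faces is the right square itself.
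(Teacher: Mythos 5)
Your proof is correct: the cube with the outer pushout as bottom face, the left square as top face, and verticals $\id_A$, $m_2$, $\id_D$, $m_5$ satisfies all hypotheses of the Van Kampen condition (the bottom is a pushout along the monomorphism $m_1$, the back faces are pullbacks by cancelling the monomorphisms $m_2$ and $m_5$, and the front faces are the right square plus another mono-cancellation pullback), so the top face is a pushout, and the closing bookkeeping via Facts~\ref{fact:po-along-monos-are-pb} and~\ref{fact:composition-lemma-po} is also sound. The paper itself gives no proof but cites Lemma~16 of Lack and Soboci\'nski, and your Van Kampen cube argument is essentially the proof given in that source.
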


Many categories of graphical structures including the category $\cat{Graphs}$ in particular are shown to be examples of adhesive categories~\cite{Lack-Sobocinski:04,Lack-Sobocinski:05,Garner-Lack:12}.
The axioms of an adhesive category enable to derive useful properties, e.g.,
the pullback of two monomorphisms into the same object, which may be seen as two subobjects, may be considered as their intersection,
the pushout of this pullback as their union.
The use of the intersection symbol $\cap$ and the union symbol
$\cup$ is justified as the pullbacks of monomorphisms and the pushouts of
such pullbacks can be constructed by intersections and unions of
subobjects respectively in many typical adhesive categories like $\cat{Sets}$,
$\cat{Graphs}$, $\cat{Hypergraphs}$, etc.
The pushout of the pullback of two monomorphisms is depicted in the following diagram
\[
\begin{tikzcd}
& C \arrow[d] \arrow[dr] & \\
C \cap B \arrow[ur] \arrow[dr] & B \cup C & D \\
& B \arrow[u] \arrow[ur] &
\end{tikzcd}
\]
\begin{fact}
\label{fact:mediating-morphism-of-po-of-pb-is-mono}
The mediating morphism of a pushout of a pullback of two monomorphisms with codomain $D$ is a monomorphism into $D$ (cf. \cite[Theorem 5.1]{Lack-Sobocinski:05}).
\end{fact}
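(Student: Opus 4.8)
The plan is to prove the structured statement behind the fact, namely that an adhesive category has \emph{effective unions}: writing $b\colon B\to D$ and $c\colon C\to D$ for the two monomorphisms, $P$ for their pullback with projections $p_B\colon P\to B$ and $p_C\colon P\to C$, and $U$ for the pushout of $B\xleftarrow{p_B}P\xrightarrow{p_C}C$ with injections $q_B\colon B\to U$ and $q_C\colon C\to U$, the mediating morphism $m\colon U\to D$ determined by $m\circ q_B=b$ and $m\circ q_C=c$ is a monomorphism. This mediating morphism exists precisely because the defining pullback square gives $b\circ p_B=c\circ p_C$, so the cocone $(b,c)$ factors uniquely through the pushout. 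First I would record the cheap structural facts that do not yet touch $m$: since $b,c$ are monos, $p_B,p_C$ are monos by pullback stability of monomorphisms; since the pushout is taken along a monomorphism, $q_B,q_C$ are monos, and by Fact~\ref{fact:po-along-monos-are-pb} the pushout square is simultaneously a pullback.

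The heart of the argument is to show that $m$ has a trivial kernel pair, that is, that the two projections $k_1,k_2\colon K\to U$ of the pullback $K=U\times_D U$ of $m$ along itself coincide; this is equivalent to $m$ being a monomorphism. The engine is the Van Kampen axiom of the adhesive category, used through its consequence that the pushout square for $U$ is a \emph{stable} pushout: given any morphism $f\colon Z\to U$ into the apex, one pulls the three objects $B,C,P$ back along $f$ (via their maps into $U$), whereupon the front faces of the resulting cube are pullbacks by construction, the back faces are pullbacks by Fact~\ref{fact:composition-lemma-pb}, and the Van Kampen condition then forces the span over $Z$ to have $Z$ as its pushout. Intuitively this stability is the categorical replacement for the set-theoretic observation that every element of $U$ is hit by $q_B$ or by $q_C$.

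With stability in hand I would run the categorical analogue of the element-chase that proves effective unions in $\cat{Sets}$. There $m(u)=m(u')$ is resolved by writing $u,u'$ as images from $B$ or $C$ and using that $b,c$ are injective and that $B$ and $C$ meet exactly in $P$; the three cases (both from $B$, both from $C$, one from each) all push $(u,u')$ onto the diagonal. Categorically, I would realise this decomposition by pulling the stable pushout for $U$ back along the projections of $K$, splitting $K$ into pieces indexed by $B\times_D B\cong B$, $C\times_D C\cong C$ and $B\times_D C\cong P$ (the last being the defining pullback, the first two using that $b,c$ are monos). Verifying that each piece lands in the diagonal of $K$ and that the pieces jointly cover $K$ (because the relevant squares are pushouts, so their injections are jointly epic) yields $k_1=k_2$, hence $m$ is a monomorphism.

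The main obstacle is exactly the phenomenon flagged in the introduction: there is no set difference, so the one-line element-chase cannot be transcribed literally. The real work is to organise the kernel pair $K$ as a colimit compatible with the cover $q_B,q_C$ of $U$ purely through iterated stable pullbacks of the Van Kampen pushout and through the composition and decomposition lemmas for pullbacks and pushouts, and then to check that the comparison onto the diagonal is forced on each piece. A cleaner but less self-contained alternative would be to invoke directly the effective-unions theorem for adhesive categories of Lack and Sobociński~\cite{Lack-Sobocinski:05}, which is what the statement cites.
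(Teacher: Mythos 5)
The paper itself contains no proof of this fact: it is imported directly from Lack and Sobocinski~\cite{Lack-Sobocinski:05} (Theorem 5.1, effective unions), so the only meaningful comparison is with that cited proof, and your sketch is essentially a correct reconstruction of it --- reduce monicity of $m$ to triviality of its kernel pair $K = U \times_D U$, use the Van Kampen axiom to make the defining pushout pullback-stable, decompose $K$ into pieces, and conclude $k_1 = k_2$ from joint epimorphicity of pushout injections. The only elision is that the splitting of $K$ takes two successive rounds of pulling back the pushout square (first along a projection of $K$, giving pieces $B\times_D U$, $C\times_D U$, $P\times_D U$, then along the projections of those pieces), which also produces the mixed pieces $B\times_D P\cong P$, $C\times_D P\cong P$ and $P\times_D P\cong P$ besides the three you name; this is exactly the bookkeeping you yourself flag as the remaining work, and your fallback of simply invoking the cited theorem is precisely what the paper does.
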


One can show associativity of pullbacks.
\begin{fact}
Let $a \colon A \to D, b \colon B \to D, c \colon C \to D$ be monomorphisms.
Then
$(A \cap (B \cap C)) \iso ((A \cap B) \cap C)$.
\end{fact}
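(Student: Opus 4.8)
The plan is to show that both iterated pullbacks compute one and the same limit, namely the \emph{ternary intersection} of the three subobjects $a,b,c$ of $D$, and then to invoke uniqueness of limits.

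First I would fix notation. Write $P = B \cap C$ for the pullback of $b$ and $c$, with projections $p_B \colon P \to B$ and $p_C \colon P \to C$; since monomorphisms are pullback stable, both $p_B$ and $p_C$ are monomorphisms, so the diagonal $p := b \circ p_B = c \circ p_C \colon P \to D$ is a monomorphism. Hence $A \cap (B \cap C)$ is well defined as the pullback $Q$ of the two monomorphisms $a$ and $p$, with projections $q_A \colon Q \to A$ and $q_P \colon Q \to P$. Symmetrically, write $AB = A \cap B$ with its induced monomorphism into $D$, so that $(A \cap B) \cap C$ is the pullback $R$ of that monomorphism and $c$.

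Next I would identify the universal property shared by $Q$ and $R$. Consider cones consisting of an object $X$ together with morphisms $x_A \colon X \to A$, $x_B \colon X \to B$, $x_C \colon X \to C$ satisfying $a \circ x_A = b \circ x_B = c \circ x_C$; the limit of this diagram is the terminal such cone. I claim $Q$ is this limit, with legs $x_A = q_A$, $x_B = p_B \circ q_P$, $x_C = p_C \circ q_P$. The cone condition is immediate from the two pullback squares defining $P$ and $Q$. For universality, a cone $(X, x_A, x_B, x_C)$ first induces a unique $u \colon X \to P$ through the pullback $P$ (using $b \circ x_B = c \circ x_C$), and then $a \circ x_A = p \circ u$ induces a unique $v \colon X \to Q$ through the pullback $Q$; the two uniqueness clauses combine to give uniqueness of $v$. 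By exactly the same reasoning with the roles of the three factors permuted, $R$ is the limit of the same diagram. Since limits are unique up to isomorphism, $Q \iso R$, which is the assertion.

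The only genuinely fiddly step is the chaining of the two universal properties to produce and uniquely characterise the mediating morphism $v$, but this is eased considerably by the hypothesis that $a,b,c$ (and hence all derived maps into $D$) are monomorphisms: a morphism $X \to D$ admits at most one factorisation through each of $A$, $B$, $C$, so the compatibility equations carry no extra data and the limit is literally the largest subobject of $D$ contained in all three. An alternative route, if a purely diagrammatic argument is preferred, is to paste the relevant pullback squares into a grid and repeatedly apply Fact~\ref{fact:composition-lemma-pb}; this avoids naming the limit explicitly at the cost of more bookkeeping.
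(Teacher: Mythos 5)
Your proof is correct, but it follows a genuinely different route from the paper's. You characterize both $A \cap (B \cap C)$ and $(A \cap B) \cap C$ as limits of the same ternary diagram (the wide pullback of $a$, $b$, $c$ over $D$), verifying in each case that the iterated pullback satisfies the required universal property, and then invoke uniqueness of limits up to isomorphism; the chaining of the two pullback universal properties, including the combined uniqueness clause, is carried out correctly. The paper instead argues purely by pasting: it places the pullback square defining $A \cap B$ over $D$ next to the pullback of $A \cap B \to B$ against $B \cap C \to B$, and applies Fact~\ref{fact:composition-lemma-pb} to conclude that the outer rectangle is a pullback, giving $A \cap (B \cap C) \iso (A \cap B) \cap (B \cap C)$; a symmetric pasting gives $C \cap (A \cap B) \iso (C \cap B) \cap (A \cap B)$, and the two middle objects coincide because they are pullbacks of the same pair of monomorphisms over $B$. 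This pasting argument is precisely the ``alternative route'' you sketch in your final sentence, so in effect you have both proofs. As for what each buys: your limit argument is more conceptual, scales without extra bookkeeping to intersections of arbitrarily many subobjects, and uses the monomorphism hypothesis only to see that the diagonal $p = b \circ p_B \colon B \cap C \to D$ is again a monomorphism (so that the notation $A \cap (B \cap C)$ is justified) --- the universal-property argument itself works for arbitrary cospans; the paper's pasting argument stays entirely within the facts already catalogued in its preliminaries (only Fact~\ref{fact:composition-lemma-pb} is invoked) and never needs to speak of limits of diagrams other than cospans, which keeps the formal apparatus of the paper minimal.
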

Proof sketch: Consider the following diagrams.
\[
\hspace*{-5pt}
\begin{array}{c@{\hspace{0.6cm}}c}
\begin{tikzcd}
A  \arrow[d] & A \cap B \arrow[l] \arrow[d] & (A \cap B) \cap (B\cap C) \arrow[l] \arrow[d]\\
D & B \arrow[l] & B \cap C \arrow[l]
\end{tikzcd}
&
\begin{tikzcd}
C \arrow[d] & C \cap B  \arrow[l] \arrow[d] & (C \cap B) \cap (A\cap B) \arrow[l] \arrow[d]\\
D & B \arrow[l] & A \cap B \arrow[l]
\end{tikzcd}
\end{array}
\]
Using Fact~\ref{fact:composition-lemma-pb} the two outer diagrams are also pullbacks.
Hence, $A \cap (B \cap C) \iso (A \cap B) \cap (B \cap C) \iso C \cap (A \cap B)$.

It is well-known that all the required properties are satisfied by the category of graphs as most of them are shown, e.g., in Ehrig et al.~\cite{Ehrig-Ehrig-Prange.ea:06,Ehrig-Ermel-Golas-Hermann:15}.
They also hold in various further categories the objects of which are graph-like and the basic constructions and properties of pullbacks and pushouts are based on corresponding set-theoretic constructions and properties.
More specifically, they hold in suitable categories of presheaves.

Moreover, adhesive categories have the usual distributivity property of pushouts and pullbacks of monomorphisms \cite[Corollary 5.2]{Lack-Sobocinski:05}.
\begin{fact}
\label{fact:distributivity-lemma}
Let $a\colon A \to D, b\colon B \to D$ and $c\colon C \to D$ be three monomorphisms.
Consider the following two diagrams with three pullbacks and one pushout of monomorphisms in the first diagram and two pullbacks and one pushout of monomorphisms in the second diagram:
\[
\begin{tikzcd}
                           & A \cap C \arrow[r] \arrow[dr] \arrow[d] & C \arrow[dr] & \\
A \cap C \cap B \iso (A \cap C) \cap (A \cap B) \arrow[ur] \arrow[dr] & (A \cap C) \cup (A \cap B) & A \arrow[r]  & D \\
                           & A \cap B \arrow[r] \arrow[ur] \arrow[u] & B \arrow[ur] &
\end{tikzcd}
\]

\[
\begin{tikzcd}
  & A \arrow[dr] & \\
  A \cap (B \cup C) \arrow[dr] \arrow[ur] & B \arrow[r] \arrow[d]  & D \\
  B \cap C \arrow[dr] \arrow[ur] & B \cup C & \\
  & C \arrow[uur] \arrow[u] &
\end{tikzcd}
\]
Then $(A \cap C) \cup (A \cap B) \iso A \cap (B \cup C )$.
\end{fact}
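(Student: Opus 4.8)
The plan is to exhibit both sides as pushouts of one and the same span and then to conclude by uniqueness of pushouts. The right-hand side is essentially a pushout by construction: the first diagram in the statement presents $(A \cap C) \cup (A \cap B)$ as the pushout of the pullback $(A \cap C) \cap (A \cap B)$ along its two projections into $A \cap C$ and $A \cap B$, and the associativity of intersection established above identifies $(A \cap C) \cap (A \cap B) \iso A \cap B \cap C$. So the whole task reduces to showing that $A \cap (B \cup C)$ is the pushout of the span
\[
A \cap B \longleftarrow A \cap B \cap C \longrightarrow A \cap C .
\]

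The tool I would use for this is the defining Van Kampen property of $\cat{C}$. I would build a commuting cube whose bottom face is the pushout $B \longleftarrow B \cap C \longrightarrow C$ with apex $B \cup C$; this is a pushout along monomorphisms, since $B \cap C \to B$ and $B \cap C \to C$ are monomorphisms, and is therefore a Van Kampen square. On the top face I place the span above, with apex $A \cap (B \cup C)$. The two faces are joined by the canonical ``forgetting $A$'' morphisms $A \cap B \hookrightarrow B$, $A \cap C \hookrightarrow C$, $A \cap B \cap C \hookrightarrow B \cap C$ and $A \cap (B \cup C) \hookrightarrow B \cup C$, all of which exist and make the cube commute by the universal properties of the respective intersections (and are monomorphisms by pullback stability of monomorphisms and Fact~\ref{fact:mediating-morphism-of-po-of-pb-is-mono} for the last one).

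Next I would verify the four vertical faces. The back faces are pullbacks almost immediately: by commutativity and associativity of intersection one has $A \cap B \cap C \iso (A \cap B) \cap (B \cap C)$, which is exactly the pullback of $A \cap B \to B \leftarrow B \cap C$, and symmetrically for the $C$-side back face. For a front face I must show that the pullback of $B \hookrightarrow B \cup C \hookleftarrow A \cap (B \cup C)$ is $A \cap B$. Here I would first use Fact~\ref{fact:composition-lemma-pb} to transport the pullback taken over the subobject $B \cup C$ to the pullback taken over the ambient object $D$ — legitimate because $B \cup C \hookrightarrow D$ is the monomorphism of Fact~\ref{fact:mediating-morphism-of-po-of-pb-is-mono} — obtaining $B \cap (A \cap (B \cup C))$; then associativity rewrites this as $A \cap (B \cap (B \cup C))$, and the absorption $B \cap (B \cup C) \iso B$ (because $B$ factors through $B \cup C$) yields $A \cap B$. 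The $C$-side front face is symmetric.

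With the bottom a Van Kampen pushout and all four vertical faces pullbacks, the Van Kampen property forces the top face to be a pushout, so $A \cap (B \cup C)$ is the pushout of the displayed span. Since $(A \cap C) \cup (A \cap B)$ is the pushout of the same span up to isomorphism, the universal property of pushouts gives $(A \cap C) \cup (A \cap B) \iso A \cap (B \cup C)$, as claimed. The step I expect to be the main obstacle is the front-face verification: in the absence of a set difference one cannot argue by complements, so the identification of the pullback of $B$ and $A \cap (B \cup C)$ with $A \cap B$ must be squeezed out purely from the composition and decomposition laws for pullbacks (Fact~\ref{fact:composition-lemma-pb}), the mediating-monomorphism property, and the absorption $B \cap (B \cup C) \iso B$.
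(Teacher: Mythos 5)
Your argument is correct, but there is nothing in the paper to compare it against: the paper does not prove Fact~\ref{fact:distributivity-lemma} at all, it simply imports it from the literature as \cite[Corollary 5.2]{Lack-Sobocinski:05}. Your proof is self-contained and is essentially the standard argument by which distributivity is established for adhesive categories in that literature: place the pushout square defining $B \cup C$ --- a pushout along monomorphisms, hence a Van Kampen square --- at the bottom of a cube, intersect with $A$ to obtain the top span and its cocone into $A \cap (B \cup C)$, verify that the two back faces and the two front faces are pullbacks, and let the Van Kampen property force the top face to be a pushout; since $(A \cap C) \cup (A \cap B)$ is by definition the pushout of the same span (using the paper's associativity fact to identify $(A \cap C) \cap (A \cap B)$ with $A \cap B \cap C$), uniqueness of pushouts gives the claimed isomorphism. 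Your handling of the delicate step, the front faces, is also the right one: transport the pullback over $B \cup C$ to a pullback over $D$ along the monomorphism $B \cup C \to D$ of Fact~\ref{fact:mediating-morphism-of-po-of-pb-is-mono} (a composition argument via Fact~\ref{fact:composition-lemma-pb}), then apply associativity and the absorption $B \cap (B \cup C) \iso B$, which holds because $B \to D$ factors through the monomorphism $B \cup C \to D$. The only points you leave implicit are routine for subobjects of a fixed object $D$: that the cube commutes (automatic, since every object embeds into $D$ by a monomorphism and all squares commute over $D$), and that your isomorphisms are compatible with the projection morphisms of the faces, so that the faces themselves --- not merely abstractly isomorphic squares --- are pullbacks. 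Neither of these is a gap.
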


\begin{fact}
In an adhesive category with strict initial object, any morphism with domain $\emptyset$ is a monomorphism (cf. \cite[Lemma 4.1]{Lack-Sobocinski:05}).
\end{fact}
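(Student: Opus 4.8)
The plan is to verify the defining property of a monomorphism directly, exploiting the fact that strictness of the initial object severely constrains morphisms whose codomain is $\emptyset$. Fix a morphism $f\colon \emptyset \to A$ in $\cat{C}$. To show that $f$ is a monomorphism, I would take an arbitrary object $Z$ together with two parallel morphisms $g,h\colon Z \to \emptyset$ satisfying $f \circ g = f \circ h$, and aim to conclude $g = h$.

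The decisive observation is that both $g$ and $h$ are morphisms with codomain $\emptyset$, so by the strictness of the initial object each of them is an isomorphism. In particular $g$ witnesses $Z \iso \emptyset$, so $Z$ is itself an initial object. Since an initial object admits exactly one morphism to any given object, there is a unique morphism $Z \to \emptyset$; as both $g$ and $h$ are such morphisms, $g = h$ follows immediately. This establishes that $f$ is a monomorphism.

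The argument is short and I do not expect a genuine obstacle; the one point worth highlighting is that the conclusion does not actually use the hypothesis $f \circ g = f \circ h$. Any two parallel morphisms into $\emptyset$ are forced to coincide, because their common source is compelled to be initial, and a monomorphism is precisely a morphism that is left-cancellable against such pairs. An alternative route would compute the kernel pair of $f$ as the pullback of $f$ along itself, which exists since $\cat{C}$ has all pullbacks, and observe that both projections $P \to \emptyset$ are isomorphisms by strictness, so $P$ is initial and the two projections agree; since a morphism is a monomorphism exactly when the projections of its kernel pair are equal, this again yields the claim. The direct argument, however, avoids invoking pullbacks altogether and seems the cleanest.
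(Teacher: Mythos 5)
Your proof is correct. Note that the paper does not actually prove this fact itself --- it is stated as a Fact with a citation to Lack--Soboci\'nski (Lemma 4.1) --- so there is no in-paper argument to compare against; your proposal supplies the standard proof. The argument is sound: given $g,h\colon Z \to \emptyset$ with $f \circ g = f \circ h$, strictness forces $g$ (and $h$) to be an isomorphism, so $Z$ is initial, and initiality of $Z$ gives $g = h$. Your observation that the hypothesis $f \circ g = f \circ h$ is never used is exactly right and worth emphasizing: the statement holds in \emph{any} category with a strict initial object, so neither adhesivity nor the existence of pullbacks plays any role in the direct argument --- the hypotheses of the Fact are stronger than needed. (Equivalently, instead of saying $Z$ is ``compelled to be initial'' you could note that $g^{-1}, h^{-1}\colon \emptyset \to Z$ are both morphisms out of the initial object, hence equal, hence $g = h$.) Your alternative route via the kernel pair is also valid but, as you say, needs the pullback assumption, so the direct argument is preferable; it is also the natural reading of what the cited lemma proves.
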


Epi-mono factorizations have a useful property.
\begin{fact}
An epi-mono factorization is unique up to isomorphism (cf. \cite[Theorem 5.1]{Lack-Sobocinski:05}).
This allows to denote the epi-mono factorization of a morphism $f \colon A \to B$ by
$A \xrightarrow[e_{A \to f(A)}]{} f(A) \xrightarrow[m_{f(A) \to B}]{} B$.
\end{fact}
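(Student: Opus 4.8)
The plan is to prove uniqueness by exhibiting, for any two epi-mono factorizations of a morphism, a canonical isomorphism between their intermediate objects. Suppose $f\colon A \to B$ factors as $f = m_1 \circ e_1 = m_2 \circ e_2$, where $e_1\colon A \to I_1$ and $e_2\colon A \to I_2$ are epimorphisms and $m_1\colon I_1 \to B$ and $m_2\colon I_2 \to B$ are monomorphisms. First I would form the pullback $P$ of the cospan $I_1 \xrightarrow{m_1} B \xleftarrow{m_2} I_2$, obtaining projections $p_1\colon P \to I_1$ and $p_2\colon P \to I_2$ with $m_1 \circ p_1 = m_2 \circ p_2$; by pullback-stability of monomorphisms, both $p_1$ and $p_2$ are monomorphisms. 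Since $m_1 \circ e_1 = f = m_2 \circ e_2$, the universal property of the pullback yields a unique $u\colon A \to P$ with $p_1 \circ u = e_1$ and $p_2 \circ u = e_2$.

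The key observation is that $p_1$ and $p_2$ are in fact isomorphisms. Indeed, from $e_1 = p_1 \circ u$ with $e_1$ an epimorphism it follows that $p_1$ is an epimorphism, so $p_1$ is simultaneously a monomorphism and an epimorphism; the same holds for $p_2$. It therefore remains to argue that in an adhesive category every morphism that is both a monomorphism and an epimorphism is an isomorphism. For this I would use that every monomorphism $p$ is regular: the pushout of $p$ along itself is, by Fact~\ref{fact:po-along-monos-are-pb}, also a pullback, and the two coinciding projections exhibit $p$ as the equalizer of the two pushout injections. A regular monomorphism that is also epic is an isomorphism, since being epic forces the two injections to agree and an equalizer of a pair of equal morphisms is invertible.

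With $p_1$ and $p_2$ isomorphisms, I would set $d := p_2 \circ p_1^{-1}\colon I_1 \to I_2$, which is an isomorphism. A short computation then shows $d \circ e_1 = p_2 \circ p_1^{-1} \circ p_1 \circ u = p_2 \circ u = e_2$ and $m_2 \circ d = m_2 \circ p_2 \circ p_1^{-1} = m_1 \circ p_1 \circ p_1^{-1} = m_1$, using the commutativity $m_1 \circ p_1 = m_2 \circ p_2$ of the pullback square. Hence $d$ is compatible with both factorizations, which is exactly the desired uniqueness up to isomorphism.

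I expect the auxiliary claim that monomorphisms are regular — and hence that bimorphisms are isomorphisms — to be the main obstacle, since it is the only step genuinely relying on adhesivity rather than on formal manipulation of universal properties, while everything else is routine diagram chasing. An alternative, equally viable route would bypass this auxiliary fact by invoking a diagonal fill-in (orthogonality) property for the epi-mono factorization system, but establishing that property in an adhesive category ultimately rests on the same regularity argument.
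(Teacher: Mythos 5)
Your proof is correct. The paper itself gives no proof of this fact---it is stated with a citation to Lack and Sobocinski---so there is nothing internal to compare against; what matters is whether your blind reconstruction is sound, and it is. The two non-routine steps both check out: (i) a right factor of an epimorphism is an epimorphism, so the pullback projections $p_1,p_2$ are simultaneously monic (by pullback stability, using that $m_1,m_2$ are monomorphisms) and epic; and (ii) every monomorphism in an adhesive category is regular, which you correctly derive from Fact~\ref{fact:po-along-monos-are-pb}: the pushout of a monomorphism $p$ along itself is also a pullback, and the pullback property says precisely that any morphism equalizing the two pushout injections factors uniquely through $p$, so $p$ is their equalizer; combined with epicness the two injections coincide, the identity then factors through $p$, and a split epic monomorphism is invertible. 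This regularity property (equivalently, that adhesive categories are balanced) is exactly the ingredient the cited reference supplies, so your route coincides with the standard literature argument rather than diverging from it. What your write-up adds relative to the paper's bare citation is that the fact is verified using only assumptions and facts already listed in Section~\ref{subsec:categorical-notions-and-facts} (existence of pullbacks, pushouts along monomorphisms, Fact~\ref{fact:po-along-monos-are-pb}, and pullback stability of monomorphisms), which makes the dependency structure explicit. Your closing remark is also accurate: the diagonal fill-in route is equivalent, since orthogonality of epimorphisms against the monomorphisms here ultimately rests on the same regularity argument.
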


\subsection{The Double-Pushout Approach}
\label{subsec:dpo}
The rewriting formalism for graphs and graph-like structures which we use throughout this paper is the double-pushout (DPO) approach as introduced by Ehrig, Pfender and Schneider in~\cite{Ehrig-Pfender-Schneider73a}.
It was originally introduced for graphs.
However, it is well-defined in adhesive categories.

Let $\cat{C}$ be a category the assumed properties of which are given in Section~\ref{subsec:categorical-notions-and-facts}.
A \emph{rule} is a span of monomorphisms $p = (L \xleftarrow[l]{} K \xrightarrow[r]{} R)$.
$L$ and $R$ are called \emph{left-hand side} and \emph{right-hand side}, $K$ is called \emph{gluing object}.

A rule application to some object $G$ is defined wrt a morphism $g\colon L \to G$ which is called \emph{(left) matching morphism}.
$G$ \emph{directly derives} $H$ if the span $L \xleftarrow[l]{} K \xrightarrow[r]{} R$ and matching morphism $g$ extend to the diagram
\[
\begin{tikzcd}[column sep=large]
  L \arrow[d, "g" left] \arrow[dr, phantom, "(1)" description] & K \arrow[dr, phantom, "(2)" description] \arrow[d, "z" left] \arrow[l, "l" below] \arrow[r, "r" below] & R \arrow[d, "h" left] \\
  G & Z \arrow[l, "m_{Z \to G}" below] \arrow[r, "m_{Z \to H}" below] & H
\end{tikzcd}
\]
such that both squares are pushouts.
$Z$ is called \emph{intermediate object} and $h$ is called \emph{right matching morphism}.

The application of a rule $p$ to $G$ wrt $g$ is called \emph{direct derivation} and is denoted by $G \dder_p H$
(where $g$ is kept implicit). 
A \emph{derivation} from $G$ to $H$ is a sequence of direct derivations $G_0\dder_{p_1}G_1\dder_{p_2} \cdots \dder_{p_n} G_n$ with $G_0 = G$, $G_n = H$ and $n \ge 0$.
If $p_1,\cdots, p_n \in P$, then the derivation is also denoted by $G\dder_P^n H$.
If the length of the derivation does not matter, we write $G \dder_P^*H$.

\bigskip

In our running example (starting with Example~\ref{ex:color-dual}),
we consider two sets of rules. One set is related to the coloring of graphs and
the other one to the construction of the dual graph. As the coloring and the
dual graph are defined in a logic- and set-based way (as NP-problems and their
reductions in general), we recall an explicit construction of the double
pushout (see, e.g., Ehrig \cite{Ehrig:79}).
In this way, the relation of our sample sets of rules to colorings and dual graphs becomes more transparent.

In the category $\cat{Graphs}$
a \emph{rule}  $p = (L\supseteq K\subseteq R)$ consists, without loss of generality, of three graphs $L,K,R \in \G_{\Sigma}$ such that $K$ is a subgraph of $L$ and $R$.

The application of $p$ to a graph $G$ can be explicitly constructed by the following three steps.
\begin{enumerate}
\item
Choose a matching morphism $g\colon L\to G$.
\item
Remove the vertices of $g_V(V_L) \setminus g_V(V_K)$ and the edges of $g_E(E_L) \setminus g_E(E_K)$ yielding $Z$, i.e., $Z=G-(g(L)-g(K))$.
\item
Add $V_R\setminus V_K$ disjointly to $V_Z$ and $E_R\setminus E_K$ disjointly to $E_Z$ yielding (up to isomorphism) the graph $H=Z+(R-K)$ where all edges keep their sources, targets and labels except for $s_H(e)=g_V(s_R(e))$ for $e \in E_R\setminus E_K$ with $s_R(e) \in V_K$ and $t_H(e)=g_V(t_R(e))$ for $e \in E_R\setminus E_K$ with $t_R(e) \in V_K$.
\end{enumerate}
The construction is subject to the \emph{dangling condition} and the \emph{identification condition}.
The first condition is that an edge the source or target of which is in $g(L)-g(K)$ is also in $g(L)-g(K)$.
This ensures that $Z$ becomes a subgraph of $G$ so that $H$ becomes a graph automatically
The second condition is that if different items of $L$ are mapped to the same item in $G$, then they are items of $K$.

The construction yields an intermediate morphism $z\colon K \to Z$ and a right matching morphism $h\colon R \to H$ given by
$z(x) = g(x)$ and $h(x) = z(x)$ for $x \in K$
and $h(x) = in_{R-K}(x)$ for $x \in R - K$
where $in_{R-K} \colon R - K \to H = Z + (R - K)$ is the injection associated to the disjoint union.
Then the diagrams
\[
\begin{tikzcd}[column sep=large]
  L \arrow[d, "g" left] \arrow[dr, phantom, "(1)" description] & K \arrow[dr, phantom, "(2)" description] \arrow[d, "z" left] \arrow[l, phantom, "\supseteq"] \arrow[r, phantom, "\subseteq"] & R \arrow[d, "h" left] \\
  G & Z \arrow[l, phantom, "\supseteq"] \arrow[r, phantom, "\subseteq"] & H
\end{tikzcd}
\]
form a double pushout where the inclusion symbols represent the inclusion morphisms. The diagram (1) is only a
pushout if the matching morphism $g$ satisfies the identification condition.

The right matching morphism $h$ satisfies obviously the dangling and identification conditions with respect to the inverted rule
$p^{-1}=(R \supseteq K \subseteq L)$ such that $p^{-1}$ can be applied to $H$ yielding $G$ with intermediate graph~$Z$.

\begin{exa}\label{ex:color-dual}
This first part of our running example provides the rule sets $P_{\colr}$ and $P_{\dual}$ and a sample derivation for each of the two sets.
$P_{color}$ contains the following rules for some label $a$ and $i=1,...,k$ for some $k \in \N$:
\begin{center}
\begin{tikzpicture}
\node (al) {$\addloop\colon$};
\node (all) [state, right=90pt of al.west, anchor=west] {};
\node (alsup) [right=of all] {$\supseteq$};
\node (alk) [state, right=of alsup] {};
\node (alsub) [right=of alk] {$\subseteq$};
\node (alr) [state, right=of alsub] {}
	edge [loop above] node [above] {\scriptsize $a$} (alr);

\node (ac) [below=27pt of al.west, anchor=west] {$\addcolor(i)\colon$};
\node (acl) [right=89pt of ac.west, anchor=west] {$\emptyset$};
\node (acsup) [right=of acl] {$\supseteq$};
\node (ack) [right=of acsup] {$\emptyset$};
\node (acsub) [right=of ack] {$\subseteq$};
\node (acr) [state, right=of acsub] {}
	edge [loop above] node [above] {\scriptsize $i$} (acr);
	
\node (cc) [below=28pt of ac.west, anchor=west] {$\choosecolr(i)\colon$};
\node (ccl1) [state,right=90pt of cc.west, anchor=west] {}
	edge [loop above] node [above] {\scriptsize $a$} (ccl1);
\node (ccl2) [state,right=of ccl1] {}
	edge [loop above] node [above] {\scriptsize $i$} (ccl2);
\node (ccsup) [right=of ccl2] {$\supseteq$};
\node (cck1) [state,right=of ccsup] {};
\node (cck2) [state,right=of cck1] {}
	edge [loop above] node [above] {\scriptsize $i$} (cck2);
\node (ccsub) [right=of cck2] {$\subseteq$};
\node (ccr1) [state, right=of ccsub] {};
\node (ccr2) [state, right=of ccr1] {}
	edge [loop above] node [above] {\scriptsize $i$} (ccr2) edge [-] (ccr1);
\end{tikzpicture}
\end{center}
The rules can be used to check the $k$-colorability of an unlabeled and undirected graph by constructing the following derivations: 
Apply $\addloop$ to each vertex of the start graph, $\addcolor(i)$ once for each $i = 1,\ldots,k$ and $\choosecolr(i)$ for $i = 1,\ldots,k$ as long as possible.
A derivation of this kind represents a $k$-coloring of the start graph if the derived graph does not contain any of the triangles 
\begin{tikzpicture}[baseline=-3pt]
\node (a) [state] {};
\node (b) [state, above right=7pt and 13pt of a] {}
	edge [loop right] node [right] {\scriptsize $i$} (b)
	edge [-] (a);
\node (c) [state, below right=7pt and 13pt of a] {}
	edge [-] (a)
	edge [-] (b);
\node (space) [below=4pt of c] {};
\end{tikzpicture}
for $i=1,\ldots,k$.
A sample derivation of this kind is
\[
d_{\colr} = \mbox{
\newcommand{\ens}{15pt}

\begin{tikzpicture}[baseline=-2pt]
\node (g1) {
\begin{tikzpicture}
\node (g1n1) [state] {};
\node (g1n2) [state,right=\ens of g1n1] {}
	edge [-] (g1n1);
\node (g1n3) [state,below=\ens of g1n1] {}
	edge [-] (g1n2);
\node (g1n4) [state,right=\ens of g1n3] {}
	edge [-] (g1n1)
	edge [-] (g1n3);
\node (g1n5) [state,below=\ens of g1n3] {}
	edge [-] (g1n2)
	edge [-] (g1n4);
\node (g1n6) [state,right=\ens of g1n5] {}
	edge [-] (g1n1)
	edge [-] (g1n3)
	edge [-] (g1n5);
\end{tikzpicture}
};
\node (g1g2) [right=30pt of g1] {$\dder^{6}_{\addloop}$};
\node (g2) [right=50pt of g1g2] {
\begin{tikzpicture}
\node (g1n1) [state] {}
	edge [loop left] node [left] {\scriptsize $a$} (g1n1);
\node (g1n2) [state,right=\ens of g1n1] {}
	edge [loop right] node [right] {\scriptsize $a$} (g1n2)
	edge [-] (g1n1);
\node (g1n3) [state,below=\ens of g1n1] {}
	edge [loop left] node [left] {\scriptsize $a$} (g1n3)
	edge [-] (g1n2);
\node (g1n4) [state,right=\ens of g1n3] {}
	edge [loop right] node [right] {\scriptsize $a$} (g1n4)
	edge [-] (g1n1)
	edge [-] (g1n3);
\node (g1n5) [state,below=\ens of g1n3] {}
	edge [loop left] node [left] {\scriptsize $a$} (g1n5)
	edge [-] (g1n2)
	edge [-] (g1n4);
\node (g1n6) [state,right=\ens of g1n5] {}
	edge [loop right] node [right] {\scriptsize $a$} (g1n6)
	edge [-] (g1n1)
	edge [-] (g1n3)
	edge [-] (g1n5);
\end{tikzpicture}
};
\node (g2g3) [right=50pt of g2] {$\dder^{2}_{\addcolor}$};
\node (g3) [right=60pt of g2g3] {
\begin{tikzpicture}
\node (g1n1) [state] {}
	edge [loop left] node [left] {\scriptsize $a$} (g1n1);
\node (g1n2) [state,right=\ens of g1n1] {}
	edge [loop right] node [right] {\scriptsize $a$} (g1n2)
	edge [-] (g1n1);
\node (g1n3) [state,below=\ens of g1n1] {}
	edge [loop left] node [left] {\scriptsize $a$} (g1n3)
	edge [-] (g1n2);
\node (g1n4) [state,right=\ens of g1n3] {}
	edge [loop right] node [right] {\scriptsize $a$} (g1n4)
	edge [-] (g1n1)
	edge [-] (g1n3);
\node (g1n5) [state,below=\ens of g1n3] {}
	edge [loop left] node [left] {\scriptsize $a$} (g1n5)
	edge [-] (g1n2)
	edge [-] (g1n4);
\node (g1n6) [state,right=\ens of g1n5] {}
	edge [loop right] node [right] {\scriptsize $a$} (g1n6)
	edge [-] (g1n1)
	edge [-] (g1n3)
	edge [-] (g1n5);
\node (g1n7) [state,left=25pt of g1n3] {}
	edge [loop above] node [above] {\scriptsize $1$} (g1n7);
\node (g1n8) [state,right=25pt of g1n4] {}
	edge [loop above] node [above] {\scriptsize $2$} (g1n8);
\end{tikzpicture}
};
\node (g3g4) [right=60pt of g3] {$\dder^{6}_{\choosecolr}$};
\node (g4) [right=45pt of g3g4] {
\begin{tikzpicture}
\node (g1n1) [state] {};
\node (g1n2) [state,right=\ens of g1n1] {}
	edge [-] (g1n1);
\node (g1n3) [state,below=\ens of g1n1] {}
	edge [-] (g1n2);
\node (g1n4) [state,right=\ens of g1n3] {}
	edge [-] (g1n1)
	edge [-] (g1n3);
\node (g1n5) [state,below=\ens of g1n3] {}
	edge [-] (g1n2)
	edge [-] (g1n4);
\node (g1n6) [state,right=\ens of g1n5] {}
	edge [-] (g1n1)
	edge [-] (g1n3)
	edge [-] (g1n5);
\node (g1n7) [state,left=10pt of g1n3] {}
	edge [loop above] node [above] {\scriptsize $1$} (g1n7)
	edge [-] (g1n1)
	edge [-] (g1n3)
	edge [-] (g1n5);
\node (g1n8) [state,right=10pt of g1n4] {}
	edge [loop above] node [above] {\scriptsize $2$} (g1n8)
	edge [-] (g1n2)
	edge [-] (g1n4)
	edge [-] (g1n6);
\end{tikzpicture}
};
\end{tikzpicture}
}
\]
The start graph is the complete bipartite graph $K_{3,3}$.
The derived graph represents a $2$-coloring of $K_{3,3}$.

$P_{\dual}$ contains the following rules for some label $b$:
\begin{center}
\begin{tikzpicture}
\node (de) {$\doubleedge\colon$};
\node (del1) [state, right=70pt of de.west, anchor=west] {};
\node (del2) [state, right=of del1] {} edge [-] (del1);
\node (desup) [right=of del2] {$\supseteq$};
\node (dek1) [state, right=of desup] {};
\node (dek2) [state, right=of dek1] {} edge [-] (dek1);
\node (desub) [right=of dek2] {$\subseteq$};
\node (der1) [state, right=of desub] {};
\node (der2) [state, right=of der1] {}
	edge [-] (der1)
	edge [-, bend right] node[above]{\scriptsize $b$} (der1);

\node (ae) [below=18pt of de.west, anchor=west] {$\addedge\colon$};
\node (ael1) [state,right=70pt of ae.west, anchor=west] {};
\node (ael2) [state,right=of ael1] {};
\node (aesup) [right=of ael2] {$\supseteq$};
\node (aek1) [state,right=of aesup] {};
\node (aek2) [state,right=of aek1] {};
\node (aesub) [right=of aek2] {$\subseteq$};
\node (aer1) [state,right=of aesub] {};
\node (aer2) [state,right=of aer1] {} 
	edge [-] (aer1);
	
\node (rp) [below=18pt of ae.west, anchor=west] {$\removepair\colon$};
\node (rpl1) [state,right=70pt of rp.west, anchor=west] {};
\node (rpl2) [state,right=of rpl1] {}
	edge [-] (rpl1)
	edge [-, bend right] node [above] {\scriptsize $b$} (rpl1);
\node (rpsup) [right=of rpl2] {$\supseteq$};
\node (rpk1) [state,right=of rpsup] {};
\node (rpk2) [state,right=of rpk1] {};
\node (rpsub) [right=of rpk2] {$\subseteq$};
\node (rpr1) [state, right=of rpsub] {};
\node (rpr2) [state, right=of rpr1] {};
\end{tikzpicture}
\end{center}
The rules can be used to construct the dual graph of an unlabeled, simple, loop-free, undirected graph by constructing the following derivation:
apply $\doubleedge$ for each edge once, followed by $\addedge$ for each pair of vertices that are not connected and finally $\removepair$ as long as possible. 
A sample derivation of this kind is
\[
d_{\dual}= \mbox{
\newcommand{\ens}{15pt}
\newcommand{\ensh}{30pt}

\begin{tikzpicture}[baseline=-2pt]
\node (g1) {
\begin{tikzpicture}
\node (n1) [state] {};
\node (n2) [state,right=\ens of n1] {}
	edge [-] (n1);
\node (n3) [state,below=\ens of n1] {}
	edge [-] (n2);
\node (n4) [state,right=\ens of n3] {}
	edge [-] (n1)
	edge [-] (n3);
\node (n5) [state,below=\ens of n3] {}
	edge [-] (n2)
	edge [-] (n4);
\node (n6) [state,right=\ens of n5] {}
	edge [-] (n1)
	edge [-] (n3)
	edge [-] (n5);
\end{tikzpicture}
};
\node (g1g2) [right=35pt of g1] {$\dder^{9}_{\doubleedge}$};
\node (g2) [right=35pt of g1g2] {
\begin{tikzpicture}[
thickedge/.style={-, line width=1pt}
]
\node (n1) [state] {};
\node (n2) [state,right=\ens of n1] {}
	edge [thickedge] (n1);
\node (n3) [state,below=\ens of n1] {}
	edge [thickedge] (n2);
\node (n4) [state,right=\ens of n3] {}
	edge [thickedge] (n1)
	edge [thickedge] (n3);
\node (n5) [state,below=\ens of n3] {}
	edge [thickedge] (n2)
	edge [thickedge] (n4);
\node (n6) [state,right=\ens of n5] {}
	edge [thickedge] (n1)
	edge [thickedge] (n3)
	edge [thickedge] (n5);
\end{tikzpicture}
};
\node (g2g3) [right=30pt of g2] {$\dder^{6}_{\addedge}$};
\node (g3) [right=35pt of g2g3] {
\begin{tikzpicture}[
thickedge/.style={-, line width=1pt}
]
\node (n1) [state] {};
\node (n2) [state,right=\ens of n1] {}
	edge [thickedge] (n1);
\node (n3) [state,below=\ens of n1] {}
	edge [thickedge] (n2)
	edge [-] (n1);
\node (n4) [state,right=\ens of n3] {}
	edge [thickedge] (n1)
	edge [thickedge] (n3)
	edge [-] (n2);
\node (n5) [state,below=\ens of n3] {}
	edge [thickedge] (n2)
	edge [thickedge] (n4)
	edge [-] (n3)
	edge [-, bend left] (n1);
\node (n6) [state,right=\ens of n5] {}
	edge [thickedge] (n1)
	edge [thickedge] (n3)
	edge [thickedge] (n5)
	edge [-] (n4)
	edge [-, bend right] (n2);
\end{tikzpicture}
};
\node (g3g4) [right=40pt of g3] {$\dder^{9}_{\removepair}$};
\node (g4) [right=40pt of g3g4] {
\begin{tikzpicture}
\node (n1) [state] {};
\node (n2) [state,right=\ens of n1] {};
\node (n3) [state,below=\ens of n1] {}
	edge [-] (n1);
\node (n4) [state,right=\ens of n3] {}
	edge [-] (n2);
\node (n5) [state,below=\ens of n3] {}
	edge [-] (n3)
	edge [-,bend left] (n1);
\node (n6) [state,right=\ens of n5] {}
	edge [-] (n4)
	edge [-,bend right] (n2);
\end{tikzpicture}
};
\end{tikzpicture}
}
\]
where each thick line represents a pair of edges labeled by $b$ and $*$ respectively. 
The start graph is again the complete bipartite graph $K_{3,3}$ and the derived graph is the dual graph of $K_{3,3}$.
\end{exa}

\subsection{Local Church-Rosser Properties}
Let $p = (L \xleftarrow[l]{} K \xrightarrow[r]{} R)$ and
$\overline{p} = (\overline{L} \xleftarrow[\overline{l}]{} \overline{K} \xrightarrow[\overline{r}]{} \overline{R})$
be two rules.
\begin{enumerate}
\item
Two direct derivations $G \dder_{p} H$ and $G \dder_{\overline{p}} \overline{H}$ with matching morphisms $g\colon L \to G$ and $\overline{g} \colon \overline{L} \to G$ are \emph{parallel independent}
if there exist two morphisms $f\colon L \to \overline{Z}$ and $\overline{f}\colon \overline{L} \to Z$ such that $g = m_{\overline{Z} \to G} \circ f$ and $\overline{g} = m_{Z \to G} \circ \overline{f}$.
The situation is depicted in the following diagram.
\[
\begin{tikzcd}[column sep = large]
  R \arrow[d, "h" left] & K \arrow[d, "z" left] \arrow[l, "r" below] \arrow[r, "l" below] & L \arrow[dr, "g"] \arrow[drrr, bend left = 20, "f" very near end]
  & & 
  \overline{L} \arrow[dl, "\overline{g}"] \arrow[dlll, bend right = 20, "\overline{f}" very near end] & \overline{K} \arrow[d, "\overline{z}" left] \arrow[l, "\overline{l}" below] \arrow[r, "\overline{r}" below] & \overline{R} \arrow[d, "\overline{h}" left] \\
  H & Z \arrow[l, "m_{Z \to H}" below] \arrow[rr, "m_{Z \to G}" below] & & G & & \overline{Z} \arrow[ll, "m_{\overline{Z} \to G}" below] \arrow[r, "m_{\overline{Z} \to \overline{H}}" below] & \overline{H}
\end{tikzcd}
\]

\item
Successive direct derivations $G \dder_{\overline{p}} \overline{H} \dder_p X$ with the right matching morphism $\overline{h} \colon \overline{R} \to \overline{H}$ and the (left) matching morphism $g' \colon L \to \overline{H}$ are \emph{sequentially independent}
if there exist morphisms $f'\colon \overline{R} \to Z'$ and $\overline{f}\colon L \to \overline{Z}$ such that $\overline{h} = m_{Z' \to \overline{H}} \circ f'$ and $g' = m_{\overline{Z} \to \overline{H}} \circ \overline{f}$.
The situation is depicted in the following diagram.
\[
\begin{tikzcd}[column sep = large]
  \overline{L} \arrow[d, "\overline{g}" left] & \overline{K} \arrow[d, "\overline{z}" left] \arrow[l, "\overline{l}" below] \arrow[r, "\overline{r}" below] & \overline{R} \arrow[dr, "\overline{h}"] \arrow[drrr, bend left = 20, "f'" very near end]
  & & 
  L \arrow[dl, "g'"] \arrow[dlll, bend right = 20, "\overline{f}" very near end] & K \arrow[d, "z'" left] \arrow[l, "l" below] \arrow[r, "r" below] & R \arrow[d, "h'" left] \\
  G & \overline{Z} \arrow[l, "m_{\overline{Z} \to G}" below] \arrow[rr, "m_{\overline{Z} \to \overline{H}}" below] & & \overline{H} & & Z' \arrow[ll, "m_{Z' \to \overline{H}}" below] \arrow[r, "m_{Z' \to X}" below] & X
\end{tikzcd}
\]
\end{enumerate}
It is well-known that parallel independence induces the direct derivations $H \dder_{\overline{p}} X$ and $\overline{H} \dder_{p} X$ with matching morphisms
$m_{Z \to H} \circ \overline{f}$ and $m_{\overline{Z} \to \overline{H}} \circ f$
and that sequential independence induces the derivation $G \dder_p H \dder_{\overline{p}} X$.
The two constructions are called \emph{conflux} and \emph{interchange} respectively (cf. Figure \ref{fig:conflux-interchange} in the Introduction).
The proofs of these two local Church-Rosser theorems in
the context of adhesive categories can be found in Lack and Sobocinski~\cite{Lack-Sobocinski:05}.
The underlying construction of \emph{conflux} is part of the proof of
Lemma~\ref{lemma:moving-preserves-spine-one-step}(\ref{item:moving-preserves-spine-one-step-item-one}).
\newpage

\section{Moving derivations along derivations using independence}
\label{sec:moving-derivations}
In this section, we introduce two types of moving derivations along derivations: 
\emph{forward moving} in \ref{subsec:moveforward} and \emph{backward moving} in \ref{subsec:movebackward}. 
The forward case gets two derivations with the same start object, and one derivation is moved to the derived object of the other derivation (cf. left part of Figure~\ref{fig:moving} in the Introduction). 
The backward case gets the sequential composition of two derivations, and the second derivation is moved to the start object of the first derivation (cf. right part of Figure~\ref{fig:moving}). 
The forward construction is based on parallel independence and iterated application of $\conflux$ while the backward construction is based on sequential independence and iterated application of $\interchange$.

\begin{defi}
Let $(P,\overline{P})$ be a pair of sets of rules.
\begin{enumerate}
\item Then $(P,\overline{P})$ is called \emph{parallel independent} if each two direct derivations of the form $G \dder_p H$ and $G \dder_{\overline{p}} \overline{H}$ with $p \in P$ and $\overline{p} \in \overline{P}$ are parallel independent,
\item and $(P,\overline{P})$ is called \emph{sequentially independent} if each derivation of the form $G \dder_{\overline{p}} \overline{H} \dder_p X$ with $p \in P$ and $\overline{p} \in \overline{P}$ is sequentially independent.
\end{enumerate}
\end{defi}

\begin{exa}\label{ex:independence-color-dual}
The pair of sets of rules $(P_{\colr},P_{\dual})$ is parallel and sequentially independent.
A $P_{\colr}$-rule application and a $P_{\dual}$-rule application to the same object would not be parallel independent if one removes something while the other accesses it.
This does not happen.
The only removing rules are $\choosecolr$ and $\removepair$.
If the rule $\choosecolr$ is applied, then an $a$-loop is removed, but $a$-loops are not accessible by $P_{\dual}$-rules.
If $\removepair$ is applied, then a $b$-edge and $*$-edge are removed, but none of these edges can be accessed by a $P_{\colr}$-rule.
An application of a $P_{\dual}$-rule followed by an application of a $P_{\colr}$-rule would not be sequentially independent if the first generates something that the second accesses or if the right matching morphism of the first step accesses something that is removed by the second step.
This does not happen.
The rule $\doubleedge$ adds a $b$-edge, the rule $\addedge$ a $*$-edge, but none of them can be accessed by a $P_{\colr}$-rule.
And the only removing $P_{\colr}$-rule is $\choosecolr$ removing an $a$-loop, but no right-hand side of a $P_{\dual}$-rule contains such a loop.
Summarizing the statement holds.

It may be noticed that this is not the case for the pair $(P_{\dual},P_{\colr})$.
Clearly it is parallel independent as the order of the rule application is not significant in this case. 
But if one applies first the rule $\choosecolr$ generating an edge and then $\doubleedge$ to this newly generated edge, then the sequential independence fails.
\end{exa}

\subsection{Moving forward using parallel independence}\label{subsec:moveforward}

Two parallel independent rule applications do not interfere with each other.
Each one keeps the matching morphism of the other one intact
meaning that the matching morphism of each
rule application can be reconstructed as matching morphism in the derived object of the other rule application.
Consequently, it can be applied after the other one yielding the same result in both cases (cf. left part of Figure \ref{fig:conflux-interchange}). 
A possible interpretation of this strong confluence property is that the application of $p$ is moved along the application of $\overline{p}$ and – symmetrically – the application of $\overline{p}$ along the application of $p$.
This carries over to derivations.
\newpage

\begin{prop}\label{prop:moving}
Let $(P,\overline{P})$ be a parallel independent pair of sets of rules.
Let $d = (G \dder^n_P H)$ and $\overline{d} = (G \dder^m_{\overline{P}} \overline{H})$ be two derivations.
Then the iteration of \emph{conflux} as long as possible yields two derivations of the form $H \dder^m_{\overline{P}} X$ and $\overline{H} \dder^n_P X$ for some object $X$.
\end{prop}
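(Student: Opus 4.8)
The plan is to tile the $n \times m$ grid of objects spanned by $d$ and $\overline{d}$ with the elementary confluence squares supplied by $\conflux$, filling it in by a double induction on the two lengths. Write $d = (G = G_0 \dder_{p_1} G_1 \dder \cdots \dder_{p_n} G_n = H)$ with $p_1,\ldots,p_n \in P$ and $\overline{d} = (G = \overline{G}_0 \dder_{\overline{p}_1} \overline{G}_1 \dder \cdots \dder_{\overline{p}_m} \overline{G}_m = \overline{H})$ with $\overline{p}_1,\ldots,\overline{p}_m \in \overline{P}$. The degenerate cases are immediate: if $n = 0$ then $H = G$, so taking $X = \overline{H}$ with the empty derivation $\overline{H} \dder^0_P \overline{H}$ and $\overline{d}$ itself as $H \dder^m_{\overline{P}} X$ does the job; the case $m = 0$ is symmetric.

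The main induction is best organized through an auxiliary \emph{single-row} statement, which I would establish first: a single direct derivation $G \dder_p H_1$ with $p \in P$ can be moved along the whole of $\overline{d}$, producing a derivation $H_1 \dder^m_{\overline{P}} X_1$ together with a direct derivation $\overline{H} \dder_p X_1$. This is proved by induction on $m$. For $m = 0$ one takes $X_1 = H_1$. For the step, $\conflux$ is applied to the two direct derivations $G \dder_p H_1$ and $G \dder_{\overline{p}_1} \overline{G}_1$, which are parallel independent because $p \in P$, $\overline{p}_1 \in \overline{P}$ and the pair $(P,\overline{P})$ is parallel independent; this yields $H_1 \dder_{\overline{p}_1} Y$ and $\overline{G}_1 \dder_p Y$ for some $Y$. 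Applying the induction hypothesis to the single step $\overline{G}_1 \dder_p Y$ and the tail $\overline{G}_1 \dder^{m-1}_{\overline{P}} \overline{H}$ gives $Y \dder^{m-1}_{\overline{P}} X_1$ and $\overline{H} \dder_p X_1$, and prepending $H_1 \dder_{\overline{p}_1} Y$ produces the desired $H_1 \dder^m_{\overline{P}} X_1$.

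The proposition then follows by induction on $n$. For the step I would first use the single-row statement to move the first $P$-step $G \dder_{p_1} G_1$ along all of $\overline{d}$, obtaining $G_1 \dder^m_{\overline{P}} X'$ and $\overline{H} \dder_{p_1} X'$. The induction hypothesis, applied to the two derivations $G_1 \dder^{n-1}_P H$ and $G_1 \dder^m_{\overline{P}} X'$ emanating from the common object $G_1$, yields $H \dder^m_{\overline{P}} X$ and $X' \dder^{n-1}_P X$ for some $X$. Composing $\overline{H} \dder_{p_1} X' \dder^{n-1}_P X$ gives $\overline{H} \dder^n_P X$, and together with $H \dder^m_{\overline{P}} X$ this is exactly the claim. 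Since the statement only asserts existence of $X$, it suffices to fill the grid in this fixed row-by-row order; no independence of $X$ from the order needs to be shown.

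The one point that requires care—and which I expect to be the real, if modest, obstacle—is that at every vertex of the grid the two emanating direct derivations must again form a parallel independent $P/\overline{P}$ pair, so that $\conflux$ keeps applying. This hinges on the fact that the $\conflux$ construction recalled above reproduces applications of the \emph{same} rules $p$ and $\overline{p}$, with the explicitly given induced matching morphisms $m_{Z \to H} \circ \overline{f}$ and $m_{\overline{Z} \to \overline{H}} \circ f$. Consequently every direct derivation created at an interior vertex is again a $P$-rule or a $\overline{P}$-rule application, so each newly formed pair from a common object is covered by the parallel independence hypothesis on $(P,\overline{P})$; this is precisely what licenses iterating $\conflux$ as long as possible until the full grid is completed. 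Once this invariant is in place, the rest is the routine bookkeeping of the two inductions, the only non-formal ingredient being $\conflux$ itself, which is provided by the local Church--Rosser theorem.
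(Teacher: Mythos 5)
Your construction is essentially the paper's own proof with the two inductions transposed: the paper first fixes $m=1$ and inducts on $n$ (moving the whole $P$-derivation across a single $\overline{P}$-step), and then inducts on $m$ in general; you first fix a single $P$-step and induct on $m$ (your ``single-row'' lemma), and then induct on $n$. Both are the same \emph{conflux}-tiling double induction, and your closing observation that \emph{conflux} reproduces applications of the \emph{same} rules from $P$ and $\overline{P}$ --- so that the parallel independence of $(P,\overline{P})$ keeps licensing further \emph{conflux} steps at every interior vertex of the grid --- is exactly the invariant the paper's construction relies on.

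The one point where you genuinely diverge is the claim that ``no independence of $X$ from the order needs to be shown.'' The proposition is phrased as ``the \emph{iteration} of \emph{conflux} as long as possible yields \ldots'', i.e., it asserts something about an arbitrary maximal sequence of \emph{conflux} applications, not about one cleverly scheduled order. Your argument shows that one particular (row-by-row) schedule completes the grid; a priori, a different maximal schedule could terminate in a configuration that is not the full grid, or could produce a different $X$. The paper closes exactly this hole with a short final induction on the number of \emph{conflux} applications: since \emph{conflux} is unique up to isomorphism, every sequence of \emph{conflux} applications stays within the grid (up to isomorphism), hence every maximal sequence has length $m \times n$ and ends with the same grid, and in particular with the same $X$ up to isomorphism. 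To match the statement as written you would need to add this argument; it is brief, but it is not vacuous, and it is the only part of the paper's proof your proposal omits.
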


\begin{figure}[t]
\begin{subfigure}{0.4\textwidth}
\centering
\begin{tikzcd}
& G \arrow[r, Rightarrow, "\overline{p}_1" below] \arrow[d, Rightarrow, "P" right, "0" left]
& \overline{G}_1 \arrow[d, Rightarrow, "P" right, "0" left] & \\
H \arrow[r, phantom, "=" description] &  G \arrow[r, Rightarrow, "\overline{p}_1" below]
& \overline{G}_1 & X \arrow[l, phantom, "=" description]
\end{tikzcd}
\caption{$n=0, m=1$}
\label{fig:moving-n=0,m=1}
\end{subfigure}
\begin{subfigure}{0.55\textwidth}
\centering
\begin{tikzcd}
& G \arrow[r, Rightarrow, "\overline{p}_1" below] \arrow[d, Rightarrow, "p_1" right]
  \arrow[dr, phantom, "(1)"]
& \overline{G}_1 \arrow[d, Rightarrow, "p_1" right] & \\
& G_1 \arrow[r, Rightarrow, "\overline{p}_1" below] \arrow[d, Rightarrow, "P" right, "n" left]
\arrow[dr, phantom, "(2)"]
& \overline{X}_1 \arrow[d, Rightarrow, "P" right, "n" left] & \\
H \arrow[r, phantom, "=" description] &
G_{n+1} \arrow[r, Rightarrow, "\overline{p}_1" below]
& \overline{X}_{n+1} & X \arrow[l, phantom, "=" description]
\end{tikzcd}
\caption{$n+1, m=1$}
\label{fig:moving-n+1,m=1}
\end{subfigure}
\begin{subfigure}{0.4\textwidth}
\centering
\begin{tikzcd}
G \arrow[r, Rightarrow, "\overline{P}" below, "0" above] \arrow[d, Rightarrow, "P" right, "n" left]
& G \arrow[d, Rightarrow, "P" right, "n" left]
& \overline{H} \arrow[l, phantom, "=" description]
\\
H \arrow[r, Rightarrow, "\overline{P}" below, "0" above]
& H & X \arrow[l, phantom, "=" description]
\end{tikzcd}
\caption{$m=0$, general case}
\label{fig:moving-m=0-general}
\end{subfigure}
\begin{subfigure}{0.55\textwidth}
\centering
\begin{tikzcd}
 G \arrow[r, Rightarrow, "\overline{p}_1" below] \arrow[d, Rightarrow, "P" right, "n" left]
  \arrow[dr, phantom, "(3)"]
  & \overline{G}_1 \arrow[r, Rightarrow, "\overline{P}" below, "m" above] \arrow[d, Rightarrow, "P" right, "n" left]
  \arrow[dr, phantom, "(4)"]
  & \overline{G}_{m+1} \arrow[d, Rightarrow, "P" right, "n" left]
  & \overline{H} \arrow[l, phantom, "=" description] \\ 
H \arrow[r, Rightarrow, "\overline{p}_1" below]
& X_1 \arrow[r, Rightarrow, "\overline{P}" below, "m" above]
& X_{m+1}
& X \arrow[l, phantom, "=" description]
\end{tikzcd}
\caption{$m+1$, general case}
\label{fig:moving-m+1-general}
\end{subfigure}
\caption{Diagrams used in the proof of Proposition~\ref{prop:moving}}
\end{figure}

\begin{proof}
Formally, the construction is done inductively on the lengths of derivations first for $m=1$ and then in general:

Base $n=0$ for $m=1$:
By definition, one can assume that a $0$-derivation does not change the start object so that 
the direct derivation $G \dder_{\overline{p}_1} \overline{G}_1$ and the $0$-derivation $\overline{G}_1 \dder^0_P \overline{G}_1$ fulfill the statement (cf. diagram in Figure~\ref{fig:moving-n=0,m=1}).

Step $n+1$ for $m=1$:
The derivation $d$ can be decomposed into $G \dder_{p_1} G_1 \dder^n_{P} G_{n+1}$.
Then the operation $\conflux$ can be applied to $G \dder_{p_1} G_1$ and $G \dder_{\overline{p}_1} \overline{G}_1$ yielding $G_1 \dder_{\overline{p}_1} \overline{X}_1$ and
$\overline{G}_1 \dder_{p_1} \overline{X}_1$ as depicted in subdiagram (1) of the diagram in Figure~\ref{fig:moving-n+1,m=1}.
Now one can apply the induction hypothesis to $G_1 \dder_{\overline{p}_1} \overline{X}_1$ and $G_1 \dder^n_{P} G_{n+1}$ yielding
$\overline{X}_1 \dder^n_{P} \overline{X}_{n+1}$ and $G_{n+1} \dder_{\overline{p}_1} \overline{X}_{n+1}$ as depicted in subdiagram (2).
The composition of the diagrams (1) and (2) completes the construction.

Base $m=0$ for the general case:
Using the same argument as for $n=0$, one gets the required derivations as the given $G \dder^n_P H$ and the $0$-derivation $H \dder^0_{\overline{P}} H$ (cf. diagram in Figure~\ref{fig:moving-m=0-general}).

Step $m+1$ for the general case:
The derivation $\overline{d}$ can be decomposed into $G \dder_{\overline{p}_1}$\linebreak[4]$\overline{G}_1 \dder^m_{\overline{P}} \overline{G}_{m+1}$.
The construction for $m=1$ can be applied to $G \dder_{\overline{p}_1} \overline{G}_1$ and $G \dder^n_{P} H$ yielding
$\overline{G}_1 \dder^n_{P} X_1$ and $H \dder_{\overline{p}_1} X_1$ as depicted in subdiagram (3) of the diagram in Figure~\ref{fig:moving-m+1-general}.
Now one can apply the induction hypothesis to $\overline{G}_1 \dder^m_{\overline{P}} \overline{G}_{m+1}$ and $\overline{G}_1 \dder^n_{P} X_1$ yielding
$\overline{G}_{m+1} \dder^n_{P} X_{m+1}$ and $X_1 \dder^m_{\overline{P}} X_{m+1}$ as depicted in subdiagram (4) .
The composition of the diagrams (3) and (4) completes the construction.

The construction yields a grid of \emph{conflux}-squares.
It is easy to see by induction on the number of \emph{conflux}-applications
that every sequence of \emph{conflux}-applications of length $m\times n$ yields
the same grid up to isomorphism. One starts from the two
given derivations as induction base with $0$ \emph{conflux}-applications.
They form a substructure of the grid.
By induction hypothesis one can assume that $k$ \emph{conflux}-applications yield a substructure of the grid. Then every possible
further \emph{conflux}-application as induction step stays within the grid as \emph{conflux} is unique up to isomorphism so that one can take the square of the grid.
\end{proof}

\begin{rem}
In diagrammatic form, the situation looks as follows:
\[
\begin{tikzcd}
G \arrow[r, Rightarrow, "\overline{P}" below, "m" above] \arrow[d, Rightarrow, "P" right, "n" left]
& \overline{H} \\
H
& 
\end{tikzcd}
~~~~~~
\leadsto
~~~~~~
\begin{tikzcd}
G \arrow[r, Rightarrow, "\overline{P}" below, "m" above] \arrow[d, Rightarrow, "P" right, "n" left]
& \overline{H} \arrow[d, Rightarrow, "P" right, "n" left] \\
H \arrow[r, Rightarrow, "\overline{P}" below, "m" above]
& X
\end{tikzcd}
\]
The situation of Proposition \ref{prop:moving} can be seen as moving $d$ along $\overline{d}$ and $\overline{d}$ along $d$.
The constructed derivation $\overline{H} \dder^*_P X$ is called \emph{moved variant} of $d$ and denoted by $\move(d,\overline{d})$, and the second constructed derivation $H \dder^*_{\overline{P}} X$ is called moved variant of $\overline{d}$ and denoted by $\move(\overline{d}, d)$.
\end{rem}

\begin{exa}\label{ex:moved-variant-color-dual}
As pointed out in Example \ref{ex:independence-color-dual}, $(P_{\colr},P_{\dual})$ is a parallel independent pair of sets of rules.
Therefore, the derivation $d_{\colr}$ of Example \ref{ex:color-dual} can be moved along the derivation $d_{\dual}$ of Example \ref{ex:color-dual} yielding the moved variant
\[
\move(d_{\colr},d_{\dual})=
\mbox{
\newcommand{\ens}{15pt}
\begin{tikzpicture}[baseline=-2pt]
\node (g1) {
\begin{tikzpicture}
\node (n1) [state] {};
\node (n2) [state,right=\ens of n1] {};
\node (n3) [state,below=\ens of n1] {}
	edge [-] (n1);
\node (n4) [state,right=\ens of n3] {}
	edge [-] (n2);
\node (n5) [state,below=\ens of n3] {}
	edge [-] (n3)
	edge [-,bend right] (n1);
\node (n6) [state,right=\ens of n5] {}
	edge [-] (n4)
	edge [-,bend left] (n2);
\end{tikzpicture}
};
\node (g1g2) [right=30pt of g1] {$\dder^{6}_{\addloop}$};
\node (g2) [right=45pt of g1g2] {
\begin{tikzpicture}
\node (n1) [state] {}
	edge [loop left] node [left] {\scriptsize $a$} (n1);
\node (n2) [state,right=\ens of n1] {}
	edge [loop right] node [right] {\scriptsize $a$} (n2);
\node (n3) [state,below=\ens of n1] {}
	edge [loop left] node [left] {\scriptsize $a$} (n3)
	edge [-] (n1);
\node (n4) [state,right=\ens of n3] {}
	edge [loop right] node [right] {\scriptsize $a$} (n4)
	edge [-] (n2);
\node (n5) [state,below=\ens of n3] {}
	edge [loop left] node [left] {\scriptsize $a$} (n5)
	edge [-] (n3)
	edge [-,bend right] (n1);
\node (n6) [state,right=\ens of n5] {}
	edge [loop right] node [right] {\scriptsize $a$} (n6)
	edge [-] (n4)
	edge [-,bend left] (n2);
\end{tikzpicture}
};
\node (g2g3) [right=45pt of g2] {$\dder^{2}_{\addcolor}$};
\node (g3) [right=55pt of g2g3] {
\begin{tikzpicture}
\node (n1) [state] {}
	edge [loop left] node [left] {\scriptsize $a$} (n1);
\node (n2) [state,right=\ens of n1] {}
	edge [loop right] node [right] {\scriptsize $a$} (n2);
\node (n3) [state,below=\ens of n1] {}
	edge [loop left] node [left] {\scriptsize $a$} (n3)
	edge [-] (n1);
\node (n4) [state,right=\ens of n3] {}
	edge [loop right] node [right] {\scriptsize $a$} (n4)
	edge [-] (n2);
\node (n5) [state,below=\ens of n3] {}
	edge [loop left] node [left] {\scriptsize $a$} (n5)
	edge [-] (n3)
	edge [-,bend right] (n1);
\node (n6) [state,right=\ens of n5] {}
	edge [loop right] node [right] {\scriptsize $a$} (n6)
	edge [-] (n4)
	edge [-,bend left] (n2);
\node (n7) [state,left=25pt of n3] {}
	edge [loop above] node [above] {\scriptsize $1$} (n7);
\node (n8) [state,right=25pt of n4] {}
	edge [loop above] node [above] {\scriptsize $2$} (n8);
\end{tikzpicture}
};
\node (g3g4) [right=60pt of g3] {$\dder^{6}_{\choosecolr}$};
\node (g4) [right=50pt of g3g4] {
\begin{tikzpicture}
\node (n1) [state] {};
\node (n2) [state,right=\ens of n1] {};
\node (n3) [state,below=\ens of n1] {}
	edge [-] (n1);
\node (n4) [state,right=\ens of n3] {}
	edge [-] (n2);
\node (n5) [state,below=\ens of n3] {}
	edge [-] (n3)
	edge [-,bend right] (n1);
\node (n6) [state,right=\ens of n5] {}
	edge [-] (n4)
	edge [-,bend left] (n2);
\node (n7) [state,left=\ens of n3] {}
	edge [loop above] node [above] {\scriptsize $1$} (n7)
	edge [-] (n1)
	edge [-] (n3)
	edge [-] (n5);
\node (n8) [state,right=\ens of n4] {}
	edge [loop above] node [above] {\scriptsize $2$} (n8)
	edge [-] (n2)
	edge [-] (n4)
	edge [-] (n6);
\end{tikzpicture}
};
\end{tikzpicture}
}
\]
\end{exa}

\subsection{Moving backward using sequential independence}\label{subsec:movebackward}

Backward moving can be constructed analogously to forward moving if one uses sequential independence and the $\interchange$-operation instead of parallel independence and $\conflux$. 
Another possibility is to reduce backward moving to forward moving by means of inverse derivations.

\begin{prop}
\label{prop:moving2}
Let $(P,\overline{P})$ be a sequentially independent pair of sets of rules and $\overline{d} = (G \dder^m_{\overline{P}} \overline{H})$ and $d' = (\overline{H} \dder^n_P X)$ be two derivations.
Then the iteration of \emph{interchange} applied to a $\overline{P}$-rule application followed by a $P$-rule application as long as possible yields two derivations of the form $G \dder^n_P H$ and $H \dder^m_{\overline{P}} X$ for some object $H$.
\end{prop}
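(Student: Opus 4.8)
The plan is to reduce backward moving to forward moving, which lets me reuse Proposition~\ref{prop:moving} essentially verbatim. The tool is inversion of derivations. Recall from Section~\ref{subsec:dpo} that a double pushout is symmetric: each direct derivation $G \dder_{\overline{p}} \overline{H}$ with intermediate object $\overline{Z}$ and right matching morphism $\overline{h}$ reverses to a direct derivation $\overline{H} \dder_{\overline{p}^{-1}} G$ over the inverse rule $\overline{p}^{-1} = (\overline{R} \supseteq \overline{K} \subseteq \overline{L})$, along the matching morphism $\overline{h}$ and with the same intermediate object $\overline{Z}$. Writing $\overline{P}^{-1} = \{\,\overline{p}^{-1} \mid \overline{p} \in \overline{P}\,\}$, I would invert $\overline{d} = (G \dder^m_{\overline{P}} \overline{H})$ step by step in reverse order to obtain $\overline{d}^{-1} = (\overline{H} \dder^m_{\overline{P}^{-1}} G)$. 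Then $\overline{d}^{-1}$ and $d' = (\overline{H} \dder^n_P X)$ are two derivations sharing the start object $\overline{H}$, which is exactly the input shape required by Proposition~\ref{prop:moving}.

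Next I would pin down the correspondence between the two independence notions. Take a sequentially independent $G \dder_{\overline{p}} \overline{H} \dder_p X$ with the witnessing morphisms $f' \colon \overline{R} \to Z'$ and $\overline{f} \colon L \to \overline{Z}$ satisfying $\overline{h} = m_{Z' \to \overline{H}} \circ f'$ and $g' = m_{\overline{Z} \to \overline{H}} \circ \overline{f}$. After inverting the first step, $\overline{H} \dder_{\overline{p}^{-1}} G$ has left matching morphism $\overline{h}$ and source-side morphism $m_{\overline{Z} \to \overline{H}}$, while $\overline{H} \dder_p X$ has left matching morphism $g'$ and source-side morphism $m_{Z' \to \overline{H}}$. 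The very same $f'$ and $\overline{f}$ now witness that the pair $\overline{H} \dder_{\overline{p}^{-1}} G$ and $\overline{H} \dder_p X$ is parallel independent. Conversely, any such pair out of a common object uninverts its first leg to a sequential composition $G \dder_{\overline{p}} \overline{H} \dder_p X$, which is sequentially independent by hypothesis, so the pair is parallel independent. This shows $(P, \overline{P}^{-1})$ is a parallel independent pair of rule sets. The same bookkeeping shows, at the level of one square, that applying $\conflux$ to the inverted first step and the $P$-step and then re-inverting the resulting $\overline{P}^{-1}$-step reproduces exactly the $\interchange$ of the original sequential pair.

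With this in hand I would apply Proposition~\ref{prop:moving} to $d'$ and $\overline{d}^{-1}$, both issuing from $\overline{H}$. Iterating $\conflux$ yields an object $Y$ together with $G \dder^n_P Y$ and $X \dder^m_{\overline{P}^{-1}} Y$. Re-inverting the second derivation gives $Y \dder^m_{\overline{P}} X$, and setting $H := Y$ produces the two required derivations $G \dder^n_P H$ and $H \dder^m_{\overline{P}} X$. By the single-square correspondence above, this grid is precisely the one obtained by iterating $\interchange$ on the original configuration, and its well-definedness up to isomorphism is inherited from the grid argument at the end of Proposition~\ref{prop:moving}.

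The main obstacle is not the combinatorics of the grid, which is imported wholesale from Proposition~\ref{prop:moving}, but the two-way translation under inversion: one must check carefully that reversing a step sends its right matching morphism to the left matching morphism of the inverse step with the same intermediate object, and that this carries the factorizations defining sequential independence to those defining parallel independence and back. Once this is verified, the rest is transport along inversion. Alternatively, I could prove the statement directly by mimicking the induction of Proposition~\ref{prop:moving}, inducting on $n$ and $m$ and using $\interchange$ in place of $\conflux$ to bubble each $P$-step leftward past the $\overline{P}$-steps; this avoids inversion at the price of repeating the full inductive bookkeeping.
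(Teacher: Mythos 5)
Your proposal is correct and takes essentially the same route as the paper's own proof: both reduce backward moving to forward moving by inverting $\overline{d}$ to $(\overline{H} \dder^m_{(\overline{P})^{-1}} G)$, observing that sequential independence of $(P,\overline{P})$ gives parallel independence of $(P,(\overline{P})^{-1})$ because right matching morphisms become left matching morphisms of the inverse steps, applying Proposition~\ref{prop:moving} to $d'$ and $(\overline{d})^{-1}$, and re-inverting the resulting $(\overline{P})^{-1}$-derivation. The only difference is one of detail: you explicitly verify the translation of the independence witnesses $f'$ and $\overline{f}$ under inversion, which the paper treats as obvious.
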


\begin{proof}
The sequential independence of $(P,\overline{P})$ implies, obviously, the parallel independence of $(P,(\overline{P})^{-1})$ as right matching morphisms of applications of $\overline{P}$-rules become left matching morphisms of the corresponding inverse rule applications, and an application of $\interchange$ becomes an application of $\conflux$.
Therefore, Proposition \ref{prop:moving} can be applied to $d'$ and $(\overline{d})^{-1}$ (deriving $\overline{H}$ into $G$) yielding $\move(d',(\overline{d})^{-1})$ of the form $G \dder^n_P H$ and $\move((\overline{d})^{-1},d')$ of the form $X \dder^m_{(\overline{P})^{-1}} H$ such that $\move(d',(\overline{d})^{-1})$ and $\move((\overline{d})^{-1},d')^{-1}$ prove the proposition.
\end{proof}

\begin{rem}
In diagrammatic form, the situation of Proposition \ref{prop:moving2} looks as follows:
\[
\begin{tikzcd}
G \arrow[r, Rightarrow, "\overline{P}" below, "m" above]
& \overline{H} \arrow[d, Rightarrow, "P", "n" left]\\
& X
\end{tikzcd}
~~~~~~
\leadsto
~~~~~~
\begin{tikzcd}
G \arrow[r, Rightarrow, "\overline{P}" below, "m" above] \arrow[d, Rightarrow, "P" right, "n" left]
& \overline{H} \arrow[d, Rightarrow, "P" right, "n" left] \\
H \arrow[r, Rightarrow, "\overline{P}" below, "m" above]
& X
\end{tikzcd}
\]
This can be seen as moving $d'$ backward along $\overline{d}$.
The resulting derivation $\move(d',(\overline{d})^{-1})$ is called \emph{backward moved variant} of $d'$ denoted by $\evom(d',\overline{d})$.
\end{rem}

\begin{exa}\label{ex:evom-color-dual}
As $(P_{\colr},P_{\dual})$ is not only parallel independent, but also sequentially independent, one can move the derivation $\move(d_{\colr},d_{\dual})$ backward along $d_{\dual}$ yielding $\evom(\move(d_{\colr},d_{\dual}),d_{\dual}) = d_{\colr}$.
\end{exa}

\section{Accessed Parts, Restrictions and Spines}
\label{sec:accessed-part-and-spine}
In this section, we recall another long-known concept: the restriction
of a derivation with respect to a monomorphism into the start object (cf. \cite{Ehrig-Ehrig-Prange.ea:06,Ehrig-Ermel-Golas-Hermann:15}).
This is always possible if the accessed part factors through this
monomorphism where the accessed part is a monomorphism into the start
object itself that identifies the parts of the start
object that are accessed by the left matching morphism of some derivation step.
In particular, the derivation can be restricted to the accessed part.
We call this particular restriction the spine of the derivation.

The constructions in this section are based on the following assumptions.
\begin{asm}
  \label{assumption:accessed-part}\hfill
    \begin{enumerate}
    \item
    Let $d = (G \dder_P^* X)$ be a derivation with two cases
      $G \dder_P^0 X$ only if $G = X$, and
      $(G \dder_P^{n+1} X) = (G \dder_p H \dder_P^{n} X)$ for $n \in \N$ where the first step is defined by the double pushout
\[
\begin{tikzcd}[column sep=huge]
  L \arrow[d, "g" left] \arrow[dr, phantom, "(1)" description] & K \arrow[dr, phantom, "(2)" description] \arrow[d, "z" left] \arrow[l, "l" below] \arrow[r, "r" below] & R \arrow[d, "h" left] \\
  G & Z \arrow[l, "m_{Z \to G}" below] \arrow[r, "m_{Z \to H}" below] & H
\end{tikzcd}
\]
with $p = (L \xleftarrow[l]{} K \xrightarrow[r]{} R) \in P$.
The remaining derivation is denoted by $\hat{d} = \tail(d) = (H \dder_P^n X)$.

\item
  Let $(L \xrightarrow[g]{} G) = (L \xrightarrow[e_{L \to g(L)}]{} g(L) \xrightarrow[m_{g(L) \to G}]{} G)$ be the epi-mono factorization of $g$.

\item
  Let $G' \xrightarrow[m_{G' \to G}]{} G$ be a monomorphism for some object $G'$.
  \end{enumerate}
\end{asm}

\subsection{Accessed Parts}
We start by constructing the accessed part of a derivation.
In the category of directed edge-labeled graphs, it contains all vertices and edges of the start graph that are accessed by a left matching morphism of some derivation step.
It can be defined as a subgraph of the start graph $G$ by induction on the length of the derivation.

\base
  $\acc(G\dder^0_P X)=\emptyset$, and

\step
  $\acc(G \dder_{p} H\dder^n_P X )=(\acc(H\dder^n_P X) \cap Z)\cup g(L)$
  where
$\acc(H\dder^n_P X) \subseteq H$ is given by induction hypothesis, 
$L$ is the left-hand side of $p$, $g$ is the first matching morphism, and $Z$ is the intermediate graph of the first rule application.

\begin{exa}\label{ex:accessed-color}
Considering $d_{\colr}$ in Example \ref{ex:color-dual}, $\acc(d_{\colr}) = $
{
\newcommand{\ens}{8pt}
\begin{tikzpicture}[baseline=-11pt]
\node (n1) [state] {};
\node (n2) [state,right=\ens of n1] {};
\node (n3) [state,below=\ens of n1] {};
\node (n4) [state,right=\ens of n3] {};
\node (n5) [state,below=\ens of n3] {};
\node (n6) [state,right=\ens of n5] {};
\node (foot) [below=0pt of n5.south, anchor=north] {};
\node (head) [above=0pt of n1.north, anchor=south] {};
\end{tikzpicture}
}
as all vertices of $K_{3,3}$ are accessed by the applications of the rules $\addloop$, $\choosecolr(1)$, and $\choosecolr(2)$, but none of the edges is accessed.
\end{exa}

The definition can be lifted onto the level of adhesive categories by replacing subgraphs by monomorphisms, the empty graph by the strict initial object, the intersection by a pullback of monomorphisms, and the union by a pushout of a pullback of monomorphisms.

\begin{construction}
  \label{construction:accessed-part}
  The \emph{accessed part} $\acc(d)\colon \ACC(d) \to G$ of $d$ is constructed by induction on the length of $d$.  

  \base
    $\acc(G \dder_P^0 X) = (\emptyset_G\colon \emptyset \to G)$, and

  \step
    $\acc( G \dder_P^{n+1} X) = \acc(G \dder_p H \dder_P^n X)\colon \ACC(d) \to G$ is constructed in four steps
    where $\acc(\hat{d}) = \acc(\tail(d))\colon \ACC(\hat{d}) \to H$ is the accessed part of $\hat{d}$ using the induction hypothesis for a derivation of length $n$.

\begin{enumerate}
\item
  \label{item:construction-accessed-part-pb-one}
  Construct the pullback
  \[
  \begin{tikzcd}[column sep=huge]
  Z \cap \ACC(\hat{d}) \arrow[d, "m_Z" left] \arrow[r, "m_{\ACC(\hat{d})}" below] & \ACC(\hat{d}) \arrow[d, "\acc(\hat{d})" left] \\
  Z \arrow[r, "m_{Z \to H}" below] & H
  \end{tikzcd}
\]
\item
  \label{item:construction-accessed-part-pb-two}
  Construct the pullback
\[
  \begin{tikzcd}[column sep=huge]
    g(L) \cap (Z \cap \ACC(\hat{d})) \arrow[dd, "m_{g(L)}"] \arrow[r, "m_{Z \cap \ACC(\hat{d})}" below] & Z \cap \ACC(\hat{d}) \arrow[d, "m_Z" left] \\
      & Z \arrow[d, "m_{Z \to G}" left] \\
  g(L) \arrow[r, "m_{g(L) \to G}" below] & G
  \end{tikzcd}
\]

\item
    \label{item:construction-accessed-part-po}
    Construct the pushout
\[
  \begin{tikzcd}[column sep=huge]
    g(L) \cap (Z \cap \ACC(\hat{d})) \arrow[d, "m_{g(L)}"] \arrow[r, "m_{Z \cap \ACC(\hat{d})}" below] & Z \cap \ACC(\hat{d}) \arrow[d, "\overline{m}_{Z \cap \ACC(\hat{d})}" left] \\
  g(L) \arrow[r, "\overline{m}_{g(L)}" below] & g(L) \cup (Z \cap \ACC(\hat{d}))
  \end{tikzcd}
\]
\item
  \label{item:construction-accessed-part-mediate}
  Using the commutativity of the pullback in Step \ref{item:construction-accessed-part-pb-two}, the pushout in Step \ref{item:construction-accessed-part-po} induces a mediating morphism $\acc(d)\colon \ACC(d) \to G$ with $\ACC(d) = g(L) \cup (Z \cap \ACC(\hat{d}))$ and the commutative diagrams
  \vspace*{10pt}
\[
  \begin{tikzcd}[column sep=huge]
    & Z \cap \ACC(\hat{d}) \arrow[d, "\overline{m}_{Z \cap \ACC(\hat{d})}" left] \arrow[dr, "m_{Z}" left] & \\
    g(L) \arrow[r, "\overline{m}_{g(L)}" above] \arrow[drr, bend right = 15, "m_{g(L) \to G}" above] & \ACC(d) \arrow[dr, "\acc(d)" above] & Z \arrow[d, "m_{Z \to G}"] \\
    & & G
  \end{tikzcd}
\]
\end{enumerate}
\vspace*{10pt}
\noindent Altogether the situation is depicted in the following diagram.
\vspace*{10pt}
\[
\begin{tikzcd}[column sep=huge]
  L \arrow[ddd, controls={+(-3.5,0) and +(-3.5,0)}, "g" left] \arrow[d, "e_{L \to g(L)}"] & K \arrow[l, "l" below] \arrow[r, "r" below] \arrow[ddd, controls={+(3,0) and +(3,0)}, "z" left]  & R \arrow[ddd, bend left = 60, "h" left] \\
  g(L) \arrow[dd, bend right = 60, "m_{g(L) \to G}" above left] \arrow[d, "\overline{m}_{g(L)}" right] & g(L) \cap (Z \cap \ACC(\hat{d})) \arrow[d, "m_{Z \cap \ACC(\hat{d})}" left] \arrow[l, "m_{g(L)}" above] & \\
  \ACC(d) \arrow[d, "\acc(d)"]  &  Z \cap \ACC(\hat{d}) \arrow[d, "m_Z" left]  \arrow[l, "\overline{m}_{Z \cap \ACC(\hat{d})}" above] \arrow[r, "m_{\ACC(\hat{d})}" near end] & \ACC(\hat{d}) \arrow[d, "\acc(\hat{d})" left] \\
  G  & Z \arrow[l, "m_{Z \to G}" below] \arrow[r, "m_{Z \to H}" below] & H
\end{tikzcd}
\]
\end{construction}
\vspace*{10pt}

\begin{rem}
Observe that $acc(d)$ is a monomorphism by adhesivity (cf. Fact~\ref{fact:mediating-morphism-of-po-of-pb-is-mono}).
\end{rem}

\subsection{Restrictions}

In the category of graphs, it is known that the derivation $d$ can be restricted to every subgraph $G'$ of $G$
with $\ACC(d) \subseteq G'$ by clipping off vertices and edges outside of $G'$ (or their isomorphic counterparts) from all graphs of $d$ keeping all matches invariant in this way (cf.~\cite{Ehrig-Ehrig-Prange.ea:06}).
The construction can be generalized to a monomorphism from $G'$ to $G$
if $\acc(d)\colon \ACC(d) \to G$ factors through it on the level of adhesive categories.
The restriction is constructed by induction on the length of the derivation $d$.
To make sure that the construction works, we use
two lemmata where the first lemma concerns the restriction of a direct
derivation, and the second lemma guarantees that the restriction can be
continued after the restriction of the first derivation step.

The Restriction Theorem 6.18 in~\cite{Ehrig-Ehrig-Prange.ea:06} concerns only a single derivation step similarly to the first lemma.
Restriction of arbitrary derivations is defined in~\cite{Ehrig-Ehrig-Prange.ea:06} as inverse to extension without an explicit construction as provided in the following.
\vspace*{10pt}
\begin{lem}
  \label{lemma:restriction}
  Let $g'\colon L \to G'$ be a morphism and $m_{G' \to G}\colon G' \to G$ be a monomorphism for some object $G'$ such that $g = m_{G' \to G} \circ g'$.
  Then the double pushout defining $G \dder_p H$ can be\newpage \noindent decomposed into
  \[
\begin{tikzcd}[column sep=huge]
  L \arrow[dr, phantom, "(4)" description] \arrow[dd, controls={+(-2,0) and +(-2,0)}, "g" left] \arrow[d, "g'"] & K \arrow[dr, phantom, "(5)" description] \arrow[l, "l" below] \arrow[r, "r" below] \arrow[d, "z'"] & R \arrow[dd, controls={+(+2,0) and +(+2,0)}, "h" right] \arrow[d, "h'"]\\
  G' \arrow[dr, phantom, "(3)" description] \arrow[d, "m_{G' \to G}" left] & Z' \arrow[dr, phantom, "(6)" description] \arrow[l, "m_{Z' \to G'}" below] \arrow[r, "m_{Z' \to H'}" below] \arrow[d, "m_{Z' \to Z}" near end] & H' \arrow[d, "m_{H' \to H}"] \\
  G  & Z \arrow[l, "m_{Z \to G}" below] \arrow[r, "m_{Z \to H}" below] & H
\end{tikzcd}
\]
such that $z = m_{Z' \to Z} \circ z'$, $h = m_{H' \to H} \circ h'$, (4) and (5) are pushouts and (3) and (6) are pullbacks.
In particular, (4) and (5) define a direct derivation $G' \dder_p H'$.
\end{lem}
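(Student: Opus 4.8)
The plan is to synthesize the middle row $G' \xleftarrow{} Z' \xrightarrow{} H'$ of the decomposition out of the bottom row $G \xleftarrow{} Z \xrightarrow{} H$, pulling back along $m_{G' \to G}$ on the deletion side and transporting the addition side by a pushout. As a preliminary remark, $m_{Z \to G}$ and $m_{Z \to H}$ are monomorphisms: they arise as the pushouts of the rule monomorphisms $l$ and $r$ in squares (1) and (2), and monomorphisms are stable under pushouts.

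First I would construct square (3) as the pullback of $m_{Z \to G}$ along $m_{G' \to G}$, producing $Z'$ together with $m_{Z' \to G'}$ and $m_{Z' \to Z}$, both monomorphisms since monomorphisms are pullback-stable. Because $g = m_{G' \to G} \circ g'$ and square (1) commutes, the morphisms $g' \circ l \colon K \to G'$ and $z \colon K \to Z$ satisfy $m_{G' \to G} \circ (g' \circ l) = m_{Z \to G} \circ z$, so the universal property of (3) yields a unique $z' \colon K \to Z'$ with $m_{Z' \to G'} \circ z' = g' \circ l$ and $m_{Z' \to Z} \circ z' = z$; the second identity is the required factorization $z = m_{Z' \to Z} \circ z'$. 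The commuting square (4) stacked on top of (3) then has square (1) as outer rectangle, and among its arrows exactly the five morphisms $l$, $m_{Z' \to G'}$, $m_{Z' \to Z}$, $m_{Z \to G}$, $m_{G' \to G}$ are monomorphisms, with $g'$ and $z'$ the two remaining ones. Since the outer rectangle (1) is a pushout and the bottom square (3) is a pullback, Fact~\ref{fact:po-pb-decomposition} shows that the top square (4) is a pushout, and moreover that (3) and (4) are simultaneously pushout and pullback. This is the crux of the argument: it certifies that $Z'$ is a genuine pushout complement of $K \xrightarrow{l} L$ along the restricted match $g'$.

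For the right half I would dualize. Define $H'$ as the pushout (5) of $R \xleftarrow{r} K \xrightarrow{z'} Z'$, obtaining $h' \colon R \to H'$ and a monomorphism $m_{Z' \to H'}$ (a pushout of the monomorphism $r$). Since $h \circ r = m_{Z \to H} \circ z = m_{Z \to H} \circ m_{Z' \to Z} \circ z'$, the universal property of (5) gives a unique $m_{H' \to H}$ with $m_{H' \to H} \circ h' = h$, the required factorization, and $m_{H' \to H} \circ m_{Z' \to H'} = m_{Z \to H} \circ m_{Z' \to Z}$. Stacking (5) on top of (6) yields square (2) as outer rectangle; as (5) and (2) are pushouts, Fact~\ref{fact:composition-lemma-po} forces (6) to be a pushout as well. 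Finally, the span underlying (6) has both legs $m_{Z' \to Z}$ and $m_{Z' \to H'}$ monic, so (6) is a pushout along a monomorphism and hence a pullback by Fact~\ref{fact:po-along-monos-are-pb} (this incidentally shows $m_{H' \to H}$ is a monomorphism). With (4), (5) pushouts, (3), (6) pullbacks, and both factorizations established, squares (4)--(5) constitute the double pushout of a direct derivation $G' \dder_p H'$.

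I expect the main obstacle to be the left half: one must realize that $Z'$ has to be introduced as the pullback (3) rather than guessed, and then recognize Fact~\ref{fact:po-pb-decomposition} as precisely the instrument that certifies the restricted pushout complement. Once (4) is settled, the right half is essentially formal, relying only on the pushout composition lemma and on pushouts along monomorphisms being pullbacks. The only bookkeeping to watch is keeping track of which morphisms are monomorphisms, so that Fact~\ref{fact:po-pb-decomposition} applies and so that the final conversion of (6) from a pushout to a pullback is justified.
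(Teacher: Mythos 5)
Your proposal is correct and takes essentially the same route as the paper's proof: construct (3) as a pullback, obtain $z'$ as the mediating morphism into it, apply the pushout--pullback decomposition (Fact~\ref{fact:po-pb-decomposition}) to conclude (4) is a pushout, construct (5) as a pushout, obtain $m_{H' \to H}$ by its universal property, deduce that (6) is a pushout by Fact~\ref{fact:composition-lemma-po}, and finally that (6) is a pullback since pushouts along monomorphisms are pullbacks. The only difference is presentational: you make explicit the monomorphism bookkeeping (pullback stability for $m_{Z' \to G'}, m_{Z' \to Z}$, pushout stability for $m_{Z \to G}, m_{Z \to H}$) that the paper's terser proof leaves implicit.
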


\begin{proof}
  (3) can be constructed as pullback.
  Using the commutativity of (1) in the assumption,
  the universal property of (3) yields the mediating morphism $z'$ such that $z = m_{Z' \to Z} \circ z'$ and (4) commutes.
  As (1) is a pushout and (3) a pullback, the pushout-pullback decomposition property (Fact~\ref{fact:po-pb-decomposition}) implies that (4) is a pushout.
  Then (5) can be constructed as pushout.
  The universal property of pushouts gives a mediating morphism $m_{H' \to H} \colon H' \to H$ satisfying
  $h = m_{H' \to H} \circ h'$ and $m_{H' \to H} \circ m_{Z' \to H'} = m_{Z \to H} \circ m_{Z' \to Z}$.
  Using Fact~\ref{fact:composition-lemma-po}, $(6)$ is a pushout.
  As a pushout of monomorphisms, (6) is a pullback, too.
  Moreover, the pushouts (4) and (5) define a direct derivation $G' \dder_p H'$.
\end{proof}

\begin{defi}
  The derivation $G' \dder_p H'$ given in Lemma~\ref{lemma:restriction} is called \emph{single-step restriction} of $G \dder_p H$ wrt $m_{G' \to G}\colon G' \to G$, denoted by $(G \dder_p H)|m_{G' \to G}$.
\end{defi}

\begin{rem}
  \label{remark:restriction}\hfill
  \begin{enumerate}
  \item
    \label{item:remark-restriction-one}
    According to the assumed epi-mono factorization of $g$ and the commutativity of the lower triangle in Step~\ref{item:construction-accessed-part-mediate} of the Construction~\ref{construction:accessed-part},
    we have
    \[
    g = m_{g(L) \to G} \circ e_{L \to g(L)} = \acc(d) \circ \overline{m}_{g(L)} \circ e_{L \to g(L)}.
    \]
    Therefore, the single-step restriction $(G \dder_p H)|acc(d)$ is defined wrt $\acc(d)\colon \ACC(d) \to G$, in particular.
    Moreover, one gets the following decomposition of the double pushout defining $G \dder_p H$:
  \[
\begin{tikzcd}[column sep=huge]
  L \arrow[dr, phantom, "(\widetilde{4})" description] \arrow[dd, controls={+(-2,0) and +(-2,0)}, "g" left] \arrow[d, "\widetilde{g}"] & K \arrow[dr, phantom, "(\widetilde{5})" description] \arrow[l, "l" below] \arrow[r, "r" below] \arrow[d, "\widetilde{z}"] & R \arrow[dd, controls={+(2,0) and +(2,0)}, "h" right] \arrow[d, "\widetilde{h}"]\\
  \ACC(d) \arrow[dr, phantom, "(\widetilde{3})" description] \arrow[d, "\acc(d)" right] & \widetilde{Z} \arrow[dr, phantom, "(\widetilde{6})" description] \arrow[l, "m_{\widetilde{Z} \to \ACC(d)}" below] \arrow[r, "m_{\widetilde{Z} \to \widetilde{H}}" below] \arrow[d, "m_{\widetilde{Z} \to Z}" near end] & \widetilde{H} \arrow[d, "m_{\widetilde{H} \to H}"] \\
  G  & Z \arrow[l, "m_{Z \to G}" below] \arrow[r, "m_{Z \to H}" below] & H\\
\end{tikzcd}
\]
    where $(\widetilde{3})$ is a pullback, $(\widetilde{4}), (\widetilde{5})$ and $(\widetilde{6})$ are pushouts, and $(\widetilde{6})$ is also pullback.
    
  \item
    \label{item:remark-restriction-two}
    Moreover, if $\acc(d)$ factors through $m_{G' \to G}$, then $(\widetilde{3})$ and $(\widetilde{6})$ decompose into the following pullbacks
      \[
\begin{tikzcd}[column sep=3cm]
  \ACC(d) \arrow[dr, phantom, "(3'')"] \arrow[d, "m_{\ACC(d) \to G'}" right] \arrow[dd, bend right, "\acc(d)" left] & \widetilde{Z} \arrow[dr, phantom, "(6'')"] \arrow[l, "m_{\widetilde{Z} \to \ACC(d)}" below] \arrow[r, "m_{\widetilde{Z} \to \widetilde{H}}" below] \arrow[d, "m_{\widetilde{Z} \to Z'}"] & \widetilde{H} \arrow[d, "m_{\widetilde{H} \to H'}"] \arrow[dd, controls={+(2,0) and +(2,0)}, "m_{\widetilde{H} \to H}" right] \\
    G' \arrow[dr, phantom, "(3')"] \arrow[d, "m_{G' \to G}" right] & Z' \arrow[dr, phantom, "(6')"] \arrow[l, "m_{Z' \to G'}" below] \arrow[r, "m_{Z' \to H'}" below] \arrow[d, "m_{Z' \to Z}"] & H' \arrow[d, "m_{H' \to H}"] \\
  G  & Z \arrow[l, "m_{Z \to G}" below] \arrow[r, "m_{Z \to H}" below] & H
\end{tikzcd}
\]
where $(3')$ is constructed as pullback, $(3'')$ is a pullback, because $(\widetilde{3})$ and $(3')$ are using Fact~\ref{fact:composition-lemma-pb} and $(6'')$ is constructed as pushout yielding a mediating morphism $m_{H' \to H} \colon H' \to H$ such that $(6')$ commutes and $m_{\widetilde{H} \to H} = m_{H' \to H} \circ m_{\widetilde{H} \to H'}$.
Finally,
$(6')$ is a pushout using Fact~\ref{fact:composition-lemma-po}, and both $(6')$ and $(6'')$ are pullbacks using Fact~\ref{fact:po-along-monos-are-pb}. 
  \end{enumerate}
\end{rem}

\begin{lem}
  \label{lemma:restriction-factors}\hfill
  \begin{enumerate}
    \item
    Consider the accessed parts \mbox{$\acc(d)\colon \ACC(d) \to G$} and \linebreak $\acc(\hat{d})\colon \ACC(\hat{d}) \to H$.
    Let $\ACC(d) \dder_p \widetilde{H}$ be the restriction of $G \dder_p H$ wrt $\acc(d)$.
    Then there is a monomorphism $m_{\ACC(\hat{d}) \to \widetilde{H}}\colon \ACC(\hat{d}) \to \widetilde{H}$ such that
    \[
    \acc(\hat{d}) = m_{\widetilde{H} \to H} \circ m_{\ACC(\hat{d}) \to \widetilde{H}}
    \]
    where $m_{\widetilde{H} \to H}$ is the monomorphism given by Lemma~\ref{lemma:restriction} in connection with Remark~\ref{remark:restriction}.

  \item
    \label{item:lemma-restriction-factors-two}
    Let $\acc(d)$ factor through $m_{G' \to G}$.
    Then there is a monomorphism \linebreak $m_{\ACC(\hat{d}) \to H'}\colon \ACC(\hat{d}) \to H'$ such that
    \[
    \acc(\hat{d}) = m_{H' \to H} \circ m_{\ACC(\hat{d}) \to H'}
    \]
    where $m_{H' \to H}$ is the monomorphism given by Lemma~\ref{lemma:restriction} in connection with Remark~\ref{remark:restriction}.
  \end{enumerate}
\end{lem}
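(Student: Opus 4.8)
The plan is to prove part (1) by a subobject computation inside $H$, and then to obtain part (2) from part (1) by composing with the factorization supplied by Remark~\ref{remark:restriction}(\ref{item:remark-restriction-two}). Throughout I would read the relevant objects as subobjects of $H$ via their monomorphisms into $H$ and exploit that in an adhesive category the pullback of two monomorphisms is their meet and the pushout of such a pullback is their join.

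First I would fix the ambient decompositions of $H$ and $\widetilde{H}$. Square $(2)$ of the double pushout defining $G \dder_p H$ is a pushout along the monomorphism $r$, hence a pullback by Fact~\ref{fact:po-along-monos-are-pb}; this identifies the pullback of $m_{Z \to H}$ and $h$ with $K$, so that $Z \cap R \iso K$ and, the pushout of this pullback being $H$ again, $Z \cup R \iso H$ as subobjects of $H$. Applying the same reasoning to the pushout $(\widetilde{5})$ of $\widetilde{Z} \leftarrow K \rightarrow R$ from Remark~\ref{remark:restriction}(\ref{item:remark-restriction-one}) gives $\widetilde{Z} \cap R \iso K$ and $\widetilde{Z} \cup R \iso \widetilde{H}$, where $\widetilde{Z} \iso \ACC(d) \cap Z$ is the intermediate object of the single-step restriction along $\acc(d)$, realized by the pullback $(\widetilde{3})$.

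Next I would locate the two pieces of $\ACC(\hat{d})$. Distributivity (Fact~\ref{fact:distributivity-lemma}) applied to $H \iso Z \cup R$ yields $\ACC(\hat{d}) \iso (\ACC(\hat{d}) \cap Z) \cup (\ACC(\hat{d}) \cap R)$. The second summand lies in $R$ and hence in $\widetilde{Z} \cup R \iso \widetilde{H}$. For the first summand, note that $W := \ACC(\hat{d}) \cap Z$ is exactly the object $Z \cap \ACC(\hat{d})$ produced in Step~\ref{item:construction-accessed-part-pb-one} of Construction~\ref{construction:accessed-part}; by the pushout of Step~\ref{item:construction-accessed-part-po} the injection $\overline{m}_{Z \cap \ACC(\hat{d})}$ embeds $W$ into $\ACC(d)$, and the commuting triangle of Step~\ref{item:construction-accessed-part-mediate} makes this embedding compatible over $G$ with the inclusion $W \to Z \to G$. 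The universal property of the pullback $(\widetilde{3})$ then factors $W$ through $\widetilde{Z} \iso \ACC(d) \cap Z$, so $W \subseteq \widetilde{Z} \subseteq \widetilde{H}$. Both summands thus lie below $\widetilde{H}$, and realizing the join $\ACC(\hat{d})$ as the pushout whose two legs are $W \to \widetilde{Z} \to \widetilde{H}$ and $\ACC(\hat{d}) \cap R \to R \to \widetilde{H}$ — which agree on the overlap $\ACC(\hat{d}) \cap K$ since $(\widetilde{5})$ commutes — produces the mediating morphism $m_{\ACC(\hat{d}) \to \widetilde{H}}$. It satisfies $\acc(\hat{d}) = m_{\widetilde{H} \to H} \circ m_{\ACC(\hat{d}) \to \widetilde{H}}$ by construction, and it is a monomorphism because $\acc(\hat{d})$ is one and monomorphisms left-cancel.

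For part (2), since $\acc(d)$ factors through $m_{G' \to G}$, Remark~\ref{remark:restriction}(\ref{item:remark-restriction-two}) gives $m_{\widetilde{H} \to H'}$ with $m_{\widetilde{H} \to H} = m_{H' \to H} \circ m_{\widetilde{H} \to H'}$, so I would simply set $m_{\ACC(\hat{d}) \to H'} := m_{\widetilde{H} \to H'} \circ m_{\ACC(\hat{d}) \to \widetilde{H}}$, which is a monomorphism and satisfies $\acc(\hat{d}) = m_{H' \to H} \circ m_{\ACC(\hat{d}) \to H'}$. The main obstacle is the bookkeeping in part (1): making the informal subobject inclusions precise inside the adhesive framework — in particular pinning down $Z \cap R \iso K$ and $\widetilde{Z} \cap R \iso K$ via Fact~\ref{fact:po-along-monos-are-pb}, deriving $W \subseteq \widetilde{Z}$ from the accessed-part construction, and verifying that the two partial morphisms into $\widetilde{H}$ coincide on $\ACC(\hat{d}) \cap K$ so that the universal property of the union genuinely applies.
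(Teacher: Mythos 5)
Your overall strategy---decompose $\ACC(\hat{d})$ by distributivity into the part lying in $Z$ and the part lying in the right-hand side's image, factor each piece through $\widetilde{H}$, and obtain $m_{\ACC(\hat{d}) \to \widetilde{H}}$ from the universal property of the union---is structurally the same as the paper's proof, and your part (2), composing with $m_{\widetilde{H} \to H'}$ from Remark~\ref{remark:restriction}(\ref{item:remark-restriction-two}), coincides with the paper's argument exactly. Your monicity argument (left-cancellation against the mono $\acc(\hat{d})$) is even a bit more economical than the paper's appeal to Fact~\ref{fact:mediating-morphism-of-po-of-pb-is-mono}.

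There is, however, a genuine gap: you treat $R$ as a subobject of $H$ (and of $\widetilde{H}$) via $h$ (resp.\ $\widetilde{h}$), writing $Z \cap R \iso K$, $Z \cup R \iso H$, $\ACC(\hat{d}) \cap R$, and an overlap $\ACC(\hat{d}) \cap K$. In this paper's setting the matching morphism $g\colon L \to G$ is an arbitrary morphism---only the rule span $l,r$ consists of monomorphisms---so $z$, and hence the right matching morphism $h$ (the pushout of $z$ along $r$), need not be monic. Then $R$ is simply not a subobject of $H$: the operations $\cap$ and $\cup$ are not defined for it, Fact~\ref{fact:distributivity-lemma} (stated for three monomorphisms) cannot be invoked to get $\ACC(\hat{d}) \iso (\ACC(\hat{d}) \cap Z) \cup (\ACC(\hat{d}) \cap R)$, and even after passing to images the identification $Z \cap h(R) \iso K$ fails in general (in $\cat{Graphs}$, when $z$ identifies items, this intersection is the image $z(K)$, not $K$). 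This is precisely why the paper works throughout with the epi-mono factorization of $h$, forming its pullbacks $(PB1)$ and $(PB2)$ with the monomorphism $m_{h(R) \to H}$, and---a step your sketch elides with ``the second summand lies in $R$ and hence in $\widetilde{H}$''---proves separately that $m_{h(R) \to H}$ factors through $m_{\widetilde{H} \to H}$, using the uniqueness of epi-mono factorizations applied to $h = m_{\widetilde{H} \to H} \circ \widetilde{h}$. Once every occurrence of $R$ as a subobject is replaced by $h(R)$ (resp.\ $\widetilde{h}(R)$) and that factorization step is supplied, your argument becomes correct and essentially identical to the paper's; note that the agreement of the two legs on the overlap is then obtained, as in the paper, from the commutativity of $(PB2)$ together with cancellation of the monomorphism $m_{\widetilde{H} \to H}$, rather than from the commutativity of $(\widetilde{5})$ alone.
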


\begin{proof}
  Statement 1.
  Consider the following pullbacks
  \[
\begin{tikzcd}[column sep=large]
  (Z \cap \ACC(\hat{d})) \cap (\ACC(\hat{d}) \cap h(R)) \arrow[r, "m_3"] \arrow[d, "m_4"] \arrow[dr, phantom, "(PB2)"]
  & \ACC(\hat{d}) \cap h(R) \arrow[ddr, phantom, "(PB1)"] \arrow[dr, "m_1"] \arrow[d, "m_2"] & \\
  Z \cap \ACC(\hat{d}) \arrow[drr, bend right = 10, "m_{Z \to H} \circ m_Z"] & \ACC(\hat{d}) \arrow[dr, "\acc(\hat{d})"]  & h(R) \arrow[d, "m_{h(R) \to H}"]\\
                            & & H
\end{tikzcd}
\]
where the underlying monomorphisms of $(PB1)$ are the accessed part $\acc(\hat{d})$ of $\hat{d}$ and the monomorphism of the epi-mono factorization of $h\colon R \to H$ and the underlying monomorphism of $(PB2)$ are $m_{h(R) \to H} \circ m_1$ and $m_{Z \to H} \circ m_Z$ given in Construction~\ref{construction:accessed-part} Step~\ref{item:construction-accessed-part-pb-one} and in $(PB1)$.
Let 
  \[
  \begin{tikzcd}
    (Z \cap \ACC(\hat{d})) \cap (\ACC(\hat{d}) \cap h(R))  \arrow[r, "m_3"] \arrow[d, "m_4"] \arrow[dr, phantom, "(PO)"]
& \ACC(\hat{d}) \cap h(R) \arrow[d, "m_6"] \\
\ACC(\hat{d}) \cap Z \arrow[r, "m_5"] 
& (Z \cap \ACC(\hat{d})) \cup (\ACC(\hat{d}) \cap h(R))
\end{tikzcd}
\]
be the pushout corresponding to $(PB2)$.
Using the distributivity of pullbacks of monomorphisms and the corresponding pushouts as well as the isomorphisms $H \iso Z \cup h(R)$, which is according to the Fact~\ref{fact:distributivity-lemma}, and $\ACC(\hat{d}) \cap H \iso \ACC(\hat{d})$, which holds trivially because $\ACC(\hat{d})$ is a subobject of $H$, one gets
\[(Z \cap \ACC(\hat{d})) \cup (\ACC(\hat{d}) \cap h(R)) \iso \ACC(\hat{d}) \cap (Z \cup h(R)) \iso \ACC(\hat{d}) \cap H \iso \ACC(\hat{d}).\]
Therefore, one can replace the pushout object in (PO) by $\ACC(\hat{d})$.

Next, one can show that the monomorphisms $m_{Z \to H} \circ m_Z$ and $m_{h(R) \to H} \circ m_1$ factor through $m_{\widetilde{H} \to H}$ given in Remark~\ref{remark:restriction}(\ref{item:remark-restriction-one}) as follows.

Ad $m_{Z \to H} \circ m_Z$:
Construction~\ref{construction:accessed-part} Step~\ref{item:construction-accessed-part-mediate} yields $m_{Z \to G} \circ m_Z = \acc(d) \circ \overline{m}_{\ACC(d) \cap Z}$, in particular.
Then the universal pullback property of ($\widetilde{3}$) in Remark~\ref{remark:restriction}(\ref{item:remark-restriction-one}) induces a mediating morphism
$m\colon \ACC(\hat{d}) \cap Z \to \widetilde{Z}$ such that $m_{\widetilde{Z} \to Z} \circ m = m_Z$, in particular.
This and the commutativity of ($\widetilde{6}$) yields
\[
m_{Z \to H} \circ m_Z = m_{Z \to H} \circ m_{\widetilde{Z} \to Z} \circ m = m_{\widetilde{H} \to H} \circ m_{\widetilde{Z} \to \widetilde{H}} \circ m.
\]

Ad $m_{h(R) \to H} \circ m_1$:
According to Remark~\ref{remark:restriction}(\ref{item:remark-restriction-one}), one has $m_{\widetilde{H} \to H} \circ \widetilde{h} = h$.
Let $m_{\widetilde{h}(R) \to \widetilde{H}} \circ e_{R \to \widetilde{h}(R)}$ be the epi-mono factorization of $\widetilde{h}$.
Then $m_{\widetilde{H} \to H} \circ m_{\widetilde{h}(R) \to \widetilde{H}} \circ e_{R \to \widetilde{h}(R)}$ is an epi-mono factorization of $h$.
As epi-mono factorizations are unique up to isomorphism, one can assume $m_{h(R) \to H} = m_{\widetilde{H} \to H} \circ m_{h(R) \to \widetilde{H}}$ without loss of generality.
This yields $m_{h(R) \to H} \circ m_1 = m_{\widetilde{H} \to H} \circ m_{h(R) \to \widetilde{H}} \circ m_1$.
As $m_{\widetilde{H} \to H}$ is a monomorphism, the commutativity of $(PB2)$ implies the commutativity of the following diagram
\[
\begin{tikzcd}
(Z \cap \ACC(\hat{d})) \cap (\ACC(\hat{d}) \cap h(R)) \arrow[r, "m_3"] \arrow[d, "m_4"]
& \ACC(\hat{d}) \cap h(R) \arrow[d, "m_{h(R) \to \widetilde{H}} \circ m_1"] \\
Z \cap \ACC(\hat{d}) \arrow[r, "m_{\widetilde{Z} \to \widetilde{H}} \circ m"] 
& \widetilde{H}
\end{tikzcd}
\]
Moreover, because $m_{\widetilde{H} \to H}$ is a monomorphism, this diagram is also a pullback.
The universal pushout property induces a mediating morphism $\acc(\hat{d})_{\widetilde{H}} \colon \ACC(\hat{d}) \to \widetilde{H}$ which is a monomorphism because it goes from the pushout to the codomain of the enclosing pullback (cf. Fact~\ref{fact:mediating-morphism-of-po-of-pb-is-mono}).

Statement 2.
According to Remark~\ref{remark:restriction}(\ref{item:remark-restriction-two}), there is a monomorphism $m_{\widetilde{H} \to H'}$ this can be composed with $\acc(\hat{d})_{\widetilde{H}}$ yielding $\acc(\hat{d})_{H'} = m_{\widetilde{H} \to H'} \circ \acc(\hat{d})_{\widetilde{H}}\colon \ACC(\hat{d}) \to H'$.
\end{proof}
  
\begin{construction}
  Let $\acc(d)$ factor through $m_{G' \to G}$.
  Then the \emph{restriction} of $d$ wrt $m_{G' \to G}$,
  denoted by $d|m_{G' \to G}$, can be constructed by induction on the length of $d$:

  \base
  $((G\dder^0_P X)|m_{G' \to G})= (G'\dder^0_P G')$, and

  \step\!$((G \dder^{n+1}_P X)|m_{G' \to G}) =\! ((G \dder_{p} H\dder^n_P X)|m_{G' \to G})=\!(G'\dder_{p} H' \dder^n_P X')$
  where $G' \dder_p H'$ is the single-step restriction of $G \dder_p H$ wrt $m_{G' \to G}$
  and $(H' \dder^n_P X') = (H \dder^n_P X)|m_{H' \to H}$ is given by the induction hypothesis for derivations of length $n$.
\end{construction}

\begin{rem}
  $H' \dder^n_P X'$ exists as $\acc(\hat{d})\colon \ACC(\hat{d}) \to H$ factors through $m_{H' \to H}$ according to Lemma~\ref{lemma:restriction-factors}(\ref{item:lemma-restriction-factors-two}).
\end{rem}

\begin{exa}\label{ex:restriction-accessed-part-color}
  The derivation $d_{\colr}$ can be restricted to its accessed part yielding the derivation
  \[
  d_{\colr}|\acc(d_{\colr}) = \mbox{
\newcommand{\ens}{15pt}

\begin{tikzpicture}[baseline=-1pt]
\node (g1) {
\begin{tikzpicture}
\node (n1) [state] {};
\node (n2) [state,right=\ens of n1] {};
\node (n3) [state,below=\ens of n1] {};
\node (n4) [state,right=\ens of n3] {};
\node (n5) [state,below=\ens of n3] {};
\node (n6) [state,right=\ens of n5] {};
\end{tikzpicture}
};
\node (g1g2) [right=30pt of g1] {$\dder^{6}_{\addloop}$};
\node (g2) [right=45pt of g1g2] {
\begin{tikzpicture}
\node (n1) [state] {}
	edge [loop left] node [left] {\scriptsize $a$} (n1);
\node (n2) [state,right=\ens of n1] {}
	edge [loop right] node [right] {\scriptsize $a$} (n2);
\node (n3) [state,below=\ens of n1] {}
	edge [loop left] node [left] {\scriptsize $a$} (n3);
\node (n4) [state,right=\ens of n3] {}
	edge [loop right] node [right] {\scriptsize $a$} (n4);
\node (n5) [state,below=\ens of n3] {}
	edge [loop left] node [left] {\scriptsize $a$} (n5);
\node (n6) [state,right=\ens of n5] {}
	edge [loop right] node [right] {\scriptsize $a$} (n6);
\end{tikzpicture}
};
\node (g2g3) [right=45pt of g2] {$\dder^{2}_{\addcolor}$};
\node (g3) [right=55pt of g2g3] {
\begin{tikzpicture}
\node (n1) [state] {}
	edge [loop left] node [left] {\scriptsize $a$} (n1);
\node (n2) [state,right=\ens of n1] {}
	edge [loop right] node [right] {\scriptsize $a$} (n2);
\node (n3) [state,below=\ens of n1] {}
	edge [loop left] node [left] {\scriptsize $a$} (n3);
\node (n4) [state,right=\ens of n3] {}
	edge [loop right] node [right] {\scriptsize $a$} (n4);
\node (n5) [state,below=\ens of n3] {}
	edge [loop left] node [left] {\scriptsize $a$} (n5);
\node (n6) [state,right=\ens of n5] {}
	edge [loop right] node [right] {\scriptsize $a$} (n6);
\node (n7) [state,left=25pt of n3] {}
	edge [loop above] node [above] {\scriptsize $1$} (n7);
\node (n8) [state,right=25pt of n4] {}
	edge [loop above] node [above] {\scriptsize $2$} (n8);
\end{tikzpicture}
};
\node (g3g4) [right=60pt of g3] {$\dder^{6}_{\choosecolr}$};
\node (g4) [right=50pt of g3g4] {
\begin{tikzpicture}
\node (n1) [state] {};
\node (n2) [state,right=\ens of n1] {};
\node (n3) [state,below=\ens of n1] {};
\node (n4) [state,right=\ens of n3] {};
\node (n5) [state,below=\ens of n3] {};
\node (n6) [state,right=\ens of n5] {};
\node (n7) [state,left=\ens of n3] {}
	edge [loop above] node [above] {\scriptsize $1$} (n7)
	edge [-] (n1)
	edge [-] (n3)
	edge [-] (n5);
\node (n8) [state,right=\ens of n4] {}
	edge [loop above] node [above] {\scriptsize $2$} (n8)
	edge [-] (n2)
	edge [-] (n4)
	edge [-] (n6);
\end{tikzpicture}
};
\end{tikzpicture}
}.
  \]
\end{exa}

\subsection{Composition of Restrictions}

In the next section, we need that the restriction of a restriction of a
given derivation is a restriction of the given derivation. This property
is shown in the following proposition. The proof is based on a further
property stating that a restriction preserves the accessed part.

\begin{prop}
  \label{prop:restriction-of-restriction-is-restriction}\hfill
  \begin{enumerate}
  \item
    Let $d=(G \dder^*_P X)$ be a derivation and $\acc(d)$ factor through
    $m_{G'\to G}\colon G'\to G$. Let $d|m_{G'\to G}$ be the associated restriction.
    Then there is an isomorphism $i\colon \ACC(d)\to \ACC(d|m_{G'\to G})$ such that
    $m_{G'\to G} \circ \acc(d|m_{G'\to G}) \circ i = \acc(d)$.
    
  \item
    Let $\acc(d|m_{G'\to G})$ factor through $m_{G''\to G'}$ and
    $(d|m_{G'\to G})|m_{G''\to G'}$ be the associated restriction.
    Then
    $(d|m_{G'\to G})|m_{G''\to G'} = d|m_{G'\to G} \circ m_{G''\to G'}$.
  \end{enumerate}
\end{prop}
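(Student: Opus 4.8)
The plan is to prove both statements by induction on the length of $d$, handling Part~1 first because the well-definedness and the factorization hypothesis of Part~2 rely on it. For Part~1 the base case $d=(G\dder^0_P X)$ is immediate: both $\ACC(d)$ and $\ACC(d|m_{G'\to G})$ are the strict initial object $\emptyset$, so the required isomorphism is the identity and the equation $m_{G'\to G}\circ\acc(d|m_{G'\to G})\circ i=\acc(d)$ holds because any two morphisms out of $\emptyset$ coincide. For the step I write $d=(G\dder_p H\dder^n_P X)$ with tail $\hat{d}$, so that $\acc(d)$ has domain $g(L)\cup(Z\cap\ACC(\hat{d}))$ and, by the definition of restriction, $d|m_{G'\to G}=(G'\dder_p H'\dder^n_P X')$ with tail $\hat{d}'=\hat{d}|m_{H'\to H}$.

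First I apply the induction hypothesis to $\hat{d}$ and the monomorphism $m_{H'\to H}$, which is legitimate since $\acc(\hat{d})$ factors through $m_{H'\to H}$ by Lemma~\ref{lemma:restriction-factors}(\ref{item:lemma-restriction-factors-two}); this yields an isomorphism $i_{\hat{d}}\colon\ACC(\hat{d})\to\ACC(\hat{d}')$ with $m_{H'\to H}\circ\acc(\hat{d}')\circ i_{\hat{d}}=\acc(\hat{d})$. The crux is to identify the two pullbacks feeding the pushouts that define the accessed parts. For the intermediate pieces I paste the square $(6)$ of Lemma~\ref{lemma:restriction} (which exhibits $Z'$ as the pullback of $Z\to H\leftarrow H'$) onto the pullback defining $Z'\cap\ACC(\hat{d}')$; by pullback composition (Fact~\ref{fact:composition-lemma-pb}) the outer rectangle is the pullback of $Z\to H\leftarrow\ACC(\hat{d}')$ along $m_{H'\to H}\circ\acc(\hat{d}')$. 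Since that composite equals $\acc(\hat{d})\circ i_{\hat{d}}^{-1}$ and $i_{\hat{d}}$ is an isomorphism, the outer pullback is canonically isomorphic to $Z\cap\ACC(\hat{d})$, giving $Z'\cap\ACC(\hat{d}')\iso Z\cap\ACC(\hat{d})$. For the left-hand pieces, the single-step restriction gives $g=m_{G'\to G}\circ g'$, so uniqueness of the epi-mono factorization of $g$ furnishes a compatible isomorphism $g'(L)\iso g(L)$ with $m_{g(L)\to G}=m_{G'\to G}\circ m_{g'(L)\to G'}$.

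These isomorphisms are compatible over the third pullback $g(L)\cap(Z\cap\ACC(\hat{d}))$, hence induce an isomorphism of the apex objects; passing to the pushouts of Construction~\ref{construction:accessed-part} and invoking their universal property produces $i\colon\ACC(d)\to\ACC(d|m_{G'\to G})$, and the factorization equation is verified coprojection-by-coprojection before the universal property forces it globally. For Part~2 I first observe that Part~1 together with the hypothesis that $\acc(d|m_{G'\to G})$ factors through $m_{G''\to G'}$ shows that $\acc(d)$ factors through $m_{G'\to G}\circ m_{G''\to G'}$, so the right-hand side $d|(m_{G'\to G}\circ m_{G''\to G'})$ is defined. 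The equality is proved by induction on the length of $d$, the base case being the trivial $0$-step restriction $G''\dder^0_P G''$ on both sides. For the single first step, the construction in Lemma~\ref{lemma:restriction} builds the restricted step from one pullback and one pushout; restricting first along $m_{G'\to G}$ and then along $m_{G''\to G'}$ agrees with restricting along the composite by pullback composition (Fact~\ref{fact:composition-lemma-pb}) and pushout composition (Fact~\ref{fact:composition-lemma-po}), so the two single-step restrictions coincide and in particular $m_{H''\to H}=m_{H'\to H}\circ m_{H''\to H'}$. This is exactly what allows the induction hypothesis to be applied to the tail $\hat{d}$, completing the step.

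I expect the main obstacle to lie in Part~1: assembling the three componentwise isomorphisms into a single isomorphism of the pushout objects and, above all, checking that it is compatible with the accessed-part morphisms and the factorization through $m_{G'\to G}$. Every identification must be tracked through the mediating morphisms of the pushouts in Construction~\ref{construction:accessed-part}, and the compatibility has to be verified on each coprojection before the universal property can be invoked; the pasting of square $(6)$ of Lemma~\ref{lemma:restriction} onto the intermediate pullback is the technical heart. In Part~2 the only real care needed is the bookkeeping that keeps all intermediate monomorphisms composing correctly so that the induction hypothesis on the tail applies.
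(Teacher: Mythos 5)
Your proposal is correct and follows essentially the same route as the paper: induction on the length of $d$ for both parts, with Part~1 enabled by Lemma~\ref{lemma:restriction-factors}(\ref{item:lemma-restriction-factors-two}), the left pieces identified via uniqueness of the epi-mono factorization of $g$, the intermediate pieces via the pullback square $(6)$, and the final isomorphism obtained from the pushout universal property, followed by the same single-step-plus-tail argument for Part~2. Your use of Fact~\ref{fact:composition-lemma-pb} to paste $(6)$ onto the intersection pullback is a slightly slicker packaging of what the paper does by hand with explicit mediating morphisms $i_1,i_2,i_3$, but it is the same underlying argument.
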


\begin{proof}
  Both statements are proved by induction on the length of $d$.

  Statement 1.
  Base $n=0$:
  This means $d = (G \dder^0_P X), \acc(d) = \emptyset_G$ and $d|m_{G' \to G} = (G' \dder^0_P G')$.
  Therefore, $\acc(d|m_{G' \to G}) = \emptyset_{G'}$ and choosing $i = \id_\emptyset$, one gets
  $m_{G' \to G} \circ \emptyset_{G'} \circ \id_\emptyset = \emptyset_G$ as the initial morphism is unique.

  Step $n+1$:
  Consider $d = (G \dder^{n+1}_P X) = (G \dder_p H \dder^n_P X)$.
  According to Remark~\ref{remark:restriction}, the double pushout $(1)$ and $(2)$ can be decomposed into
\[
\begin{tikzcd}[column sep=3cm]
L \arrow[dr, phantom, "(\widetilde{4})" description] \arrow[ddd, controls={+(-3,0) and +(-3,0)}, "g" left] \arrow[d, "\widetilde{g}"] & K \arrow[dr, phantom, "(\widetilde{5})" description] \arrow[l, "l" below] \arrow[r, "r" below] \arrow[d, "\widetilde{z}"] & R \arrow[ddd, controls={+(3,0) and +(3,0)}, "h" right] \arrow[d, "\widetilde{h}"]\\
  \ACC(d) \arrow[dr, phantom, "(3'')"] \arrow[d, "m_{\ACC(d) \to G'}"] \arrow[dd, bend right, "\acc(d)" left] & \widetilde{Z} \arrow[dr, phantom, "(6'')"] \arrow[l, "m_{\widetilde{Z} \to \ACC(d)}" below] \arrow[r, "m_{\widetilde{Z} \to \widetilde{H}}" below] \arrow[d, "m_{\widetilde{Z} \to Z'}"] & \widetilde{H} \arrow[d, "m_{\widetilde{H} \to H'}"] \arrow[dd, controls={+(2,0) and +(2,0)}, "m_{\widetilde{H} \to H}" left] \\
    G' \arrow[dr, phantom, "(3')"] \arrow[d, "m_{G' \to G}" right] & Z' \arrow[dr, phantom, "(6')"] \arrow[l, "m_{Z' \to G'}" below] \arrow[r, "m_{Z' \to H'}" below] \arrow[d, "m_{Z' \to Z}"] & H' \arrow[d, "m_{H' \to H}"] \\
  G  & Z \arrow[l, "m_{Z \to G}" below] \arrow[r, "m_{Z \to H}" below] & H
\end{tikzcd}
\]
Let $d' = d|m_{G' \to G}$ and $\hat{d}' = \hat{d}|m_{H' \to H}$.
According to Lemma~\ref{lemma:restriction-factors}(\ref{item:lemma-restriction-factors-two}), there is a monomorphism $m_{\ACC(\hat{d}) \to H'}$ such that
$\acc(\hat{d}) = m_{H' \to H} \circ m_{\ACC(\hat{d}) \to H'}$ for $\hat{d} = (H \dder^n_P X)$.
Consequently, the induction hypothesis applies to $\hat{d}$ yielding an isomorphism $j \colon \ACC(\hat{d}) \to \ACC(\hat{d}')$ such that
\[
m_{H' \to H} \circ \acc(\hat{d}') \circ j = \acc(\hat{d}).
\]\newpage
\noindent The following diagram without the dashed arrows shows the given situation
\[
\begin{tikzcd}
     & & L \arrow[d, "g'"] \arrow[dl, bend right, "e_{g'}"] \arrow[dll, bend right, "e_g"] & K \arrow[d, "z'"] \arrow[l, "l" below] \arrow[rr, "r" below] & & R \arrow[d, "h'"] \\
g(L) \arrow[r, dashed, bend left, "i_L"] \arrow[dddrr, bend right, "m_g"] & g'(L) \arrow[l, dashed, bend left, "i_L^{-1}"] \arrow[r, "m_{g'}"] & G' \arrow[ddd, "m_{G' \to G}"] & Z' \arrow[ddd, "m_{Z' \to Z}"] \arrow[l, "m_{Z' \to G'}" below] \arrow[rr, "m_{Z' \to H'}" below] & & H' \arrow[dl, phantom, "(PB')"] \arrow[ddd, "m_{H' \to H}"] \\
&  &    &    & \ACC(\hat{d}') \cap Z' \arrow[rr, dashed, "m_3" near start] \arrow[ul, dashed, "m_4"] \arrow[d, dashed, bend left, "i_1"] & & \ACC(\hat{d}') \arrow[ul, "\acc(\hat{d}')" above right] \arrow[d, bend left, "j^{-1}"] \\
&  &    &    & \ACC(\hat{d}) \cap Z  \arrow[u, dashed, bend left, "i_3"] \arrow[uul, dashed, bend left = 10, "i_2"] \arrow[rr, dashed, "m_1" near start] \arrow[dl, dashed, "m_2"] & & \ACC(\hat{d}) \arrow[dl, "\acc(\hat{d})"] \arrow[u, bend left, "j"] \\
& & G  & Z \arrow[l, "m_{Z \to G}" below] \arrow[rr, "m_{Z \to H}" below]  & & H \arrow[ul, phantom, "(PB)"]
\end{tikzcd}
\]
Then one can construct the pullbacks $(PB)$ and $(PB')$.
Using the universal property of $(PB), (6')$ and $(PB')$ and the respective commutativities, $i_1, i_2$ and $i_3$ are induced as mediating morphisms:

\noindent
$i_1$:
Using the commutativity of $PB'$ and the induction hypothesis, the equation
\begin{align*}
  m_{Z \to H} \circ m_{Z' \to Z} \circ m_4
  & = m_{H' \to H} \circ m_{Z' \to H'} \circ m_4\\
  & = m_{H' \to H} \circ \acc(\hat{d}') \circ m_3\\ 
  & = \acc(\hat{d}) \circ j^{-1} \circ m_3
\end{align*}
holds. By applying the universal pullback property the above equation implies
\begin{align*}
m_1 \circ i_1 & = j^{-1} \circ m_3\\
m_2 \circ i_1 & = m_{Z' \to Z} \circ m_4.
\end{align*}

\noindent
$i_2$:
Using the commutativity of $PB$ and the induction hypothesis one gets
\begin{align*}
  m_{Z \to H} \circ m_2 & = \acc(\hat{d}) \circ m_1\\
  & = m_{H' \to H} \circ \acc(\hat{d}') \circ j \circ m_1
\end{align*}
By applying the universal pullback property the equation implies
\begin{align*}
m_{Z' \to H'} \circ i_2 &= \acc(\hat{d}') \circ j \circ m_1\\
m_{Z' \to Z} \circ i_2  & = m_2.
\end{align*}

\noindent
$i_3$:
By applying the universal pullback property the equation $\acc(\hat{d}') \circ j \circ m_1 = m_{Z' \to H'} \circ i_2$ implies
\begin{align*}
m_3 \circ i_3 & = j \circ m_1\\
m_4 \circ i_3 &= i_2.
\end{align*}

Considering these commutativities, one gets
\begin{align*}
  m_1 \circ i_1 \circ i_3 = m_1\\
  m_2 \circ i_1 \circ i_3 = m_2\\
  m_3 \circ i_3 \circ i_1 = m_3\\
  m_4 \circ i_3 \circ i_1 = m_4
\end{align*}
so that
$i_1 \circ i_3 = \id_{\ACC(\hat{d}) \cap Z}$ and
$i_3 \circ i_1 = \id_{\ACC(\hat{d}') \cap Z'}$
meaning that $i_1$ and $i_3$ are inverse to each other.
Together with the commutativity of $(3')$, one gets $(*)$
\begin{align*}
  m_{G' \to G} \circ m_{Z' \to G'} \circ m_4 \circ i_3 & = m_{Z \to G} \circ m_{Z' \to Z} \circ m_4 \circ i_3\\
  & = m_{Z \to G} \circ m_{Z' \to Z} \circ i_2 \\
  & = m_{Z \to G} \circ m_2.
\end{align*}

On the other hand, consider the epi-mono factorizations
$g' = m_{g'} \circ e_{g'}$ and $g = m_g \circ e_g$.
As $g = m_{G' \to G} \circ g'$ and $g$ has a unique epi-mono factorization, there is an isomorphism
$i_L \colon g(L) \to g'(L)$ such that $(**)$
\[
m_g = m_{G' \to G} \circ m_{g'} \circ i_L.
\]
Using the pushout property of
$\ACC(d) = g(L) \cup (\ACC(\hat{d}) \cap Z)$
together with $(*)$ and $(**)$, one gets an isomorphism
\[
i \colon \ACC(d) = g(L) \cup (\ACC(\hat{d}) \cap Z) \to g'(L) \cup (\ACC(\hat{d}') \cap Z') = \ACC(d') = \ACC(d|m_{G' \to G}).
\]
such that
$m_{G'\to G} \circ \acc(d|m_{G'\to G}) \circ i = \acc(d)$.
    
This proves the first part of the proposition.

Statement 2.
Base $n=0$:
The only derivation of length 0 starting in $G''$ is $G'' \dder^0_P G''$ such that the statement holds.

Step $n+1$:
Let
$d = (G \dder_p H \dder^n_P X)$, 
$d' = d|m_{G' \to G} = (G' \dder_p H' \dder^n_P X')$, and
$d'' = d'|m_{G'' \to G'} = (G'' \dder_p H'' \dder^n_P X'')$, as well as
$\hat{d} = (H \dder^n_P X)$,
$\hat{d}' = (H' \dder^n_P X')$,
and
$\hat{d}'' = (H'' \dder^n_P X'')$.
Consider the decomposition of the double pushout defining $G \dder_p H$ wrt $m_{G' \to G}$ as given in Lemma~\ref{lemma:restriction} and the respective decomposition of the double pushout defining $G' \dder_p H'$ wrt $m_{G'' \to G'}$
\[
\begin{tikzcd}[column sep=3cm]
  L \arrow[dr, phantom, "(4'')" description] \arrow[dd, bend right, "g'" left] \arrow[d, "g''"] & K \arrow[dr, phantom, "(5'')" description] \arrow[l, "l" below] \arrow[r, "r" below] \arrow[d, "z''"] & R \arrow[dd, bend left, "h'" right] \arrow[d, "h''" left]\\
  G'' \arrow[dr, phantom, "(3'')" description] \arrow[d, "m_{G'' \to G'}" right] & Z'' \arrow[dr, phantom, "(6'')" description] \arrow[l, "m_{Z'' \to G''}" below] \arrow[r, "m_{Z'' \to H''}" below] \arrow[d, "m_{Z'' \to Z'}"] & H'' \arrow[d, "m_{H'' \to H'}" left] \\
  G'  & Z' \arrow[l, "m_{Z' \to G'}" below] \arrow[r, "m_{Z' \to H'}" below] & H'\\
\end{tikzcd}
\]
The composition property of pullbacks yields that $(3'') + (3)$ and $(6'') + (6)$ are pullbacks that coincide thereby with the double pushout defining $G'\dder_p H'$.
This shows that $G'' \dder_p H''$ is the single-step restriction of $G \dder_p H$ wrt $m_{G' \to G} \circ m_{G'' \to G'}$.

According to the first point of the proposition,
$\ACC(d)$ and $\ACC(d')$ can be considered as equal without loss of generality with $m_{G' \to G} \circ \acc(d') = \acc(d)$.
By assumption, $\acc(d') = \acc(d|m_{G' \to G})$ factors through $m_{G'' \to G'}$ such that
\[
m_{G' \to G} \circ \acc(d') = m_{G' \to G} \circ m_{G'' \to G'} \circ \acc(d'') = \acc(d)
\]
meaning that the restriction $d|m_{G' \to G} \circ m_{G'' \to G'}$ is defined.
Moreover, it is equal to
$(d|m_{G' \to G})|m_{G'' \to G'}$, as this holds for the first steps shown above and for the tails $\hat{d},\hat{d}', \hat{d}''$ by induction hypothesis.
This completes the proof of statement 2.
\end{proof}

\subsection{Spines}

As the restriction of a derivation to its accessed part plays a prominent role in the next section, we name it.

\begin{defi}
The restriction $d|\acc(d)$ of a derivation $d$ to its accessed part is called \emph{spine} of $d$ and denoted by $\spine(d)$.
\end{defi}

\begin{rem}
Using the definitions of accessed parts and restrictions, the spine can be characterized as follows:
  
\base
  $\spine(G\dder^0 G)=(\emptyset \dder^0 \emptyset)$, and

\step
  $\spine(G\dder_{p} H\dder^n_P X)= (\ACC(G \dder_{p} H \dder^n_P X) \dder_{p} \widetilde{H} \dder^n_P \widetilde{X})$
where $p$ is applied to $\ACC(G \dder_{p} H \dder^n_P X)$ using the
morphism $\widetilde{g}=\overline{m}_{g(L)} \circ e_{L \to g(L)}$ (given in Remark~\ref{remark:restriction}(\ref{item:remark-restriction-one})) as left matching morphism and
$\widetilde{H} \dder^n_P \widetilde{X}$ is the restriction $(H \dder^n_P X)|m_{\widetilde{H} \to H}$.
\end{rem}

\begin{exa}\label{ex:spine-color-dual}
As the derivation $d_{\colr}$ is restricted to its accessed part in Example \ref{ex:restriction-accessed-part-color}, the derivation given there is the spine of $d_{\colr}$.
Moreover, it is the spine of $\move(d_{\colr},d_{\dual})$ given in Example \ref{ex:moved-variant-color-dual}.
This is not a coincidence.
In the next section, we show that the spine of a derivation equals the spine of each of its moved variants.
\end{exa}

\section{Moving Preserves the Spine}
\label{sec:moving-preserves-the-spine}
In this section, we relate the moving of a
derivation along a derivation
and the restriction of a derivation by showing that
moving preserves the
spine of the moved derivation. As all constructions
that are involved in
derivations are only unique up to isomorphism, the
spine of a
derivation and of its moved variant can only be
equal up to isomorphism.
To define the equality, we employ the special case of restrictions with
isomorphic restriction morphism. Consequently, two derivations are equal up
to isomorphism if they apply the same rules in the same order and all
corresponding derived and intermediate objects are isomorphic in such a way
that the isomorphisms are compatible with the matching morphisms and with the
embeddings of the intermediate morphisms into the derived objects.
  
\begin{defi}
  Let
  $d = (G\dder^*_P X)$ and $d' = (G' \dder^*_P X')$ be two derivations
  and $iso:G' \to G$ be an isomorphism such that $d=d'|iso$.
  Then $d$ and $d'$ are called
  \emph{equal up to isomorphism}, denoted by $d \equiv d'$.
\end{defi}

\begin{rem}    
The accessed part $\acc(d)\colon \ACC(d)\to G$ factorizes always through an
isomorphic restriction morphism: $\acc(d) = id_G \circ \acc(d) = iso \circ iso^{-1} \circ \acc(d)$
so that $d|iso$ is defined.
According to the construction of restrictions, each
derivation step decomposes as given in Lemma~\ref{lemma:restriction} for a single step, i.e., if
$d = (G_0 \dder_{r_1} G_1 \dder_{r_2} \ldots \dder_{r_n} G_n)$,
then the $i$-th step for $i=1,\ldots,n$ has the form
  \[
\begin{tikzcd}[column sep=3cm]
  L_i \arrow[dr, phantom, "(4_i)" description] \arrow[dd, controls={+(-3,0) and +(-3,0)}, "g_i" left] \arrow[d, "g_i'"] & K_i \arrow[dr, phantom, "(5_i)" description] \arrow[l, "l_i" below] \arrow[r, "r_i" below] \arrow[d, "z_i'"] & R_i \arrow[dd, controls={+(+3,0) and +(+3,0)}, "h_i" right] \arrow[d, "h_i'"]\\
  G'_{i-1} \arrow[dr, phantom, "(3_i)" description] \arrow[d, "m_{G'_{i-1} \to G_{i-1}}" left] & Z_i' \arrow[dr, phantom, "(6_i)" description] \arrow[l, "m_{Z_i' \to G_i'}" below] \arrow[r, "m_{Z_i' \to H_i'}" below] \arrow[d, "m_{Z_i' \to Z_i}" near end] & G_i' \arrow[d, "m_{G_i' \to G_i}"] \\
  G_{i-1}  & Z_i \arrow[l, "m_{Z_i \to G_{i-1}}" below] \arrow[r, "m_{Z \to H}" below] & G_i
\end{tikzcd}
\]
The diagrams $(4_i)$ and $(5_i)$ form the double pushouts of the derivation $d'$ if
the restriction morphism $m_{G_0'\to G_0}$ is chosen as the given isomorphism $iso$
with $G_0=G$ and $G_0'=G'$. Moreover, as the diagrams $(3_i)$ and $(6_i)$ are
pullbacks and pushouts, all the vertical morphisms are isomorphisms and the
commutativity of the diagrams means that the isomorphisms are compatible with
the matching morphisms and the horizontal embedding morphism. Therefore, there
is, in particular, a 1-to-1 correspondence between the sequentially
independent steps of $d$ and those of $d'$. Obviously, as the class of
isomorphisms is closed under identities, inversion, and sequential
composition, the introduced equality of derivations up to isomorphism is an equivalence relation.
\end{rem}

\begin{lem}
  \label{lemma:moving-preserves-spine-one-step}
  Let $(P,\overline{P})$ be a parallel independent pair of sets of rules.
  Let $d = (G \dder^*_P X)$ be a derivation and $\overline{d} = (G \dder_{\overline{p}} G')$ be a direct derivation for $\overline{p} \in \overline{P}$ with the intermediate object $\overline{Z}$ and its monomorphisms $m_{\overline{Z} \to G} \colon \overline{Z} \to G$ and $m_{\overline{Z} \to G'} \colon \overline{Z} \to G'$.
  Let $\acc(d) \colon \ACC(d) \to G$ and $\acc(d')\colon \ACC(d') \to G'$ be the accessed parts of $d$ and $d' = \move(d,\overline{d})$, respectively.
  Then
  \begin{enumerate}
  \item
    \label{item:moving-preserves-spine-one-step-item-one}
    there is a derivation $d''=(\overline{Z} \dder^*_P Z'')$ for some object $Z''$
    such that
    \[
    d|m_{\overline{Z} \to G} = d'' = d'|m_{\overline{Z} \to G'}.
    \]
  \item
    \label{item:moving-preserves-spine-one-step-item-two}
    Moreover, there exists an isomorphism
    $i\colon \ACC(d) \to \ACC(d')$
    and two monomorphisms $m_{\ACC(d) \to \overline{Z}} \colon \ACC(d) \to \overline{Z}$
    and $m_{\ACC(d') \to \overline{Z}} \colon \ACC(d') \to \overline{Z}$ such that
    \begin{align*}
      m_{\ACC(d') \to \overline{Z}} \circ i &= m_{\ACC(d) \to \overline{Z}}\\
      m_{\overline{Z} \to G} \circ m_{\ACC(d) \to \overline{Z}} &= \acc(d)\\
      m_{\overline{Z} \to G'} \circ m_{\ACC(d') \to \overline{Z}} &= \acc(d').
    \end{align*}
  \end{enumerate}
  Moreover, $\spine(d) \equiv \spine(d')$.
\end{lem}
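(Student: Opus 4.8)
The plan is to prove the two numbered statements simultaneously by induction on the length $n$ of $d$, and then to deduce $\spine(d) \equiv \spine(d')$ purely formally from them together with the composition property of restrictions, Proposition~\ref{prop:restriction-of-restriction-is-restriction}. The structural input that drives everything is parallel independence: for the first step $G \dder_p H$ of $d$ it supplies a morphism $f\colon L \to \overline{Z}$ with $g = m_{\overline{Z}\to G}\circ f$, so the first match factors through $\overline{Z}$. Applying \emph{conflux} to this first step and to $\overline{d}$ produces the first step $G' \dder_p G_1'$ of $d'$ (with match $m_{\overline{Z}\to G'}\circ f$) together with a moved single step $\overline{d}_1 = (H \dder_{\overline{p}} G_1')$, and by the moving construction of Proposition~\ref{prop:moving} one has $\tail(d') = \move(\tail(d),\overline{d}_1)$, so the induction hypothesis applies to the pair $(\tail(d),\overline{d}_1)$. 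Writing $\widehat{Z}$ for the intermediate object of $\overline{d}_1$, the pivotal geometric fact is that restricting the first step of $d$ along $m_{\overline{Z}\to G}$ and restricting the first step of $d'$ along $m_{\overline{Z}\to G'}$ both yield the \emph{same} single step $\overline{Z}\dder_p \widehat{Z}$ with match $f$ (since $m_{\overline{Z}\to G}$ and $m_{\overline{Z}\to G'}$ are monomorphisms, both restricted matches collapse to $f$), merely embedded into $H$ and into $G_1'$ respectively.

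Both restrictions are defined because $\acc(d)$ factors through $m_{\overline{Z}\to G}$: writing $\acc(d)=g(L)\cup(Z\cap\ACC(\hat{d}))$, the summand $g(L)$ factors through $\overline{Z}$ via $f$, while, using induction hypothesis~(\ref{item:moving-preserves-spine-one-step-item-two}) for $\tail(d)$ (which makes $\acc(\hat{d})$ factor through $\widehat{Z}$), the parallel-independence identity $Z\cap\widehat{Z}=Z\cap\overline{Z}$ (intersections taken in $H$ and in $G$), and associativity of pullbacks of monomorphisms, one gets $Z\cap\ACC(\hat{d})=(Z\cap\overline{Z})\cap\ACC(\hat{d})$, which factors through $\overline{Z}$; hence so does their union by the pushout property and Fact~\ref{fact:mediating-morphism-of-po-of-pb-is-mono}. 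Symmetrically $\acc(d')$ factors through $m_{\overline{Z}\to G'}$. For statement~(\ref{item:moving-preserves-spine-one-step-item-one}) the base $n=0$ is immediate (both restrictions are $\overline{Z}\dder^0_P\overline{Z}$). In the step I restrict the first $P$-step along $m_{\overline{Z}\to G}$ by Lemma~\ref{lemma:restriction} and identify the restricted result with $\widehat{Z}$ — informally ``$(Z\cap\overline{Z})\cup h(R)$'' — via uniqueness of pushout complements of monomorphisms and the pushout--pullback decomposition Fact~\ref{fact:po-pb-decomposition}. Induction hypothesis~(\ref{item:moving-preserves-spine-one-step-item-one}) for $(\tail(d),\overline{d}_1)$ gives $\tail(d)|m_{\widehat{Z}\to H}=\widehat{d}{}''=\tail(d')|m_{\widehat{Z}\to G_1'}$, and prefixing the common step $\overline{Z}\dder_p\widehat{Z}$ yields $d|m_{\overline{Z}\to G}=d''=d'|m_{\overline{Z}\to G'}$.

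For statement~(\ref{item:moving-preserves-spine-one-step-item-two}) I feed the equalities just obtained into Proposition~\ref{prop:restriction-of-restriction-is-restriction}(1): applied to $d$ it produces an isomorphism $i_d\colon\ACC(d)\to\ACC(d'')$ with $m_{\overline{Z}\to G}\circ\acc(d'')\circ i_d=\acc(d)$, and applied to $d'$ an isomorphism $i_{d'}\colon\ACC(d')\to\ACC(d'')$ with $m_{\overline{Z}\to G'}\circ\acc(d'')\circ i_{d'}=\acc(d')$. Setting $i=i_{d'}^{-1}\circ i_d$, $m_{\ACC(d)\to\overline{Z}}=\acc(d'')\circ i_d$ and $m_{\ACC(d')\to\overline{Z}}=\acc(d'')\circ i_{d'}$, the second and third displayed equations of~(\ref{item:moving-preserves-spine-one-step-item-two}) are exactly the two compatibilities above, while the first reads $m_{\ACC(d')\to\overline{Z}}\circ i=\acc(d'')\circ i_{d'}\circ i_{d'}^{-1}\circ i_d=m_{\ACC(d)\to\overline{Z}}$; and $i$ is an isomorphism as a composite of isomorphisms.

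Finally, the spine statement follows formally. From $\acc(d)=m_{\overline{Z}\to G}\circ m_{\ACC(d)\to\overline{Z}}$ and the composition of restrictions, Proposition~\ref{prop:restriction-of-restriction-is-restriction}(2), one gets $\spine(d)=d|\acc(d)=(d|m_{\overline{Z}\to G})|m_{\ACC(d)\to\overline{Z}}=d''|m_{\ACC(d)\to\overline{Z}}$, and likewise $\spine(d')=d''|m_{\ACC(d')\to\overline{Z}}$. Since $m_{\ACC(d)\to\overline{Z}}=m_{\ACC(d')\to\overline{Z}}\circ i$ with $i$ an isomorphism, another application of Proposition~\ref{prop:restriction-of-restriction-is-restriction}(2) yields $\spine(d)=(d''|m_{\ACC(d')\to\overline{Z}})|i=\spine(d')|i$, which by definition means $\spine(d)\equiv\spine(d')$. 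I expect the main obstacle to be the categorical identification used in statement~(\ref{item:moving-preserves-spine-one-step-item-one}): that the restricted first step coincides with the intermediate $\widehat{Z}$ of the moved $\overline{P}$-step, and the intersection identity $Z\cap\widehat{Z}=Z\cap\overline{Z}$. In the category of graphs these are immediate set-difference computations, but here they must be carried out entirely through the pushout/pullback (de)composition facts (Facts~\ref{fact:composition-lemma-po} and~\ref{fact:po-pb-decomposition}) and the Van Kampen property, while keeping the two families of embeddings into $H$ and into $G_1'$ mutually compatible.
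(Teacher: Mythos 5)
Your proposal is correct. For statement~(\ref{item:moving-preserves-spine-one-step-item-one}) it follows essentially the paper's route: induction on the length of $d$, single-step restriction of the first $P$-step via Lemma~\ref{lemma:restriction}, identification of its result with the intermediate object of the moved $\overline{p}$-step, induction hypothesis on the tails, and sequential composition; the identification you call the main obstacle is exactly what the paper works out by restricting the two given direct derivations against each other's intermediate objects, observing that the two resulting pullbacks over $G$ coincide, and regrouping squares (thereby re-deriving the conflux construction itself, with the tools you name). The genuine difference is statement~(\ref{item:moving-preserves-spine-one-step-item-two}). The paper builds the isomorphism $i\colon\ACC(d)\to\ACC(d')$ by hand: it lifts the induction-hypothesis isomorphism $\ACC(\hat{d})\cong\ACC(\hat{d}')$ through two auxiliary pullbacks, obtains $g(L)\cong g'(L)$ from uniqueness of epi-mono factorizations, passes to triple intersections, and finally invokes the universal property of the pushouts defining the accessed parts. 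You instead prove only the two factorizations $\acc(d)=m_{\overline{Z}\to G}\circ m_{\ACC(d)\to\overline{Z}}$ and $\acc(d')=m_{\overline{Z}\to G'}\circ m_{\ACC(d')\to\overline{Z}}$ directly (your union-decomposition argument, a lightweight fragment of the paper's construction), and then obtain $i$ and all three equations formally from Proposition~\ref{prop:restriction-of-restriction-is-restriction}(1) applied to $d$ and to $d'$, exploiting that both restrict to the literally identical $d''$. This is sound --- the proposition is proved in Section~\ref{sec:accessed-part-and-spine} independently of this lemma, so there is no circularity --- and it is shorter; what the paper's explicit construction buys is concrete compatibility of $i$ with the pushout decompositions of the accessed parts, and it avoids having to verify the factorization hypotheses of the proposition as a separate preliminary step, which your ordering (factorizations first, then statement~(\ref{item:moving-preserves-spine-one-step-item-one}), then the proposition) handles correctly. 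Finally, your derivation of $\spine(d)\equiv\spine(d')$ inside the lemma via Proposition~\ref{prop:restriction-of-restriction-is-restriction}(2) is precisely the chain of equalities the paper postpones to the proof of Theorem~\ref{thm:moving-preserves-spine}; the paper's own proof of the lemma spells out this clause only in the base case, so your version makes the deferred argument explicit.
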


\begin{proof}
  Statement 1.
  The statement is proved by induction on the length $n$ of the derivation $d$.

  \base
    $n=0$:
    This means that $d = (G \dder^0_P X)$, $d' = (G' \dder^0_P X')$,
    $\ACC(d) = \ACC(d') = \emptyset$,
    $\acc(d)=\emptyset_G$, $\acc(d')=\emptyset_{G'}$ and
    $\spine(d)= (\emptyset \dder^0_P \emptyset)=spine (d')$.
    Then, obviously, $\overline{Z} \dder^0_P \overline{Z}$ is a common
    restriction of $d$ and $d'$ wrt $m_{\overline{Z} \to G}$ and $m_{\overline{Z} \to G'}$ respectively.
    Moreover,
    choosing $i = \id_{\emptyset}$ and $m_{\emptyset \to \overline{Z}} = \emptyset_{\overline{Z}}$, one gets obviously:
    $m_{\emptyset \to \overline{Z}} \circ i = \emptyset_{\overline{Z}} \circ \id_{\emptyset} = \emptyset_{\overline{Z}}$.
    
  \step $n+1$:
    The derivation $d$ can be decomposed into the first direct derivation $G \dder_p H$ and the remaining section $\hat{d} = \tail(d) = (H \dder^n_P X)$.
    Let
    \[
    \begin{tikzcd}[column sep=huge]
      L \arrow[d, "g" left] \arrow[dr, phantom, "(1)" description] & K \arrow[dr, phantom, "(2)" description] \arrow[d, "z" left] \arrow[l, "l" below] \arrow[r, "r" below] & R \arrow[d, "h" left] &
      \overline{L} \arrow[d, "\overline{g}" left] \arrow[dr, phantom, "(\overline{1})" description] & \overline{K} \arrow[dr, phantom, "(\overline{2})" description] \arrow[d, "\overline{z}" left] \arrow[l, "\overline{l}" below] \arrow[r, "\overline{r}" below] & \overline{R} \arrow[d, "\overline{h}" left] \\
      G & Z \arrow[l, "m_{Z \to G}" below] \arrow[r, "m_{Z \to H}" below] & H
      & G & \overline{Z} \arrow[l, "m_{\overline{Z} \to G}" below] \arrow[r, "m_{\overline{Z} \to G'}" below] & G'
    \end{tikzcd}
    \]
be the double pushouts defining the two given direct derivations.
As they are parallel independent,
there are monomorphisms
$f\colon L \to \overline{Z}$ and $\overline{f}\colon \overline{L} \to Z$ such that
\begin{align*}
& (3)~ m_{\overline{Z} \to G} \circ f = g\\
& (\overline{3})~ m_{Z \to G} \circ \overline{f} = \overline{g}.
\end{align*}
Using Lemma~\ref{lemma:restriction}, the direct derivation $G \dder_p H$ can be restricted wrt
$m_{\overline{Z} \to G}$ and the direct derivation $G \dder_{\overline{p}} G'$ can be restricted wrt $m_{Z \to G}$ given by the two double pushouts
\[
\begin{tikzcd}[column sep=huge]
  L \arrow[d, "f" left] \arrow[dr, phantom, "(4)" description] & K \arrow[dr, phantom, "(5)" description] \arrow[d, "f''" left] \arrow[l, "l" below] \arrow[r, "r" below] & R \arrow[d, "f'" left]
  & \overline{L} \arrow[d, "\overline{f}" left] \arrow[dr, phantom, "(\overline{4})" description] & \overline{K} \arrow[dr, phantom, "(\overline{5})" description] \arrow[d, "\overline{f}''" left] \arrow[l, "\overline{l}" below] \arrow[r, "\overline{r}" below] & \overline{R} \arrow[d, "\overline{f}'" left] \\
  \overline{Z} & \hat{Z} \arrow[l, "m_{\hat{Z} \to \overline{Z}}" below] \arrow[r, "m_{\hat{Z} \to \overline{Z}'}" below] & \overline{Z}'
  & Z & \hat{\hat{Z}} \arrow[l, "m_{\hat{\hat{Z}} \to Z}" below] \arrow[r, "m_{\hat{\hat{Z}} \to Z'}" below] & Z'
\end{tikzcd}
\]
Moreover, there are the pullbacks
\[
\begin{tikzcd}[column sep=huge]
  \overline{Z} \arrow[d, "m_{\overline{Z}\to G}" left] \arrow[dr, phantom, "(6)" description] & \hat{Z} \arrow[dr, phantom, "(7)" description] \arrow[d, "m_{\hat{Z} \to Z}" left] \arrow[l, "m_{\hat{Z} \to \overline{Z}}" below] \arrow[r, "m_{\hat{Z} \to \overline{Z}'}" below] & \overline{Z}' \arrow[d, "m_{\overline{Z}' \to H}" right]
  & Z \arrow[d, "m_{Z\to G}" left] \arrow[dr, phantom, "(\overline{6})" description] & \hat{\hat{Z}} \arrow[dr, phantom, "(\overline{7})" description] \arrow[d, "m_{\hat{\hat{Z}} \to \overline{Z}}" left] \arrow[l, "m_{\hat{\hat{Z}} \to Z}" below] \arrow[r, "m_{\hat{\hat{Z}} \to Z'}" below] & Z' \arrow[d, "m_{Z' \to G'}" right] \\
  G & Z \arrow[l, "m_{Z \to G}" below] \arrow[r, "m_{Z \to H}" below] & H
  & G & \overline{Z} \arrow[l, "m_{\overline{Z} \to G}" below] \arrow[r, "m_{\overline{Z} \to G'}" below] & G'
\end{tikzcd}
\]
such that $(4) + (6) = (1), (5) + (7) = (2), (\overline{4}) + (\overline{6}) = (\overline{1})$ and $(\overline{5}) + (\overline{7}) = (\overline{2})$.
In particular, $(6)$ and $(\overline{6})$ coincide meaning that $\hat{Z} = \hat{\hat{Z}}$ without loss of generality.
Adding the pushout
\[
\begin{tikzcd}
  \hat{Z} \arrow[d, "m_{\hat{Z} \to \overline{Z}}" left] \arrow[r, "m_{\hat{Z} \to Z'}" below] \arrow[dr, phantom, "(8)" description] & Z' \arrow[d, "m_{Z' \to H'}" right] \\
  \overline{Z}' \arrow[r, "m_{\overline{Z}' \to H'}" below]  & H'
\end{tikzcd}
\]
which is also a pullback because it consists of four monomorphisms, one can regroup some of the diagrams
\vspace*{10pt}
\[
\begin{tikzcd}[row sep=large, column sep=huge]
  L \arrow[d, "f" left] \arrow[dr, phantom, "(4)" description] & K \arrow[dr, phantom, "(5)" description] \arrow[d, "f''" left] \arrow[l, "l" below] \arrow[r, "r" below] & R \arrow[d, "f'" left]
  & \overline{L} \arrow[d, "\overline{f}" left] \arrow[dr, phantom, "(\overline{4})" description] & \overline{K} \arrow[dr, phantom, "(\overline{5})" description] \arrow[d, "\overline{f}''" left] \arrow[l, "\overline{l}" below] \arrow[r, "\overline{r}" below] & \overline{R} \arrow[d, "\overline{f}'" left] \\
  \overline{Z} \arrow[d, "m_{\overline{Z} \to G'}" very near end] \arrow[dr, phantom, "(\overline{7})" description] & \hat{Z} \arrow[d, "m_{\hat{Z} \to Z'}" very near end] \arrow[l, "m_{\hat{Z} \to \overline{Z}}" below] \arrow[r, "m_{\hat{Z} \to \overline{Z}'}" below] \arrow[dr, phantom, "(8)" description] & \overline{Z}' \arrow[d, "m_{\overline{Z}' \to H'}" very near end]
  & Z \arrow[d, "m_{Z \to H}" very near end] \arrow[dr, phantom, "(7)" description] & \hat{\hat{Z}} \arrow[d, "m_{\hat{\hat{Z}} \to \overline{Z}'}" very near end] \arrow[l, "m_{\hat{\hat{Z}} \to Z}" below] \arrow[r, "m_{\hat{\hat{Z}} \to Z'}" below] \arrow[dr, phantom, "(8)" description] & Z' \arrow[d, "m_{Z' \to H'}" very near end] \\
  G' & Z' \arrow[l, "m_{Z' \to G'}" below] \arrow[r, "m_{Z' \to H}" below] & H'
  & H & \overline{Z}' \arrow[l, "m_{\overline{Z}' \to H}" below] \arrow[r, "m_{\overline{Z}' \to H'}" below] & H'
\end{tikzcd}
\]
\vspace*{10pt}

\noindent According to the composition properties of pushouts with pullbacks
$(4) + (\overline{7})$, $(5) + (8)$,
$(\overline{4}) + (7)$ and $(\overline{5}) + (8)$ are pushouts defining the direct derivations
$G' \dder_{p} H'$
and
$H \dder_{\overline{p}} H'$.
Together with the given direct derivations
$G \dder_{p} H$ and $G \dder_{\overline{p}} G'$,
they form the local Church-Rosser property of parallel independent direct derivations.
Moreover, the pushouts $(4)$ and $(5)$ are components of the pushouts $(4) + (7)$ and $(5) + (8)$ such that the direct derivation $\overline{Z} \dder_p \overline{Z}'$ given by $(4)$ and $(5)$ is a common (single-step) restriction of $G \dder_p H$ and $G' \dder_p H'$.
Using the induction hypothesis for $\hat{d} = \tail(d) = (H \dder^n_P X)$
  and $\overline{d}' = (H \dder_{\overline{p}} H')$, one gets, in particular,
  a derivation $\hat{d}'' = (\overline{Z}' \dder^n_P Z'')$
  such that
  $\hat{d}|m_{\overline{Z}' \to H} = \hat{d}'' = \hat{d}'|m_{\overline{Z}' \to H'}$
  with $\hat{d}' = \tail(d')$.
  The sequential composition $\overline{Z} \dder_p \overline{Z}' \dder^n_P Z''$
  is a common restriction of $d$ and $d'$ wrt
  $m_{\overline{Z} \to G}$ and $m_{\overline{Z} \to G'}$ respectively
  proving the first part of the lemma.

  Statement 2.
  It remains to show that there are an isomorphism $i\colon \ACC(d) \to \ACC(d')$
  and monomorphisms $m_{\ACC(d) \to \overline{Z}'} \colon \ACC(d) \to \overline{Z}'$
  and $m_{\ACC(d') \to \overline{Z}} \colon \ACC(d') \to \overline{Z}$
  with $m_{\ACC(d') \to \overline{Z}} \circ i = m_{\ACC(d) \to \overline{Z}}$.

  By induction hypothesis, there are an isomorphism
  $j\colon \ACC(\hat{d}) \to \ACC(\hat{d}')$
  and monomorphisms
  $m_{\ACC(\hat{d}) \to \overline{Z}'} \colon \ACC(\hat{d}) \to \overline{Z}'$
  and $m_{\ACC(\hat{d}') \to \overline{Z}'} \colon \ACC(\hat{d}') \to \overline{Z}'$
  such that
  \vspace*{10pt}
  \begin{align*}
    m_{\ACC(\hat{d}') \to \overline{Z}'} \circ j &= m_{\ACC(\hat{d}) \to \overline{Z}'}\\
    m_{\overline{Z}' \to H} \circ m_{\ACC(\hat{d}) \to \overline{Z}'} &= \acc(\hat{d})\\
    m_{\overline{Z}' \to H'} \circ m_{\ACC(d') \to \overline{Z}'} &= \acc(\hat{d}').
  \end{align*}\newpage
  \noindent Consider the pullbacks $(9)$ and $(10)$ put into the context of diagrams given above where the two squares in the background are the pullbacks $(7)$ and $(8)$
  \[
  \begin{tikzcd}[column sep=huge,row sep=large]
    Z \arrow[dd, "m_{Z \to H}" right] \arrow[dddr, phantom, "(9)"] &                     & \hat{Z} \arrow[ll, "m_{\hat{Z} \to Z}" below] \arrow[rr, "m_{\hat{Z} \to Z'}" below] \arrow[dd, "m_{\hat{Z} \to \overline{Z}'}" right]      &                       & Z' \arrow[dd, "m_{Z' \to H'}" left] \arrow[dddl, phantom, "(10)"] \\
      & \ACC(\hat{d}) \cap Z \arrow[ul, "m_1"] \arrow[ur, dashed, "m_5"] \arrow[dd, "m_2" near start] \arrow[rr, crossing over, bend left = 10, "\hat{i}^{-1}" near start] &                & \ACC(\hat{d}') \cap Z' \arrow[ul, dashed, "m_6" above right] \arrow[ur, "m_3"] \arrow[dd,  "m_4" near start] \arrow[ll, crossing over,  bend left = 10, "\hat{i}" near start] &    \\
    H &                     & \overline{Z}' \arrow[ll, "m_{\overline{Z}' \to H}" near end] \arrow[rr, "m_{\overline{Z} \to H'}" near end]  &                    & H' \\
    & \ACC(\hat{d}) \arrow[ul, "\acc(\hat{d})" below left] \arrow[from=uu, crossing over] \arrow[ur, "m_{\ACC(\hat{d}) \to \overline{Z}'}"] \arrow[rr, bend left = 10, "j^{-1}" above]       &                & \ACC(\hat{d}') \arrow[ul, "m_{\ACC(\hat{d}') \to \overline{Z}'}" above right] \arrow[from=uu, crossing over] \arrow[ur, "\acc(\hat{d}')" below right] \arrow[ll, bend left = 10, "j" below]    &    
  \end{tikzcd}
  \]
  The universal property of the pullbacks $(7), (8), (9)$ and $(10)$ induces the monomorphisms $m_5, m_6, \hat{i}$ and $\hat{i}^{-1}$ as mediating morphisms making the four triangles with $m_5$ and $m_6$ commutative in particular.
  The pullback property of $(9)$ and $(10)$ implies also that $\hat{i} \circ \hat{i}^{-1}$ and $\hat{i}^{-1} \circ \hat{i}$ are identities and, therefore,
  $\ACC(\hat{d}) \cap Z \iso \ACC(\hat{d}') \cap Z'$.
  Composing $m_5$ and $m_6$ with $m_{\hat{Z} \to \overline{Z}}$, one gets the commutative diagram
  \[
  \begin{tikzcd}
    & \overline{Z} & \\
    \ACC(\hat{d}) \cap Z \arrow[ur, "m_{\hat{Z} \to \overline{Z}} \circ m_5"] \arrow[rr, "\hat{i}" below] && \ACC(\hat{d}') \cap Z' \arrow[ul, "m_{\hat{Z} \to \overline{Z}} \circ m_6" above right]
  \end{tikzcd}
  \]
  Consider now the left matching morphisms
  $g$ of $G \dder_p H$ and
  $g'$ of $G' \dder_{\overline{p}} H'$.
  According to the equation $(3)$ and the definition of $g'$, one has
  $g = m_{\overline{Z} \to G} \circ f$ and
  $g' = m_{\overline{Z} \to G'} \circ f$.
  This implies
  \[
  g' = m_{\overline{Z} \to G'} \circ m_{f(L) \to \overline{Z}} \circ e_{f(L) \to \overline{Z}}
  \]
  using the epi-mono factorization $f = m_{f(L) \to \overline{Z}} \circ e_{f(L) \to \overline{Z}}$.
  As epi-mono factorizations are unique up to isomorphism, one gets the
  commutative diagram
  \[
  \begin{tikzcd}
    g(L) \arrow[dr, "m_{g(L) \to \overline{Z}}" below left] \arrow[r, phantom, "\iso" description] & f(L) \arrow[r, phantom, "\iso" description] & g'(L) \arrow[dl, "m_{g'(L) \to \overline{Z}}" below right] \\
    & \overline{Z} & 
  \end{tikzcd}
  \]
  Let $\widetilde{i}\colon g(L) \to g'(L)$ be the isomorphism and $\widetilde{i}^{-1}$ its inverse.
  Further, let $m_7 = m_{\hat{Z} \to \overline{Z}} \circ m_5$ and $m_8 = m_{\hat{Z} \to \overline{Z}} \circ m_6$.
  Then, after constructing the pullbacks of $m_{g(L) \to \overline{Z}}$ and $m_7$ and $m_{g'(L) \to \overline{Z}}$ and $m_8$, it follows, using the commutativities of the triangles and the universal property of pullbacks, that there exists an isomorphism $\widetilde{\widetilde{i}} \colon g(L) \cap \ACC(\hat{d}) \cap Z \to g'(L) \cap \ACC(\hat{d}') \cap Z'$.
  The situation is depicted in the following diagram.
  \[
  \begin{tikzcd}[column sep=small]
    & g(L) \arrow[dr, "m_{g(L) \to \overline{Z}}" below left] \arrow[rr, bend left = 10, "\widetilde{i}" below] & & g'(L) \arrow[dl, "m_{g'(L) \to \overline{Z}}" below right]  \arrow[ll, bend left = 10, "\widetilde{i}^{-1}"] \\
    g(L) \cap \ACC(\hat{d}) \cap Z  \arrow[rrrr, bend left = 50, "\widetilde{\widetilde{i}}" below] \arrow[dr, "m_{Z \cap \ACC(\hat{d})}" above right] \arrow[ur, "m_{g(L)}" below right] & & \overline{Z} &  & g'(L) \cap \ACC(\hat{d}') \cap Z' \arrow[dl, "m_{Z' \cap \ACC(\hat{d}')}" above left] \arrow[ul, "m_{g'(L)}" below left] \arrow[llll, bend left = 50, "\widetilde{\widetilde{i}}^{-1}" above]\\
    & \ACC(\hat{d}) \cap Z \arrow[ur, "m_7"]  \arrow[rr, bend left = 10, "\hat{i}" below] && \ACC(\hat{d}') \cap Z' \arrow[ul, "m_8" above right]  \arrow[ll, bend left = 10, "\hat{i}^{-1}"]
  \end{tikzcd}
  \]
  Constructing the pushouts $g(L) \cup (\ACC(\hat{d}) \cap Z)$ and $g'(L) \cup (\ACC(\hat{d}') \cap Z')$
  and using the universal property of pushouts, one gets the commutative diagram
  \[
  \begin{tikzcd}
    & \overline{Z} & \\
    \ACC(d) = g(L) \cup (\ACC(\hat{d}) \cap Z) \arrow[ur, "\acc(d)"] \arrow[rr, bend left = 10, "i" below] & & g'(L) \cup (\ACC(\hat{d}) \cap Z) = \ACC(d') \arrow[ul, "\acc(d')" above right] \arrow[ll, bend left = 10, "i^{-1}"]
  \end{tikzcd}
  \]
  where $i$ is an isomorphism and $i^{-1}$ its inverse.
  This proves the second part of the lemma.
\end{proof}

Based on this lemma we can formulate our main theorem.

\begin{thm}\label{thm:moving-preserves-spine}
Let $(P,\overline{P})$ be a parallel independent pair of sets of rules.
Let $d= (G \dder^*_P X)$ and $\overline{d} = (G\dder^*_{\overline{P}}G')$ be two derivations.
Then
\[
\spine(d) \equiv \spine(\move(d,\overline{d})).
\]
\end{thm}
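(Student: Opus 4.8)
The plan is to prove the statement by induction on the length $m$ of the derivation $\overline{d} = (G \dder^m_{\overline{P}} G')$, invoking Lemma~\ref{lemma:moving-preserves-spine-one-step} to pass over a single step of $\overline{d}$ and using that $\equiv$ is an equivalence relation to chain the resulting isomorphisms. The structural observation that makes the induction go through is that forward moving is compatible with sequential composition of the derivation being moved along: if $\overline{d}$ decomposes as $\overline{d}_1$ followed by $\overline{d}_2$, then $\move(d,\overline{d}) = \move(\move(d,\overline{d}_1),\overline{d}_2)$. This follows from the grid construction in the proof of Proposition~\ref{prop:moving}, since that grid is unique up to isomorphism irrespective of the order in which the \emph{conflux}-squares are filled in; in particular, the subgrid obtained by first moving $d$ along $\overline{d}_1$ and then moving the result along $\overline{d}_2$ coincides, up to isomorphism, with $\move(d,\overline{d})$.

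For the base case $m=0$, we have $\overline{d} = (G \dder^0_{\overline{P}} G')$ with $G = G'$, so that $\move(d,\overline{d}) = d$ and $\spine(d) \equiv \spine(\move(d,\overline{d}))$ holds trivially by reflexivity of $\equiv$.

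For the inductive step, decompose $\overline{d}$ as $G \dder_{\overline{p}_1} \overline{G}_1 \dder^m_{\overline{P}} G'$, writing $\overline{d}_1 = (G \dder_{\overline{p}_1} \overline{G}_1)$ for the first (single) step and $\overline{d}_2 = (\overline{G}_1 \dder^m_{\overline{P}} G')$ for the remainder. Set $d_1 = \move(d,\overline{d}_1)$, a $P$-derivation starting in $\overline{G}_1$. Since $\overline{d}_1$ is a single direct derivation and $(P,\overline{P})$ is parallel independent, Lemma~\ref{lemma:moving-preserves-spine-one-step} applies and yields $\spine(d) \equiv \spine(d_1)$. Now $d_1 = (\overline{G}_1 \dder^*_P X_1)$ and $\overline{d}_2 = (\overline{G}_1 \dder^m_{\overline{P}} G')$ share the start object $\overline{G}_1$, so the induction hypothesis applies to $d_1$ and $\overline{d}_2$ and gives $\spine(d_1) \equiv \spine(\move(d_1,\overline{d}_2))$. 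By the compatibility of moving with composition noted above, $\move(d_1,\overline{d}_2) = \move(\move(d,\overline{d}_1),\overline{d}_2) = \move(d,\overline{d})$. Chaining the two equivalences via transitivity of $\equiv$ therefore gives
\[
\spine(d) \equiv \spine(d_1) \equiv \spine(\move(d_1,\overline{d}_2)) = \spine(\move(d,\overline{d})),
\]
which completes the induction.

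The main obstacle I expect is the compatibility of moving with sequential composition, i.e.\ justifying $\move(d,\overline{d}) = \move(\move(d,\overline{d}_1),\overline{d}_2)$ rigorously. While intuitively clear from the picture of a rectangular grid of \emph{conflux}-squares, it rests on the uniqueness-up-to-isomorphism of that grid established at the end of the proof of Proposition~\ref{prop:moving}; everything else amounts to routine bookkeeping of isomorphisms once Lemma~\ref{lemma:moving-preserves-spine-one-step} and the equivalence-relation property of $\equiv$ are in place.
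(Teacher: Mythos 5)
Your proposal is correct and takes essentially the same route as the paper: induction on the length of $\overline{d}$, decomposition into a head step and a tail, Lemma~\ref{lemma:moving-preserves-spine-one-step} for the single step, the induction hypothesis for the tail, and the compatibility $\move(d,\overline{d}) = \move(\move(d,\head(\overline{d})),\tail(\overline{d}))$ resting on the uniqueness up to isomorphism of the grid from Proposition~\ref{prop:moving}. The only difference is one of bookkeeping: you invoke the lemma's concluding ``Moreover, $\spine(d)\equiv\spine(d')$'' clause as a black box, whereas the paper's proof of the theorem is where that equivalence actually gets derived, by chaining parts (\ref{item:moving-preserves-spine-one-step-item-one}) and (\ref{item:moving-preserves-spine-one-step-item-two}) of the lemma with Proposition~\ref{prop:restriction-of-restriction-is-restriction}.
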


\begin{proof}
The statement is proved by induction on the length $m$ of the
derivations $G \dder^*_{\overline{P}} G'$ for an arbitrary derivation $d$.

Base $m = 0$:
Then $\move(d,\overline{d}) = d$ by construction.
Therefore,
\[
\spine(d)=\spine(\move(d, \overline{d})).
\]

Step $m+ 1$:
Then $\overline{d}$ can be decomposed into
$\head(\overline{d}) = (G \dder_{\overline{p}} \overline{H})$ and
$\tail(\overline{d}) = (\overline{H} \dder^m_{\overline{P}} G')$.
By construction of moving, one gets the two derivations $\move(d, \head(\overline{d}))$ and
$\move(\move(d, \head(\overline{d})), \tail(\overline{d})) = \move(d, \overline{d})$.
As $\tail(\overline{d})$ has length $m$, the induction hypothesis is applicable such that one gets
\[
\spine(\move(\move(d, \head(\overline{d})), \tail(\overline{d}))) \equiv \spine(\move(d,\head(\overline{d}))).
\]
Let $d'=\move(d,\head(\overline{d}))$.
Using Lemma~\ref{lemma:moving-preserves-spine-one-step}(\ref{item:moving-preserves-spine-one-step-item-one}), one gets
$d''=(\overline{Z} \dder^*_P Z'')$ for some
object $Z''$ such that
\[
d|m_{\overline{Z} \to G} = d'' = d'|m_{\overline{Z} \to G'}
\]
where
$\overline{Z}$ is the intermediate object of $\head(\overline{d})$ and
$m_{\overline{Z} \to G}\colon \overline{Z} \to G$ and
$m_{\overline{Z} \to \overline{H}}\colon \overline{Z} \to \overline{H}$ are the associated monomorphisms.
Using Lemma~\ref{lemma:moving-preserves-spine-one-step}(\ref{item:moving-preserves-spine-one-step-item-two}), one gets two monomorphisms
$m_{\ACC(d) \to \overline{Z}} \colon \ACC(d) \to \overline{Z}$ and
$m_{\ACC(d') \to \overline{Z}} \colon \ACC(d') \to \overline{Z}$ such that
\begin{align*}
  m_{\overline{Z}\to G} \circ m_{\ACC(d) \to \overline{Z}} &= \acc(d) \\
  m_{\overline{Z} \to G'} \circ m_{\ACC(d') \to \overline{Z}} &= \acc(d').
\end{align*}
Therefore, the restrictions
$d''|m_{\ACC(d) \to \overline{Z}}$ and
$d''|m_{\ACC(d') \to \overline{Z}}$ are defined.
Moreover, there is an isomorphism $i \colon \ACC(d) \to \ACC(d')$ such that
\[
m_{\ACC(d') \to \overline{Z}} \circ i = m_{\ACC(d) \to \overline{Z}}.
\]
This implies
\[
d''|m_{\ACC(d) \to \overline{Z}} \equiv d''|m_{\ACC(d') \to \overline{Z}}
\]
as restrictions are uniquely determined (up to isomorphism) by the underlying monomorphism. Using Proposition~\ref{prop:restriction-of-restriction-is-restriction} and the equalities obtained above, one gets
\begin{align*}
\spine(d)
& = d|\acc(d) \\
& = d|m_{\overline{Z}\to G} \circ m_{\ACC(d) \to \overline{Z}} \\
& = (d|m_{\overline{Z} \to G})|m_{\ACC(d) \to \overline{Z}} \\
& = d''|m_{\ACC(d) \to \overline{Z}} \\
& \equiv d''|m_{\ACC(d') \to \overline{Z}} \\
& = (d'|m_{\overline{Z} \to G'})|m_{\ACC(d') \to \overline{Z}} \\
& = d'|m_{\overline{Z} \to G'} \circ m_{\ACC(d') \to \overline{Z}} \\
& = d'|\acc(d') \\
& = \spine(d') \\
& \equiv \spine(\move(d, \head(\overline{d}))) \\
& \equiv \spine(\move(\move(d, \head(\overline{d})), \tail(\overline{d})))\\
& \equiv \spine(\move(d,\overline{d})).
\end{align*}
This completes the proof.
\end{proof}

\begin{cor}\label{crl:evom-preserves-spine}
Let $(P,\overline{P})$ be a sequentially independent pair of sets of rules.
Let $\overline{d} = (G \dder^*_{\overline{P}} G')$ and $d' = (G' \dder^*_P X)$ be two derivations.
Then 
\[
\spine(d') \equiv \spine(\evom(d',\overline{d}))
\]
\end{cor}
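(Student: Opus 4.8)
The plan is to reduce the statement to the main theorem (Theorem~\ref{thm:moving-preserves-spine}) by passing to inverse derivations, exactly as backward moving was reduced to forward moving in the proof of Proposition~\ref{prop:moving2}. First I would recall that, by definition, the backward moved variant satisfies $\evom(d',\overline{d}) = \move(d',(\overline{d})^{-1})$, where $(\overline{d})^{-1} = (G' \dder^*_{(\overline{P})^{-1}} G)$ is the inverse derivation of $\overline{d}$. In particular, $(\overline{d})^{-1}$ has the same start object $G'$ as $d'$, so that the two derivations together form an instance of the forward-moving situation, with $d'$ playing the role of $d$ and $(\overline{d})^{-1}$ playing the role of $\overline{d}$ in the theorem.

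Next I would invoke the observation already made in the proof of Proposition~\ref{prop:moving2}: the sequential independence of $(P,\overline{P})$ implies the parallel independence of the pair $(P,(\overline{P})^{-1})$, because right matching morphisms of $\overline{P}$-rule applications become left matching morphisms of the corresponding inverse rule applications, turning each instance of $\interchange$ into an instance of $\conflux$. Hence the hypothesis of Theorem~\ref{thm:moving-preserves-spine} is satisfied for the parallel independent pair $(P,(\overline{P})^{-1})$ together with the two derivations $d' = (G' \dder^*_P X)$ and $(\overline{d})^{-1} = (G' \dder^*_{(\overline{P})^{-1}} G)$.

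Applying Theorem~\ref{thm:moving-preserves-spine} to this data then yields directly $\spine(d') \equiv \spine(\move(d',(\overline{d})^{-1}))$, and rewriting the right-hand side via the definition $\move(d',(\overline{d})^{-1}) = \evom(d',\overline{d})$ gives the claim. I do not expect any genuine obstacle here, since the entire content sits in the main theorem; the corollary only requires checking that the inversion $\overline{d} \mapsto (\overline{d})^{-1}$ is legitimate (every direct derivation has an inverse by the remark on inverted rules in Section~\ref{subsec:dpo}, extended stepwise to derivations) and that it carries sequential independence to parallel independence, both of which are recorded earlier. The only point worth stating carefully is that moving and spine-formation are invariant under the chosen representatives only up to isomorphism, so that the equivalence $\equiv$ produced by the theorem transports unchanged along the identification $\move(d',(\overline{d})^{-1}) = \evom(d',\overline{d})$.
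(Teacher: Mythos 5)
Your proposal is correct and matches the paper's own proof: the paper likewise reduces the corollary to Theorem~\ref{thm:moving-preserves-spine} via the identity $\evom(d',\overline{d}) = \move(d',(\overline{d})^{-1})$, relying on the fact (recorded in the proof of Proposition~\ref{prop:moving2}) that sequential independence of $(P,\overline{P})$ yields parallel independence of $(P,(\overline{P})^{-1})$. Your write-up merely makes explicit the side conditions the paper leaves implicit, so there is nothing to add.
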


\begin{proof}
  Using Theorem \ref{thm:moving-preserves-spine} and the definition of backward moving, one gets
  \[
  \spine(d') \equiv \spine(\move(d',(\overline{d})^{-1})) \equiv \spine(\evom(d',\overline{d})). \qedhere
  \]
\end{proof}

We would like to point out that our running example is chosen purposefully as a typical instance of a situation we found quite often when proving the correctness of reductions between NP-problems within a graph-transformational framework.
This applies – among others – to the well-known reductions of \emph{clique} to \emph{independent set} to \emph{vertex cover}, of \emph{long paths} to \emph{Hamiltonian paths} to \emph{Hamiltonian cycles} to \emph{traveling salesperson}. 
For each of these reductions, one can find graph-transformational models of the two involved NP-problems and the reduction such that the derivations of the source problem can be moved along an initial part of the reduction derivations and the derivations of the target problem can be moved along the remainder part of the reduction derivations.
Then the preservation of the spines is an important part of the correctness proofs.

\section{Conclusion}
\label{sec:conclusion}
In this paper, we have investigated the relationship between two elementary operations on derivations:
moving a derivation along a derivation based on parallel and sequential independence on one hand and restriction of a derivation  with respect to a monomorphism into the start object on the other hand.
As main result, we have shown that moving a derivation preserves its spine being the minimal restriction.
This stimulates further research in at least four directions:

\begin{itemize}
\item
  Sequentialization, parallelization, and shift are three further well-known operations on parallel derivations that are closely related to parallel and sequential independence (cf. \cite{Kreowski:78}).
  Therefore, one may wonder how they behave with respect to the spine.
  Moreover, one may consider the relation between amalgamation and the generalized confluence of concurrent rule applications and spines.
\item
  In~\cite{Kreowski-Kuske-Lye-Windhorst:22}, we have exploited the moving of derivations along derivations and the preservation of spines to prove the correctness of reductions between NP-problems.
  As such reductions are special kinds of model transformations, it may be of interest to explore whether moving and its properties can be used for proving correctness of other model transformations.
  This may be particularly interesting for adhesive model categories where the models are not just graphs.
\item
  We have considered the construction and properties of accessed parts and restrictions of derivations only as far as needed to prove the preservation of spines by the moving of derivations.
  But further investigations may be of interest.
  For example, if a derivation $d'$ is a restriction of a derivation $d$, then $d$ may be considered as extension of $d'$.
  In Kreowski~\cite{Kreowski:78,Kreowski:79}, extension of a graph-transformational derivation is defined independently of restriction, and then it is shown that both operation are inverse to each other.
  So one may wonder whether this can also be done in an adhesive-category context.
  Another question is: How are the accessed parts of a $P$-derivation and a $\overline{P}$-derivation starting in the same object related if the pair of sets of rules $(P,\overline{P})$ is parallel independent?

\item
  In Ehrig et al.~\cite{Ehrig-Ehrig-Prange.ea:06,Ehrig-Ermel-Golas-Hermann:15}, the double-pushout approach is discussed and
investigated in the framework of adhesive HLR categories that is more general
than the framework of adhesive categories. The main difference is that some of
the properties and concepts concern a subclass $\mathcal{M}$ of the class of
monomorphisms with appropriate closure properties. We expect that the
considerations in the present paper work in the more general setting, too.
It may be interesting to look into the technicalities.
\end{itemize}

\noindent
\textbf{Acknowledgement.}
We are grateful to the anonymous reviewers for their detailed comments that led to various improvements.

\bibliographystyle{alphaurl}
\bibliography{main}

\end{document}